\newtheorem*{theorem-non}{Theorem}
\newtheorem{theorem}{Theorem}
\newtheorem{definition}{Definition}
\newtheorem{proposition}{Proposition}
\newtheorem{lemma}{Lemma}
\newtheorem{corollary}{Corollary}
\newtheorem{remark}{Remark}
\providecommand{\keywords}[1]
{
  	
  \textbf{\textit{Keywords ---}} #1
}
\def \E{\mathbb{E}}
\def \P{\mathbb{P}}
 \def \Q{\mathbb{Q}}
\title{Predictable Relative Forward Performance Processes: Multi-Agent and Mean Field Games for \\Portfolio Management}
\author{
Gechun Liang\thanks{Department of Statistics, University of Warwick. Email: \href{mailto:g.liang@warwick.ac.uk}{g.liang@warwick.ac.uk}} 
\and Moris S. Strub\thanks{Warwick Business School, University of Warwick. Email: \href{mailto:moris.strub@wbs.ac.uk}{moris.strub@wbs.ac.uk}} 
\and Yuwei Wang\thanks{
Corresponding author. School of Mathematics, Shanghai University of Finance and Economics. 
Email: \href{mailto:yuweiwang@cuhk.edu.hk}{wangyuwei@mail.shufe.edu.cn}}
}
\date{}
\begin{document}

\maketitle

\begin{abstract}
We introduce predictable relative forward performance processes (PRFPP) as a new framework for studying portfolio management within a competitive and incomplete market environment. 
Each agent trades a distinct stock following a binomial distribution with probabilities for a positive return depending on the market regime characterized by a non-traded stochastic factor.
% These processes serve as a key metric for agents to assess their wealth relative to the performance of other agents in the market.
For both the finite population and mean field games, we construct and analyse PRFPPs for initial data of the CARA class along with the associated equilibrium strategies. 
We find that relative performance concerns do not necessarily lead to more investment in the risky asset compared to when there are no such concerns. Under some parameter constellations, agents short a stock with positive expected excess return.
The binomial market setting facilitates a straightforward adjustment of risky asset skewness, enabling an analysis of its impact on investment behavior—an aspect that continuous-time frameworks cannot capture.
\end{abstract}  

\keywords{forward performance processes; relative performance; mean field game; binomial tree model; portfolio selection}

\section{Introduction}\label{sec:Introduction}
The appraisal of relative investment performance has long been an important criterion in asset management. Competition with strong incentives to outperform each other, as was suggested in \cite{brown1996tournaments}, \cite{chevalier1997risk}, \cite{croson2005managerial}, and \cite{evans2020competition} among others, can be the consequence of higher agency fee and executive pay, career concerns, the intention to attract and retain clients, or performance-based compensation schemes. 
The literature on relative performance criteria can be traced back to \cite{holmstrom1982moral}, where the notion of relative performance evaluation was first put forward as a criterion for evaluating managerial performance in a multi-agent environment. 
% Other closely related studies on appraising investment performance or ability against peers in financial markets include, for example, \cite{aggarwal1999executive}, \cite{kapur2005relative}, \cite{palomino2005relative}, \cite{maug2011herding}, \cite{espinosa2015optimal}, \cite{lacker2019mean}, \cite{fu2020mean}, \cite{lacker2020many}, \cite{anthropelos2022competition}, and \cite{dos2021forward, dos2022forward}.    

Our objective is to study relative performance concerns in the framework of discrete-time predictable forward performance processes \cite{angoshtari2020predictable}.
We propose a new notion of \textit{predictable relative forward performance processes} (PRFPPs)
and study these processes in games with finite and infinite populations. 
Our work is closely related to the recent work of \cite{anthropelos2022competition}, who introduced the notion of forward relative performance criteria and studied optimal investment under competition between two agents in an Itô-diffusion setting. Further extensions along this line of research can be found in  \cite{dos2021forward}, and \cite{dos2022forward}.  

Recall that a classical, backward portfolio selection problem requires three ingredients: an investment horizon, a mathematical model for the market, and a performance criterion applied at the horizon, usually in the form of a utility function. 
In this setting, related work on relative performance concerns in continuous-time are \cite{FreiDosReis11:MFE}, \cite{basak2014strategic}, \cite{basak2015competition}, \cite{espinosa2015optimal}, and \cite{lacker2019mean}.
In contrast to the backward setting, where the investment horizon, market model, and terminal utility function are specified ex ante, forward processes are constructed from an initial datum that represents current preferences rather than preferences. Preferences are then updated over time under the guidance of the martingale optimality principle. This flexibility allows preferences and market models to update as new information arrives. Moreover, in discrete time, forward performance processes are predictable and provide a transparent and intuitive link between preferences at consecutive time points. They ensure time-consistent decision making as market parameters, preferences, and investment strategies evolve jointly over time.
There is a rich literature on forward performance processes, initially introduced by \cite{musiela2006investments, musiela2008optimal, musiela2009portfolio,musiela2010portfolio}, whose work serves as a seminal cornerstone for this notion. Relevant studies include \cite{henderson2007horizon} and \cite{choulli2007minimal}, which explore various forms of these processes. 
The subsequent works primarily focused on probabilistic representations and characterizations for forward performance processes, as in \cite{musiela2010stochastic}, \cite{el2014, el2018consistent,el2021}, \cite{mohamed2013exact, el2021recover}, \cite{nadtochiy2017optimal} and \cite{Avanesyan2020}, among many others. 
Forward performance processes have been extended to various domains.
We refer to \cite{he2021forward} for situations involving probability distortion, to \cite{kallblad2018dynamically} and \cite{chong2024optimal} for investment under model uncertainty, to \cite{källblad2020black} for optimal consumption, to \cite{shkolnikov2016asymptotic}, \cite{geng2017temporal} and \cite{chong2019ergodic} for an asymptotic analysis of forward performance processes, to \cite{choulli2017explicit}, \cite{nadtochiy2014class} and \cite{liang2017representation} for homothetic forward performance processes, to \cite{hu2020systems} for regime-switching models, and to \cite{Bo2022} for general semmimartinale models.
A connection between the forward
performance process and optimal portfolios has been explored in \cite{mohamed2013exact}. \cite{vzitkovic2009dual} provides a convex dual characterization of forward performance processes for CARA preferences, extended by \cite{anthropelos2014forward} under contingent claims.

More recent is the notion of the discrete-time, \textit{predictable} version of forward processes introduced in \cite{angoshtari2020predictable} and further studied in \cite{strub2021evolution}, \cite{liang2023predictable}, \cite{angoshtari2022predictable}, \cite{Angoshtari2024} and \cite{waldon2024forward}. 
Our goal is to examine the effect of competition on investment policies for this framework. 
To that end, we consider an economy featuring imperfectly correlated risky assets where each agent specializes in a distinct stock. This scenario is commonly known as \textit{asset specialization} and aligns with substantial empirical evidence indicating that investors tend to engage primarily in familiar investment opportunities, cf. \cite{merton1987simple}, \cite{coval1999home}, \cite{kacperczyk2005industry}, or \cite{ferreira2013determinants}.

We develop the notion of a forward Nash equilibrium for finitely many competitors and a mean field equilibrium for infinitely many competitors in a discrete-time setting.
This concept builds on \cite{anthropelos2022competition}, who study competition between two agents under relative forward preferences in a continuous-time setting.
The interaction among agents in our analysis involves not only the controls employed by the agents, but also the wealth and forward performance processes of agents. 
Our framework incorporates a non-traded stochastic factor, which is mathematically modeled as a Bernoulli random variable representing two market regimes. 
This noise component plays an important role within our analysis, differentiating between a bull and bear market regime, inducing correlation into the stocks traded by different agents, and introducing an extra layer of market incompleteness to the predictable forward framework. 

We solve the $N$-agent game via a two-step procedure: In the initial step, we determine the so-called ``best response" with respect to the investment policies employed by other agents. 
Subsequently, we proceed to search for a fixed point of the whole system, such that the candidate investment strategy is indeed a forward Nash equilibrium. 
For the purpose of tractability, we work with initial data belonging to the family of exponential utility functions throughout this paper.
Under this assumption, we establish the existence of a forward Nash equilibrium. 
Furthermore, we provide an explicit expression for the forward equilibrium for two special cases: The case where the population is homogeneous and the case where there are two heterogeneous agents. These results offer valuable insights into the characteristics of investment behaviour under relative forward performance concerns. 
For instance, the risky investment of a given agent having long position increases as the overall group becomes less risk-averse in a homogeneous multi-agent game, or as the quality of the stock held by her competitor improves in the heterogeneous 2-agent game. 
A further insight of our paper is that it can be optimal for an agent to short sell a stock market with positive expected excess return when there are relative performance concerns.
This at first counter-intuitive behavior occurs when the stocks traded by two agents are correlated, but one exhibits a positive and the other a negative expected excess return.

For the mean field game framework, we follow the standard paradigm of considering a representative agent with stochastic type vector.
As a key technical contribution, we derive a representation of the average population wealth as a conditional expectation of the representative agent's wealth under any admissible strategy.
We then construct a PRFPP and forward mean field equilibrium for the representative agent, and establish its uniqueness within the exponential family.
%agent group having exponential utility functions.  
The key step is to consider a fixed point problem for which we prove uniqueness and existence of the solution.
We find that the equilibrium investment strategy can be decomposed into a classical Merton component and an additional component driven by relative performance concerns. The latter increases with the competition weight and vanishes when competition is absent. Moreover, the competition-induced component is always positive, implying that relative performance concerns encourage higher risky investment among agents with CARA preferences, in line with the findings of \cite{lacker2019mean}.
We furthermore establish a connection between the multi-agent game and the mean field game. Our results show that an agent with relative performance concerns decreases her stock holdings when competitors are more risk-averse, less competitive, or when the bear market is more likely to occur. In our framework under the assumption that the market offers  positive expected excess returns, and consistent with the existing literature, competition serves as a motivator that compels agents to augment their investments in the risky asset.

Additionally, we investigate two specific cases of important market setups that yield more tractable and readily interpretable solutions. In the first scenario, all decision makers trade the same risky asset. 
In the second scenario, we investigate a situation where non-traded stochastic factor, which influences the return distribution among stocks, is absent.
Lastly, we conduct a comprehensive numerical analysis to investigate the impact of investment characteristics, market features and asset correlations on investment policies, respectively. Consistent with \cite{lacker2019mean}, the equilibrium policy increases as other agents become less risk averse or more competitive on average, or as the average quality of other stocks improves. The effect of market skewness, however, is specific to our setting and has no counterpart in previous literature.

Prior research on relative forward performance processes focused on continuous-time, Itô-diffusion settings, with constant market parameters. 
Herein, we allow for dynamic updating of random parameters, which enables the model to be adjusted on a period-by-period basis as the market evolves. 
Considering performance evaluation and trading at discrete times is of importance for investment practice. 
The predictability inherent in the discrete-time formulation facilitates a more intuitive evolution of the utility functions at two consecutive time points. 
% Consequently, the preferences are suitable for agents who make investment decisions based on newly-updated assessments of the forthcoming market dynamics. 

Our main contribution is an iterative scheme to construct the PRFPPs in a competitive environment.
We then study the implications of competition on portfolio selection both analytically and numerically and observe that the equilibrium strategy consistently shifts in the same direction when a specific idiosyncratic parameter or its corresponding population average increases.
Another advantage of our setup is that one can readily manipulate return skewness of a binomial stock and analyze its effect on optimal strategies and preferences. 
% From an economic perspective, our research contributes by
% enabling an investigation into the implications of lottery-type stocks, featuring a significant likelihood of a minor losses and a minimal chance of substantial profits. 
We find that investors exhibit aversion to both negative and positive skewness.

Mean-field games have been introduced by \cite{huang2006large} and \cite{lasry2007mean} to model interactions among a large number of agents via their empirical state distribution. 
To date, the majority of the existing literature focused on continuous-time mean field games with continuous state spaces, where the optimal solution to the mean field problem can be characterized by Hamilton–Jacobi-Bellman equations coupled with transport equations. 
In contrast, there exists relatively few literature in the realm of discrete-time mean field games on discrete state space, for example,
\cite{gomes2010discrete}, \cite{huang2012mean}, \cite{adlakha2015equilibria}, \cite{doncel2019discrete}, 
\cite{guo2019learning}, \cite{campi2022correlated}, \cite{guo2022mf} and \cite{bonesini2024correlated}.
This paper contributes to this emerging literature on discrete-time, discrete state space mean field games and seems to be the first application of discrete mean field game theory to portfolio optimization. 
% {\ms 1) How is the evolution more intuitive than in the continuous-time setting? Both belong to the exponential family and remain within the exponential family. 
% %2) Does the continuous-time literature not also establish existence and uniqueness? 
% %I deleted the corresponding sentence.
% 3) Expand on what we learn from analysing varying skeweness.}
% {\color{blue}Revised, please check.}

The organization of the paper is as follows. Section \ref{sec:ModelSetup} presents the market model and introduces the definition of discrete-time PRFPP for both the finite agent and mean field game. We construct finite agent PRFPPs and characterize the forward Nash equilibrium in Section \ref{sec:NAgentGame} and the mean field counterparts in Section \ref{sec:MFG}. 
In the Appendix, we present a discussion of two extreme cases: The case where the stocks traded by all agents are identical almost surely and the case where there is no non-traded stochastic factor.
For the mean field game, we investigate the dependence of investment policy on individual, network, and market parameters by examining some numerical examples in Section \ref{sec:NumericalAnalysis}. 
Section \ref{sec:Conclusion} concludes. 
All the proofs are delegated to the Appendix.  

\section{Model and definition}\label{sec:ModelSetup}
We introduce and formalize first the game of $N$ agents and then the  mean field game as the number of agents goes to infinity.
Our model builds on \cite{anthropelos2022competition} and \cite{dos2021forward,dos2022forward}, who study relative forward performance criteria in continuous-time with stock price driven by a geometric Brownian motion. 
Different from these existing studies, we herein consider a model of \textit{discrete-time} predictable relative forward performance criteria.
A further distinguishing feature of this model is that it allows for dynamic updating of the market parameters.

Throughout the paper, $\mathbb{N}$ denotes the set of positive integers and $\mathbb{N}_0$ the set of nonnegative integers. 
Trading and performance evaluation times coincide and are denoted by $t \in \mathbb{N}_0$. 
We refer to \cite{liang2023predictable} for the case where trading occurs at a higher frequency than performance evaluation in a model of discrete-time predictable performance criteria without competition. 

\subsection{The finite agent game}
Let $(\Omega, \mathcal{F}, \P)$ be a probability space.
In the finite agent game, there is a population of $N$ agents trading a common risk-free bond and a distinct stock.
The common risk-free bond does not offer any interest.
Let $R_{(i),t}$ denote the total return over period $[t-1, t)$ of the stock of agent $i$. 
We suppose that
$R_{(i),t}$ is described by a binomial model
\begin{align*}
    R_{(i),t} = u_{(i),t} B_{(i),t} + d_{(i),t} (1-B_{(i),t}), \quad t \in \mathbb{N},
\end{align*}
where $B_{(i),t}$ takes values in $\{0,1\}$ for all $t \in \mathbb{N}$, i.e., $(B_{(i),t})_{t \in \mathbb{N}}$ is a sequence of Bernoulli random variables with transition probabilities $(p_{(i),t})_{t \in \mathbb{N}}$.  
%$R_{(i),t}$ takes on two possible values, $u_{(i),t}$ with probability $p_{(i),t}$ and $d_{(i),t}$ with probability $1-p_{(i),t}$. 
% and $\mathbb{F} = \left(\mathcal{F}_t\right)_{t \in \mathbb{N}_0}$ its natural filtration.
%with probabilities $p_{(i),t} := \P [R_{(i),t} = u_{(i),t} \vert \mathcal{F}_{t-1}]$ and $1- p_{(i),t} = \P [R_{(i),t} = d_{(i),t} \vert \mathcal{F}_{t-1}]$ respectively. 
The market parameters $(d_{(i),t})_{t \in \mathbb{N}}$, $(u_{(i),t})_{t \in \mathbb{N}}$, and $(p_{(i),t})_{t \in \mathbb{N}}$ are allowed to be stochastic processes.
In order to assure absence of arbitrage, we assume that $0 < d_{(i),t}  < 1 < u_{(i),t}$ and that $0 < p_{(i),t} < 1$ for any $t \in \mathbb{N}$ and $i \in \{1, \dots, N\}$.
% {\ms We need to be careful when refering to a unique conditional risk-neutral measure. First, there is the issue of the filtration which we have not yet specified. Second, we only have sequential completeness as in our other paper. Third, the model is only complete because each agent can trade a single stock. But, across all stocks, the model is not complete. }
% {\color{OliveGreen}Please find above revised without the need for discussing the incompleteness, similarly as Thaleia did in her paper \cite{musiela2016forward} on page 6.}
To explicitly model the correlation structure among different agents, we assume that the probability space $(\Omega,\mathcal{F},\P)$ supports another stochastic processes $(\xi_t^{cn})_{t\in \mathbb{N}}$ which is not tradeable by any agent. 
Each $\xi_t^{cn}$ represents a non-traded stochastic factor for the trading period $[t-1,t)$ and follows a Bernoulli distribution with transition probability $p_t^{cn}$, $(p_t^{cn})_{t\in \mathbb{N}}$ are allowed to be stochastic. 
The event $\{ \xi_t^{cn} = 1 \}$ represents a bull market regime and $\{ \xi_t^{cn} = 0 \}$ represents a bear market regime for the period $[t-1,t)$. 
% {\ms In the classical binomial model market parameters are constant. The above statement needs to be refined and placed later. }
% {\color{blue}Yuwei: Many thanks for pointing out this key point, I revised accordingly, and please could I ask where would be a better place to place the above discussion?}

We define the filtration $\mathbb{F} = (\mathcal{F}_t)_{t \in \mathbb{N}_0}$ by letting $\mathcal{F}_t$ be the augmented $\sigma$-algebra generated by $(\xi_j^{cn})_{j=1}^t$,  $(p_j^{cn})_{j=1}^{t+1}$, $(B_{(i),j})_{j=1}^t$, and the market parameters $(u_{(i),j}, d_{(i),j}, p_{(i),j})_{j=1}^{t+1}$ of all the agents.
We in particular have $\P[B_{(i),t}=1\vert \mathcal{F}_{t-1}]=p_{(i),t}$ and $\P[\xi_t^{cn}=1\vert \mathcal{F}_{t-1}]=p^{cn}_t$.
%{\ms Why is the filtration generated by $(p^{(i),0}_j$, $p^{(i),1}_j)_{j=1}^{t+1}$ instead of just $(p_{(i),j})_{j=1}^{t+1}$? The conditional probabilities are only defined later and depend on $\mathcal{F}_{t}$. }
Note that $u_{(i),t}$, $d_{(i),t}$, $p_{(i),t}$ and $p^{cn}_{t}$ are predictable, i.e., all market parameters are known at the beginning of each trading period $[t-1, t)$. 
However, the outcome of the non-traded stochastic factor and stock returns are only known at the end of each trading period. 
%{\color{OliveGreen}We then assign the single-period joint probabilities $\P[R_{(i),t} = u_{(i),t}, \xi_t^{cn}=1\vert \mathcal{F}_{t-1}]$ and $\P[R_{(i),t} = u_{(i),t}, \xi_t^{cn}=1\vert \mathcal{F}_{t-1}]$ and assume that they are $\mathcal{F}_{t-1}$-measurable.}

%$p_t^{cn}:=\P\left[ \xi_t^{cn}=1\vert \mathcal{F}_{t-1}\right]$ and $1-p_t^{cn}=\P\left[ \xi_t^{cn}=0\vert \mathcal{F}_{t-1}\right]$. %with price levels $\xi_t^{cn}^u$, $\xi_t^{cn}^d$ and transition probability $p_{cn}$,  
{Let $\sigma(\xi_t^{cn})$ denote the $\sigma$-algebra generated by the random variable $\xi_t^{cn}$.
We introduce the conditional probabilities for an upward move of the $i$th stock in the bull or bear market, defined by $p_{(i),t}^1=\frac{\P[\xi_t^{cn}=1, B_{(i),t}=1\vert \mathcal{F}_{t-1}]}{\P[\xi_t^{cn}=1\vert \mathcal{F}_{t-1}]}$ and $p_{(i),t}^0=\frac{\P[\xi_t^{cn}=0, B_{(i),t}=1\vert \mathcal{F}_{t-1}]}{\P[\xi_t^{cn}=0\vert \mathcal{F}_{t-1}]}$. 
One can easily deduce that 
$$\P[B_{(i),t}=1\big\vert \mathcal{F}_{t-1}\vee \sigma(\xi_t^{cn})]=p_{(i),t}^1\mathbbm{1}_{\{\xi_t^{cn}=1\}}+p_{(i),t}^0\mathbbm{1}_{\{\xi_t^{cn}=0\}}.$$
We suppose that $p_{(i),t}^1>p_{(i),t}^0$ almost surely for $i = 1, \dots,N$, that is, all stocks are more likely to perform well in a bull market than in a bear market regime. 
%{\ms $p_{(i),t}^1$ and $p_{(i),t}^0$ are stochastic. Not sure whether this holds.}
While probabilities of an upward move of stocks depend on the non-traded stochastic factor, price levels are assumed to be independent thereof. 
We further assume that $B_{(i),t}$ and $B_{(\ell),t}$ are conditionally independent given $\mathcal{F}_{t-1}\vee\sigma(\xi^{cn}_t)$ for any $i \neq \ell$.
That is, the dependence structure between two stocks traded by different agents is completely characterized by the non-traded stochastic factor. 

We remark that the financial market is incomplete due to two factors. First, as discussed in \cite{angoshtari2022predictable} and \cite{liang2023predictable}, while the binomial model for the financial market is sequentially complete under the assumption of full knowledge of model parameters at the beginning of each period [t−1,t), it may lack completeness across multiple periods if the parameters for later periods are determined in part through some exogenous random noise.
Second, incompleteness also stems from the presence of the non-traded stochastic factor \((\xi_t^{cn})_{t\in \mathbb{N}}\), which affects the transition probabilities of stock returns. Thus, the model presented here extends beyond the classical binomial framework.

Trading strategies of the $i$th agent are required to be predictable 
processes
$\pi_{(i)} = \left( \pi_{(i),t} \right)_{t \in \mathbb{N}}$, where $\pi_{(i),t}$ denotes the dollar amount invested by the $i$th agent in the risky asset over trading period $[t-1, t)$. 
A portfolio is constructed by following the trading strategy on the stock while investing all the remaining wealth in the risk-free bond. 
Given a self-financing trading strategy $\pi_{(i)}$, the wealth process $X_{(i)}^{\pi} = \left(X_{(i),t}^{\pi}  \right)_{t \in \mathbb{N}}$ evolves according to $X_{(i),t}^{\pi} = x_{(i)} + \sum\nolimits_{j=1}^t\pi_{(i),j}(R_{(i),j}-1)$ with initial endowment $X_{(i),0}=x_{(i)} \in \mathbb{R}$. 
All that is required for a trading strategy $\pi_{(i)}$ and the associated wealth process $X_{(i)}^{\pi}$ to be  \textit{admissible} is that each term $\pi_{(i),t}$ is real-valued and $\mathcal{F}_{t-1}$-measurable. 
In particular, we do not impose any trading constraints and allow for negative wealth levels.
We remark that trading strategies are allowed to depend on the wealth of all agents, including their initial wealth levels. 
We usually do not make this dependence explicit and simply write $\pi_{(i),t}$ instead of $\pi_{(i),t} (X_{(1),t-1}, \dots, X_{(N),t-1})$.
The set of admissible trading strategies starting at time $t$ is denoted by $\mathcal{A}_{(i)}(t)$, i.e., %{\ms Please remove dependence of $ \mathcal{A}^{(m)}(t,x^{(m)})$ on $x^{(m)}$ throughout the paper.}
\begin{align*}
    \mathcal{A}_{(i)}(t) = \left\{\left( \pi_{(i),k} \right)_{k > t}: \pi_{(i),k} \in \mathbb{R}\ {\rm{is}}\  \mathcal{F}_{k-1}{\rm{-measurable}}\right\}
\end{align*}
\noindent
and the associated wealth processes starting from $X_{(i),t}^{{\pi}} = x_{(i)}$ by $\mathcal{X}_{(i)}(t,x_{(i)})$.  
We abbreviate $\mathcal{A}_{(i)}(0)$ and $\mathcal{X}_{(i)}(0,x_{(i)})$ by $\mathcal{A}_{(i)}$ and $\mathcal{X}_{(i)}(x_{(i)})$, respectively. 
To simplify notation, we often drop the explicit dependence of a wealth process on the trading strategy and write $X_{(i)} \in \mathcal{X}_{(i)} (t,x_{(i)})$.

A key feature of our model is that each agent is interested in the performance of her \textit{relative wealth} compared to the wealth levels of all agents.
The relative wealth is denoted by $\widetilde{X}_{(i)} = \left(\widetilde{X}_{(i),t}  \right)_{t \in \mathbb{N}}$ and is given by
\begin{align*}
    \widetilde{X}_{(i),t}=X_{(i),t}-\frac{\theta_{(i)}}{N}\sum_{k=1}^NX_{(k),t}=(1-\frac{\theta_{(i)}}{N})X_{(i),t}-\frac{\theta_{(i)}}{N}\sum_{k\neq i}X_{(k),t} .
\end{align*}
The parameter $\theta_{(i)}\in[0,1]$ is referred to as the \textit{competition weight} of the $i$th agent.
This parameter represents her concern for relative performance.
The dynamics of the relative wealth depend on the strategies of other agents and is given by
\begin{align}\label{RelativeWealthFormula}
    \widetilde{X}_{(i),t}=\widetilde{X}_{(i),0} + \left(1-\frac{\theta_{(i)}}{N}\right) \sum\limits_{n=1}^t\pi_{(i),n}(R_{(i),n}-1)-\frac{\theta_{(i)}}{N}\sum_{k\neq i}\sum\limits_{n=1}^t\pi_{(k),n}(R_{(k),n}-1). 
\end{align}
We denote the set of relative wealth processes resulting from admissible trading strategies and starting at $t \in \mathbb{N}_0$ from relative wealth $\widetilde{X}_{(i),t} = \widetilde{x}_{(i)}$ by $\widetilde{\mathcal{X}}_{(i)}(t,\widetilde{x}_{(i)})$  and abbreviate $\widetilde{\mathcal{X}}_{(i)}(0,\widetilde{x}_{(i)})$ by $\widetilde{\mathcal{X}}_{(i)}(\widetilde{x}_{(i)})$.

\medskip

%Let $\mathcal{U}:= \{U: \mathbb{R} \rightarrow \mathbb{R} \}$ denote the class of utility functions.
The $i$th agent starts at time $t_0=0$ with preferences over her relative wealth represented by a utility function $U_{(i),0}$.
We herein assume that any utility function $U: \mathbb{R} \rightarrow \mathbb{R}$ is twice continuously differentiable, strictly increasing, strictly concave, and satisfies the Inada conditions. 
While we will provide a definition for general utility functions, our main results will make the additional assumption that the initial data belong to the family of CARA utility functions. 

For each agent $i \in \{1,2,\dots,N\}$, we first define a discrete-time predictable relative forward performance process under a best-response principle. The policies of the other agents are exogenously given and progressively revealed over time. Agent $i$ is assumed to compete passively: she observes the opponents' admissible policy processes and incorporates them into her optimization problem, but does not interact strategically with them.

\begin{definition}\label{def:RelativeFrowardPreferences-NQ-BR}
Let policy $\pi_{(k)} \in \mathcal{A}_{(k)}$ be given, $k=1,2,\dots N, k\neq i$. A family of random functions $\left\lbrace U_{(i),t}: \mathbb{R} \times \Omega \rightarrow \mathbb{R} \vert i=1,2,\dots,N, t \in \mathbb{N}_0 \right\rbrace$ is called a discrete-time predictable relative forward performance process (PRFPP) for agent $i$ under the best-response strategy
if the following conditions hold:
\begin{enumerate}
\item[(i)] $U_{(i),0}\left(x,\cdot\right)$ is constant and $U_{(i),t}\left(x,\cdot \right)$ is $\mathcal{F}_{t-1}$-measurable for each $x\in\mathbb{R}$, and $t \in \mathbb{N}$. 
\item[(ii)] $U_{(i),t} (\cdot ,\omega)$ is a utility function for almost all $\omega \in \Omega$, and all $t \in \mathbb{N}_0$.
\item[(iii)] For any initial wealth profile $(x_1, \dots, x_N)$ and any $\pi_{(i)} \in \mathcal{A}_{(i)}$, the resulting relative wealth process $\widetilde{X}_{(i)} \in \widetilde{\mathcal{X}}_{(i)}(\widetilde{x}_{(i)})$ satisfies
\begin{align*}
U_{(i),t-1} \left( \widetilde{X}_{(i),t-1} \right) \geqslant \E \left[ U_{(i),t} \left( \widetilde{X}_{(i),t} \right) \big\vert \mathcal{F}_{t-1} \right], \quad t \in \mathbb{N}.
\end{align*}
\item[(iv)]  For any initial wealth profile $(x_1, \dots, x_N)$, there exists $\pi_{(i)}^{*} \in \mathcal{A}_{(i)}$
, such that the resulting relative wealth process $\widetilde{X}_{(i)}^{*}$ satisfies
\begin{align*}
U_{(i),t-1}  \left( \widetilde{X}^{*}_{(i),t-1} \right) = \E \left[ U_{(i),t} \left( \widetilde{X}^{*}_{(i),t} \right) \big\vert \mathcal{F}_{t-1} \right], \quad t \in \mathbb{N}.
\end{align*}
\end{enumerate}
\end{definition}

However, agents in our model dynamically compete with each other when deriving their optimal portfolios. 
Conceptually, we thus consider a pure strategy Nash game and, following \cite{anthropelos2022competition}, seek to determine the definition of discrete-time PRFPPs under a \textit{forward Nash equilibrium}.

\begin{definition}\label{def:RelativeFrowardPreferences-NQ}
A family of random functions $\left\lbrace U_{(i),t}: \mathbb{R} \times \Omega \rightarrow \mathbb{R} \vert i=1,2,\dots,N, t \in \mathbb{N}_0 \right\rbrace$ together with a policy vector $(\pi^{*}_{(1)}, \pi^{*}_{(2)}, \dots, \pi^{*}_{(N)})$, $\pi^{*}_{(i)} \in \mathcal{A}_{(i)}$, $i=1,2,\dots,N$, is called a discrete-time predictable relative forward performance process (PRFPP) under a forward Nash equilibrium if the following conditions hold:
\begin{enumerate}
\item[(i)] $U_{(i),0}\left(x,\cdot\right)$ is constant and $U_{(i),t}\left(x,\cdot \right)$ is $\mathcal{F}_{t-1}$-measurable for each $x\in\mathbb{R}$, $i=1,2,\dots,N$, and $t \in \mathbb{N}$. 
\item[(ii)] $U_{(i),t} (\cdot ,\omega)$ is a utility function for almost all $\omega \in \Omega$, $i=1,2,\dots,N$, and all $t \in \mathbb{N}_0$.
\item[(iii)] For any initial wealth profile $(x_1, \dots, x_N)$, any $i \in \{1,2,\dots,N \}$, and any $\pi_{(i)} \in \mathcal{A}_{(i)}$, under the policy vector $(\pi^{*}_{(1)},\dots, \pi_{(i)}, \dots, \pi^{*}_{(N)})$, the relative wealth process $\widetilde{X}_{(i)} \in \widetilde{\mathcal{X}}_{(i)}(\widetilde{x}_{(i)})$ satisfies
\begin{align*}
U_{t-1}^{(i)} \left( \widetilde{X}_{(i),t-1} \right) \geqslant \E \left[ U_{(i),t} \left( \widetilde{X}_{(i),t} \right) \big\vert \mathcal{F}_{t-1} \right], \quad t \in \mathbb{N}.
\end{align*}
\item[(iv)]  For any initial wealth profile $(x_1, \dots, x_N)$ and any $i \in \{1,2,\dots,N \}$, under the policy vector $(\pi^{*}_{(1)}, \dots, \pi^{*}_{(i)}, \dots, \pi^{*}_{(N)})$, the relative wealth process $\widetilde{X}^{*}_{(i)}$ satisfies
\begin{align*}
U_{(i),t-1}  \left( \widetilde{X}^{*}_{(i),t-1} \right) = \E \left[ U_{(i),t} \left( \widetilde{X}^{*}_{(i),t} \right) \big\vert \mathcal{F}_{t-1} \right], \quad t \in \mathbb{N}.
\end{align*}
\end{enumerate}
\end{definition}

Given wealth $x$ at time $t$, we introduce the set of admissible strategies for single period $[t,t+1)$ by
$\mathcal{A}^{(i)}_{t,t+1}(x)$, the corresponding set of wealth and relative wealth at time $t+1$ by $\mathcal{X}^{(i)}_{t,t+1}(x)$ and $\widetilde{\mathcal{X}}^{(i)}_{t,t+1}(x)$.
Properties $(iii)$ and $(iv)$ in Definition \ref{def:RelativeFrowardPreferences-NQ} imply that
\begin{align}\label{eq:InverseEqn_Dynamic}
U_{(i),t-1}  \left( \widetilde{X}^{*}_{(i),t-1} \right) = \sup_{\widetilde{X}_{t} \in \widetilde{\mathcal{X}}_{t-1,t} \left( \widetilde{X}^{*}_{(i),t-1}\right)} \E \left[ U_{(i),t}  \left(\widetilde{X}_{(i),t}\right) \bigg\vert \mathcal{F}_{t-1}\right].
\end{align}
Iteratively solving \eqref{eq:InverseEqn_Dynamic} leads to the construction of PRFPPs under a forward Nash
equilibrium, see \cite{angoshtari2020predictable} for a detailed exposition of the case where $N=1$. 
The crucial step is to solve the following \textit{$N$-coupled inverse investment problems}:
Given a profile of initial preferences $U_{(i),0}$, $i = 1,\dots, N$, we seek to determine relative forward utility functions $U_{(i),1}$, $i=1,\dots,N$, and a strategy profile $\left( \pi^{*}_{(1)}, \dots, \pi^{*}_{(N)}\right)$ such that
\begin{equation}\label{eq:InverseProblem}
\begin{aligned}
    U_{(i),0} \left( \widetilde{x}_{(i)} \right) & = \sup_{\widetilde{X}_{(i),1} \in \widetilde{\mathcal{X}}_{0,1}^{(i)} \left(\widetilde{x}_{(i)}\right)} \E \left[ U_{(i),1}  \left(\widetilde{X}_{(i),1}\right)\right]\\ 
    &=\sup_{{\pi_{(i),1}} \in \mathcal{A}^{(i)}_{0,1}} \E \left[ U_{(i),1} \left( \widetilde{x}_{(i)} + (1-\frac{\theta_{(i)}}{N})\pi_{(i),1}(R_{(i),1}-1)-\frac{\theta_{(i)}}{N}\sum_{k\neq i}\pi^{*}_{(k),1}(R_{(k),1}-1) \right)\right]
\end{aligned}
\end{equation}
for all initial wealth profiles $(x_1, \dots, x_N) \in \mathbb{R}^N$ and agents $i = 1, \dots, N$.
Subsequently, one can iteratively construct $(U_{(1),2}, \dots, U_{(N),2})$, $(U_{(1),3},\dots,U_{(N),3})$, and so forth.
This is achieved by repeatedly solving 
a problem of the form \eqref{eq:InverseProblem}, conditionally on updated information available at the next time point, and arguing that this solution satisfies the required measurability conditions.

\subsection{The mean field game}
Having introduced and discussed PRFPPs for a finite number of $N$ agents, we now turn towards the mean field game formulation of these concepts by considering the limit as 
the number of agents goes to infinity.
We begin by assuming that the probability space $(\Omega,\mathcal{F},\P)$ yet supports two additional \textit{type vectors} for each agent in the game: The \textit{type vector specifying initial characteristics of agent $i$}, $ i = 1 ,\dots, N$, takes values in the space $Y_1^{TV}:=\mathbb{R}\times (0,+\infty)\times [0,1]$ and is given by
$$\widetilde{\eta}_{(i)}=(\widetilde{x}_{(i)}, \widetilde{\gamma}_{(i)}, \widetilde{\theta}_{(i)}),$$
where $\widetilde{x}_{(i)}$ is the initial wealth, $\widetilde{\gamma}_{(i)}$ parametrizes the initial utility function, and $\widetilde{\theta}_{(i)}$ the competition weight of agent $i$. 
The \textit{type vector specifying the market of agent $i$ in period $[t-1, t)$}, $i = 1,\dots, N$, takes values in the space $Y_2^{TV}:=(1,\infty) \times (0,1) \times (0,1) \times (0,1) \times (0,1)$ and is given by 
$$\widetilde{\zeta}_{(i),t}=\left(\widetilde{u}_{(i),t}, \widetilde{d}_{(i),t}, \widetilde{p}_{(i),t}, \widetilde{p}_{(i),t}^{1}, \widetilde{p}_{(i),t}^{0}\right),$$
where $\widetilde{u}_{(i),t}$ and $\widetilde{d}_{(i),t}$ are the return levels over period $[t-1,t)$ in the binomial market setting with transition probability $\widetilde{p}_{(i),t}$, $\widetilde{p}_{(i),t}^{1}$ and $\widetilde{p}_{(i),t}^{0}$ are the single-period conditional probabilities for an upward of the $i$th stock in the bull or bear market.
% In the special case of typically-used CARA utility functions and CRRA utility functions, this can be replaced by the absolute risk aversion or absolute risk aversion parameter which determines the shape of the utility function uniquely.

%$\gamma_{(i)}$ is for representing the parameters that determine the shape (including slope and concavity) of the agent's utility function. For example in the case of typically-used CARA utility functions and CRRA utility functions, $\gamma_{(i)}$ refers to absolute risk aversion or absolute risk aversion respectively. $\gamma_{(i)}$ can be accordingly set to multi-dimensional vector to incorporate more information if given other utilities.

%Considering the nature of constructing predictable forward performance process, we assume that $\gamma_{(i)}$ and $\theta_{(i)}$ of each agent in the group remain constant over time.

For each trading period $[t-1,t)$, we denote for simpler notation the random vector $$\widetilde{\psi}_{(i),t}=\left(\widetilde{\eta}_{(i)}, \widetilde{\zeta}_{(i),1}, \widetilde{\zeta}_{(i),2}, \dots, \widetilde{\zeta}_{(i),t}\right)$$
which concatenates vectors representing all the information of agent $i$ from the initial time up to the current time $t$.
We then consider the empirical random probability measure representing the distribution of type vectors among all thce agents, which is given by
\begin{align*}
    \epsilon_t^N(A)=\frac{1}{N}\sum\limits_{i=1}^N\mathbbm{1}_{A}(\widetilde{\psi}_{(i),t}),
\end{align*}
for Borel set $A\subset Y_1^{TV}
\times (Y_2^{TV})^t$.
Assume that as $N\rightarrow \infty$, the random measure $\epsilon_t^N$ has a weak limit $\epsilon_t$, in the sense that for any bounded continuous functions $f$ on $Y_1^{TV}\times (Y_2^{TV})^t$, $\int_{Y_1^{TV}\times (Y_2^{TV})^t}fd\epsilon_t^N \rightarrow \int_{Y_1^{TV}\times (Y_2^{TV})^t}fd\epsilon_t$. 

Let $\psi_t=\left(\eta, \zeta_1, \zeta_2, \dots, \zeta_t\right)$ be a sequence of random vectors with the limiting distribution $\epsilon_t$ defined on the same probability space $(\Omega,\mathcal{F},\P)$, where
$\eta=(x, \gamma, \theta)$ and $\zeta_k=(u_{k}, d_{k}, p_k, p_k^{1}, p_k^{0}),$ $ k=1,\dots, t$. 
Note that we can always define $\widetilde{\psi}_{(i),t}$ and $\psi_t$ on the same probability space by Skorokhod's representation theorem. 
Let $(B_t)_{t\in \mathbb{N}}$ be a sequence of Bernoulli random variables with transition probabilities $(p_t)_{t\in \mathbb{N}}$ on $(\Omega,\mathcal{F},\P)$.
%Assume that $(\Omega,\mathcal{F},\P)$ further supports a sequence of Bernoulli random variables $(B_t)_{t\in \mathbb{N}}$, with transition probabilities $(p_t)_{t\in \mathbb{N}}$.
%{\ms Why do we introduce a new probability space? Some of the elements we consider where introduced on the previous space. }
% Note that the population each agent interacts with is completely characterized by the distribution of type vectors $(\psi_t)_{t\in \mathbb{N}}$. 

We follow the standard paradigm in mean field games and  consider the optimization problem of a representative agent with \textit{stochastic} type vectors $(\psi_t)_{t\in \mathbb{N}}$. 
The randomness of the type vector encapsulates the distribution of the agents' characteristics across the economy.
The economy is influenced by a non-traded stochastic factor $\xi_t^{cn}$ defined on $(\Omega,\mathcal{F},\P)$, which follows a Bernoulli distribution with transition probability $p_t^{cn}$.
Denote by $\mathbb{F}^{MF} = (\mathcal{F}_t^{MF})_{t \in \mathbb{N}_0}$ the smallest filtration for which $\eta$ is $\mathcal{F}_0^{MF}$-measurable, $\zeta_t$ and $p_t^{cn}$ are $\mathcal{F}_{t-1}^{MF}$-measurable, and $B_t$ and $\xi_{t}^{cn}$ are $\mathcal{F}_t^{MF}$-measurable. Let $\mathbb{F}^{CN} = (\mathcal{F}_t^{CN})_{t \in \mathbb{N}_0}$ denote the natural filtration generated by $(\xi_{j}^{cn})_{j=1}^{t}$. 

The set of admissible trading strategies of the representative agent is defined by $\mathcal{A}^{MF}(t)$,
\begin{align*}
    \mathcal{A}^{MF}(t)=\left\{\left( \pi_k \right)_{k > t}: \pi_k \in \mathbb{R}\ {\rm{is}}\  \mathcal{F}^{MF}_{k-1}{\rm{-measurable}}\right\},
\end{align*}
we abbreviate $\mathcal{A}^{MF}(0)$ by $\mathcal{A}^{MF}$, and further introduce the set of admissible strategies for single period $[t,t+1)$ by
$\mathcal{A}^{MF}_{t,t+1}(x)$.
For a given trading strategy $(\pi_t)_{t\in \mathbb{N}}\in \mathcal{A}^{MF}$, the wealth process of the representative agent evolves according to $X_t=X_{t-1}+\pi_t(R_t-1)$ with $X_0=x$, where $R_t=u_tB_t+d_t(1-B_t)$. 
As an $\mathcal{F}^{MF}_{t-1}{\rm{-measurable}}$ random variable, we may write admissible strategies of the representative agent as $\pi_t=\pi_t\left(\psi_t, (p^{cn}_{j})_{j=1}^{t}, (\xi_{j}^{cn})_{j=1}^{t-1}, (B_{j})_{j=1}^{t-1}\right)$, a measurable function of the random variables generating the filtration which we again denote by $\pi_t$ with a slight abuse of notation. Denote the associated wealth processes of the representative agent starting from wealth $X^{\pi}_t = x$ by $\mathcal{X}^{MF}(t,x)$, and introduce the corresponding set of wealth
%and relative wealth 
at time $t+1$ by $\mathcal{X}^{MF}_{t,t+1}(x)$.
%and $\widetilde{\mathcal{X}}^{MF}_{t,t+1}(x)$. 

%\medskip

To establish a relationship between the average wealth of the population and the conditional expected wealth of the representative agent (Proposition \ref{Prop:MFGAverageWealth}), we consider an infinite population of agents, indexed by \( i \in \mathbb{N} \), endowed conditional i.i.d. copies of type vectors of the representative agent.
To this end, we assume that the probability space $(\Omega,\mathcal{F},\P)$ further supports stochastic processes $(\psi_{(i),t})_{t\in \mathbb{N}}$ and $(B_{(i),t})_{t\in \mathbb{N}}$ for $i \in \mathbb{N}$, 
where $\psi_{(i),t}=\left(\eta_{(i)}, \zeta_{(i),1}, \zeta_{(i),2}, \dots, \zeta_{(i),t}\right)$ 
with $\eta_{(i)}=(x_{(i)}$, $\gamma_{(i)}$, $\theta_{(i)})$, and $\zeta_{(i),k}=\left(u_{(i),k}, d_{(i),k}, p_{(i),k}, p_{(i),k}^1, p_{(i),k}^0\right)$, $k=1,\dots,t$, and $B_{(i),t}$ is a Bernoulli random variable with transition probability $p_{(i),t}$, described by a component of $\psi_{(i),t}$. 
The agent with index $i\in\mathbb{N}$ is characterized by $\psi_{(i),t}$ and specializes in a stock with return $R_{(i),t}=u_{(i),t}B_{(i),t}+d_{(i),t}(1-B_{(i),t})$.
We assume that $B_{(i),t}$'s are
conditionally independent of $B_t$ given $\mathcal{F}_{t-1}^{MF}\vee \sigma(\xi_t^{cn})$.

To ensure path dependence of characteristics for a given agent, $\psi_{(i),t}$'s are sampled as follows. 
In the first step, the distribution of $\psi_{(i),1}=(\eta_{(i)}, \zeta_{(i),1})$ follows the identical distribution as $\psi_1=(\eta, \zeta_1)$. 
Given the realization of $\psi_{(i),1}$ at $\bar{\psi}_{(i),1}=(\bar{\eta}_{(i)}, \bar{\zeta}_{(i),1})$, the distribution of $\psi_{(i),2}=(\bar{\eta}_{(i)}, \bar{\zeta}_{(i),1},\zeta_{(i),2})$ is identical to $\psi_2=(\eta, \zeta_1, \zeta_2)$ conditional on $(\eta, \zeta_1)=(\bar{\eta}_{(i)}, \bar{\zeta}_{(i),1})$. As a general step at time $t$, given the realization of $\psi_{(i),t}$ at $\bar{\psi}_{(i),t}=(\bar{\eta}_{(i)}, \bar{\zeta}_{(i),1},\dots, \bar{\zeta}_{(i),t})$, $\psi_{(i),t+1}=(\bar{\eta}_{(i)}, \bar{\zeta}_{(i),1},\dots, \bar{\zeta}_{(i),t}, \zeta_{(i),t+1})$ follows the identical distribution as $\psi_{t+1}=(\eta, \zeta_1, \dots, \zeta_{t+1})$ conditional on $(\eta, \zeta_1,\dots, \zeta_{t})=(\bar{\eta}_{(i)}, \bar{\zeta}_{(i),1},\dots, \bar{\zeta}_{(i),t})$. 
Recall that the admissible strategy $\pi_t$ of the representative agent is $\mathcal{F}_{t-1}^{MF}$-measurable, we may write it as a measurable function of the random variables generating the filtration, i.e., $\pi_t = \pi_t\left(\psi_t, (p^{cn}_{j})_{j=1}^{t}, (\xi_{j}^{cn})_{j=1}^{t-1}, (B_{j})_{j=1}^{t-1}\right)$.
Given an admissible strategy $\pi_t$, the wealth of agent $i$ characterized by type vector $\psi_{(i),t}$ evolves as $X_{(i),t}=X_{(i),t-1}+\pi_{(i),t}(R_{(i),t}-1)$ with $X_{(i),0}=x_{(i)}$, where $\pi_{(i),t}=\pi_t\left(\psi_{(i),t}, (p^{cn}_{j})_{j=1}^{t}, (\xi_{j}^{cn})_{j=1}^{t-1}, (B_{(i),j})_{j=1}^{t-1}\right)$.
For each investment period $[t-1,t)$, 
the effect of the independent noises averages out as $N$ tends to infinity due to the weak law of large numbers, but the effect of the non-traded stochastic factor does not. %{\ms How is the type vector independent of the returns if it specifies the distribution of returns?}
%{\color{blue}Yuwei: I have reintroduced the return of binomial stock as $R_{(i),t} = u_t B_{(i),t} + d_t (1-B_{(i),t}) $, and so type vector is independent of Bernoulli random variable $B_{(i),t}$, please check.}
%Hence, conditionally on $\xi_t^{cn}$, each agent solves i.i.d. copies of the same optimization problem. 
%{\ms How are they identical if the parameters are different? And what does it mean for optimization problems to be independent? }
Define ${\Delta}_t^1=p_t^{1}(u_t-1)+(1-p_t^{1})(d_t-1)$, and ${\Delta}_t^0=p_t^{0}(u_t-1)+(1-p_t^{0})(d_t-1)$. Given $t\in \mathbb{N}$, assume that for any $0< t_1, t_2\leq t$ and $i\neq j$, $B_{(i),t_1}$ and $B_{(j),t_2}$ are conditionally independent given $\mathcal{F}_{t}^{CN}$, and furthermore,
$\psi_{(i),t}$ is conditionally independent of $\psi_{(j),t}$ and $B_{(j),t_1}$ given $\mathcal{F}_t^{CN}$.
The following proposition characterizes the average wealth in the economy. 
This, in turn, justifies the definition of discrete-time PRFPP under a mean field equilibrium given below.
%{\ms Formulate this as a Proposition.}
\begin{proposition}\label{Prop:MFGAverageWealth}
Let $X$ and $X^{(i)}$ be the corresponding wealth process resulting from an admissible strategy $\pi \in \mathcal{A}^{MF}$. 
Then, it holds that $\frac{1}{N}\sum\limits_{i=1}^{N}X_{(i),t}\stackrel{\text{p}}{\rightarrow}\E\left[X_t\vert\mathcal{F}_{t}^{CN}\right]$ for any $t\in\mathbb{N}$, 
where $\stackrel{\text{p}}{\rightarrow}$ denotes convergence in probability as $N$ tends to infinity. Furthermore, 
$$\E\left[X_t\vert\mathcal{F}_{t}^{CN}\right]=\E\left[X_{t-1}\vert\mathcal{F}_{t-1}^{CN}\right]+\E\left[\pi_t\Delta_t^{1}\vert \mathcal{F}_{t-1}^{CN}\right]\mathbbm{1}_{\{\xi_t^{cn}=1\}}+\E\left[\pi_t\Delta_t^{0}\vert \mathcal{F}_{t-1}^{CN}\right]\mathbbm{1}_{\{\xi_t^{cn}=0\}}.$$
\end{proposition}

%and denote the associated wealth processes starting from $X^{{\pi}}_t = x$ by $\mathcal{X}^{MF}(t,x)$, where $x$ is $\mathcal{F}^{TV}_{t}$-measurable random variable. 
Following Proposition \ref{Prop:MFGAverageWealth}, we define the average wealth of the population as $\overline{X}_t:=\E\left[X_t\vert\mathcal{F}_{t}^{CN}\right]$. 
The relative wealth of the representative agent is then given by $X_t-\theta \overline{X}_t$.
We remark that the equality in probability comes from a weak version of the conditional law of large numbers. 
There seems to be no standard source for either the weak or strong version of the conditional law of large numbers, but the weak version can be shown following analogous steps as in the unconditional version. 
We work with the weak version for the sake of simplicity and 
because this is sufficient for the purpose of our paper. 
We next present the definition of PRFPP in a discrete-time mean field game.
\begin{definition}\label{def:RelativeFrowardPreferences-NQMFG}
A family of random functions $\left\lbrace U_{t}: \mathbb{R} \times \Omega \rightarrow \mathbb{R} \vert t \in \mathbb{N}_0 \right\rbrace$ together with an $\mathcal{F}_{t-1}^{MF}$-measurable admissible policy $\pi_t^{*} \in \mathcal{A}^{MF}$, is called a discrete-time predictable relative forward performance process (PRFPP) under a mean field equilibrium if the following conditions hold:
\begin{enumerate}
\item[(i)] $U_0\left(x,\cdot\right)$ is $\mathcal{F}_0^{MF}$-measurable and $U_{t}\left(x,\cdot\right)$ is $\mathcal{F}_{t-1}^{MF}$-measurable for each $x\in\mathbb{R}$ and $t \in \mathbb{N}$.
\item[(ii)] $U_{t}(\cdot,\omega)$ is a utility function for almost all $\omega \in \Omega$ and all $t \in \mathbb{N}_0$.
\item[(iii)] For any policy $\pi \in \mathcal{A}^{MF}$, the resulting wealth process $X$ of the representative agent satisfies, for almost all $\omega\in \Omega$,
\begin{align*}
U_{t-1}\left( X_{t-1}-\theta\E\left[X_{t-1}\vert\mathcal{F}_{t-1}^{CN}\right]\right) \geqslant \E \left[ U_{t} \left(X_{t}-\theta\E\left[X_t\vert\mathcal{F}_{t}^{CN}\right]\right) \big\vert \mathcal{F}_{t-1}^{MF} \right], \quad t \in \mathbb{N}.
\end{align*}
\item[(iv)]  By following policy $\pi^{*}$, the resulting wealth process $X^*$ of the representative agent satisfies, for almost all $\omega\in \Omega$,
\begin{align*}
U_{t-1}\left( X^*_{t-1}-\theta\E\left[X_{t-1}^*\vert\mathcal{F}_{t-1}^{CN}\right]\right) = \E \left[ U_{t} \left(X^*_{t}-\theta\E\left[X^*_{t}\vert\mathcal{F}_{t}^{CN}\right]\right) \big\vert \mathcal{F}_{t-1}^{MF} \right], \quad t \in \mathbb{N}.
\end{align*}
\end{enumerate}
\end{definition}

Properties $(iii)$ and $(iv)$ in Definition \ref{def:RelativeFrowardPreferences-NQMFG} imply that
\begin{align}\label{eq:InverseEqn_Dynamic_MFG}
U_{t-1}\left( X^*_{t-1}-\theta\E\left[X_{t-1}^*\vert\mathcal{F}_{t-1}^{CN}\right]\right)  = \sup_{X_{t} \in \mathcal{X}^{MF}_{t-1,t} \left( X^*_{t-1}\right)}  \E \left[ U_{t} \left(X_{t}-\theta\E\left[X_t\vert\mathcal{F}_{t}^{CN}\right]\right) \big\vert \mathcal{F}_{t-1}^{MF} \right].
\end{align}
Iteratively solving (\ref{eq:InverseEqn_Dynamic_MFG}) leads to the construction of PRFPPs under a mean field equilibrium. 
In analogy to the $N$ agent case, the crucial step is to solve for an agent in the group with any type vectors $\eta$ and $\zeta_0$, her relative forward utility functions $U_1$, and a strategy profile $\pi^{*}$ such that
\begin{equation}\label{eq:InverseProblemMFG}
\begin{aligned}
    U_{0} \left(X_{0}-\theta\overline{X}_{0} \right)  &= \sup_{X_{1} \in \mathcal{X}^{MF}_{0,1} \left(x\right)} \E \left[ U_{1} \left( X_{1}-\theta\overline{X}_{1}\right)\big\vert \mathcal{F}_{0}^{MF} \right]
    \\&=\sup_{{\pi_1} \in \mathcal{A}^{MF}_{0,1}} \E \big[ U_1 \big(x+\pi_1(R_1-1)  -\theta\E\left[x+\pi_1(R_1-1)\vert\mathcal{F}_{1}^{CN}\right]\big)\big\vert \mathcal{F}_{0}^{MF} \big]
\end{aligned}
\end{equation}
%{\ms Make it transparent that $X_1$ depends on $\pi_1$ in the last equation.}
One can then construct $U_{2}$, $U_{3}$ and so on by repeatedly solving problem of the form (\ref{eq:InverseProblemMFG}) conditionally on updated information available at next time point and arguing that this solution satisfies the required measurability conditions. 

%{\ms The setup of the MFG is still not fully rigorous and a bit confusing due to the different perspectives of the empirical measure, the representative agent, and the i.i.d. samples from the distribution of the representative agent. Essentially, all we want is to justify the expression of $\overline{X}_t$ in Proposition 1. I thus suggest to delegate the perspective with i.i.d. samples to the appendix. We also need to revise again the definition of the filtration, this is still not correct. }

% The relative forward preferences of each agent is decreasing on average as a supermartingale under any admissible strategy but conserving the preference value as a martingale at the optimal strategy.
% This property is referred to as the \textit{Martingale Optimality Principle} in classical, backward optimal control and lies at the heart of the definition of forward performance criteria.

In the following, we will consider games with finitely many as well as an infinite number of competing agents. We assume that initial data belong to the family of exponential risk preferences. 
This allows for greater mathematical tractability. Specifically, the $i$th agent starts with an initial utility function regarding her relative wealth, $U_{(i),0}(\widetilde{X}_{(i),0})=-e^{-\gamma_{(i)}\widetilde{X}_{(i),0}}$, where the constant $\gamma_{(i)}>0$ denotes the risk aversion parameter.

\section{The $N$-agent game}\label{sec:NAgentGame}
% {\ms In the entire paper, we need to be consistent with the indices. For example, we sometimes use $m$ to denote the agent, then $i$. For time, we sometimes use $n$, then $t$. This is confusing.}
% {\color{OliveGreen}Many thanks for pointing this out. I now use $i$ to denote the agent, and sometimes $n$ is just used as index for summation of coefficients from time $0$ to $t$.}
 
The agents in our model not only trade between the risk-free bond and their respective binomial stocks to optimize the performance of their own wealth processes, but also take into account the average wealth level of other agents.
A better performance of other agents' portfolio strategies will lead to a lower degree of satisfaction for any given agent.

PRFPP in the finite population game can be constructed via a two-step procedure. 
For each individual agent, we first solve the corresponding inverse optimization problem for arbitrary but fixed trading strategies of the competitors.
This corresponds to the best-response forward relative framework introduced in \cite{anthropelos2022competition} for a two-agent game respectively in  \cite{dos2021forward} for an arbitrary number of agents. 
%{\ms I do not understand the .., but part. How is it more general? }
In the second step, we solve for the fixed point of a system of equations representing best-response strategies of agents, and derive the forward Nash equilibrium strategies as introduced in Definition \ref{def:RelativeFrowardPreferences-NQ}.

Let $\mathcal{M}$ %(not necessarily unique if with the introduction of a systematic factor as below) 
denote the set of equivalent martingale measures and recall that
the conditional risk-neutral probability of an upward move of the stock traded by the $i$th agent is given by
$q_{(i),t} = \Q \left[ R_{(i),t} = u_{(i),t} \vert \mathcal{F}_{t-1} \right] = \frac{1-d_{(i),t}}{u_{(i),t}-d_{(i),t}}$, $t \in \mathbb{N}$, for any $\Q \in \mathcal{M}$.
We start with the first step by fixing a generic agent $i$ and her competitor's policies $\pi_{(j),t}$, $ j \neq i$ over time period $[t-1, t), t \in \mathbb{N},$ and then define 
$A^{1}_{(i),t}, A^{2}_{(i),t}, B^{1}_{(i),t}$ and $B^{2}_{(i),t}$ by 
%{\ms Remove the $*$ in the competitor policies when they are fixed. They are not necessarily optimal here. }
\begin{align}\label{expressionA1}
    A^{1}_{(i),t}&=p_{(i),t}^1p_t^{cn}\prod \limits_{j\neq i}C_{jt}^{(i),1}+p_{(i),t}^0(1-p_t^{cn})\prod \limits_{j\neq i}C_{jt}^{(i),0},
\end{align}
\begin{align}\label{expressionA2}
    A^{2}_{(i),t}&=(1-p_{(i),t}^1)p_t^{cn}\prod \limits_{j\neq i}C_{jt}^{(i),1}+(1-p_{(i),t}^0)(1-p_t^{cn})\prod \limits_{j\neq i}C_{jt}^{(i),0},
\end{align}
\begin{align}\label{expressionB1}
    B^{1}_{(i),t}=\left(\frac{q_{(i),t}A^{2}_{(i),t}}{(1-q_{(i),t})A^{1}_{(i),t}}\right)^{(1-q_{(i),t})},
\end{align}
\begin{align}\label{expressionB2}
    B^{2}_{(i),t}=\left(\frac{q_{(i),t}A^{2}_{(i),t}}{(1-q_{(i),t})A^{1}_{(i),t}}\right)^{-q_{(i),t}},
\end{align}
with \begin{align*}
    C_{jt}^{(i),1}&=p_{(j),t}^1e^{\gamma_{(i)}\frac{\theta_{(i)}}{N}\pi_{(j),t}(u_{(j),t}-1)}+(1-p_{(j),t}^1)e^{\gamma_{(i)}\frac{\theta_{(i)}}{N}\pi_{(j),t}(d_{(j),t}-1)},\\
    C_{jt}^{(i),0}&=p_{(j),t}^0e^{\gamma_{(i)}\frac{\theta_{(i)}}{N}\pi_{(j),t}(u_{(j),t}-1)}+(1-p_{(j),t}^0)e^{\gamma_{(i)}\frac{\theta_{(i)}}{N}\pi_{(j),t}(d_{(j),t}-1)}.
\end{align*}

In the following theorem, we present an explicit construction of the PRFPP for an agent who is influenced by her opponents' actions but without interacting with them.
\begin{theorem}\label{Thm:RelativeForwardNAgents}
For $i \in \{1, \dots, N\}$ and $t\in \mathbb{N}$, 
let competitor policies $\pi^{*}_{(j),t}, j\neq i$ be given as $\mathcal{F}_{t-1}$-measurable random variables. 
Then, the process
\begin{align}\label{U_tNAgent}
    U_{(i),t}(\widetilde{x})=-\frac{e^{-\gamma_{(i)}\widetilde{x}}}{\prod \limits_{n=1}^t(A_{(i),n}^{1}B_{(i),n}^{1}+A_{(i),n}^{2}B_{(i),n}^{2})},
\end{align} 
with $U_{(i),0}(\widetilde{x})=-e^{-\gamma_{(i)}\widetilde{x}}$, is a PRFPP for the $i$th agent.
The optimal strategy for $n=1,2,\dots,t$ is given by
\begin{align}\label{OptimalStrategyN1}
    \pi_{(i),n}^{*}=\frac{1}{\gamma_{(i)} (1-\frac{\theta_{(i)}}{N})(u_{(i),n}-d_{(i),n})}\ln \frac{(1-q_{(i),n})A_{(i),n}^{1}}{q_{(i),n}A_{(i),n}^{2}},
\end{align}
which generates the corresponding optimal wealth process
\begin{align}\label{OptimalWealthN}
     X_{(i),t}^{*}= X_{(i),0}+\frac{1}{\gamma_{(i)} (1-\frac{\theta_{(i)}}{N})}\sum_{n=1}^{t}\frac{R_{(i),n}-1}{(u_{(i),n}-d_{(i),n})}\ln \frac{(1-q_{(i),n})A_{(i),n}^{1}}{q_{(i),n}A_{(i),n}^{2}}.
\end{align}
\end{theorem}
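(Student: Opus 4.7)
\medskip

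\textbf{Proof plan.} Since the optimization target in \eqref{eq:InverseProblem} factorizes period by period, my plan is to verify the four conditions of Definition \ref{def:RelativeFrowardPreferences} by induction on $t$, treating each step as a one-period exponential-utility problem. The ansatz suggested by the CARA initial datum is
\[
U_t^{(i)}(\tilde x) = -\frac{e^{-\gamma^{(i)}\tilde x}}{D_t^{(i)}}, \qquad D_t^{(i)} := \prod_{n=1}^{t}\bigl(A_n^{1(i)}B_n^{1(i)}+A_n^{2(i)}B_n^{2(i)}\bigr),
\]
and the inductive step amounts to checking that with the stated $\pi_t^{(i),*}$ the normalizer $D_t^{(i)}$ is $\mathcal{F}_{t-1}$-measurable and satisfies the required martingale identity.

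\medskip

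\textbf{Key one-period computation.} Fix $t$ and consider the relative wealth increment
\[
\Delta\tilde X_t^{(i)} = \bigl(1-\tfrac{\theta^{(i)}}{N}\bigr)\pi_t^{(i)}(R_t^{(i)}-1) - \tfrac{\theta^{(i)}}{N}\sum_{k\neq i}\pi_t^{(k),*}(R_t^{(k)}-1).
\]
Conditioning first on $\mathcal{F}_{t-1}\vee\sigma(\xi_t^{cn})$ and using the conditional independence of $\{B_t^{(j)}\}_{j}$ given this $\sigma$-algebra, the conditional expectation of $e^{-\gamma^{(i)}\Delta\tilde X_t^{(i)}}$ factors into an agent-$i$ factor and a product of competitor factors. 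Taking the outer expectation over $\xi_t^{cn}$ and collecting the bull and bear contributions produces exactly the coefficients $A_t^{1(i)},A_t^{2(i)}$ in \eqref{expressionA1}--\eqref{expressionA2}, via the identifications that $C_{jt}^{(i),1},C_{jt}^{(i),0}$ are the bull/bear conditional expectations of $\exp\!\bigl(\gamma^{(i)}\tfrac{\theta^{(i)}}{N}\pi_t^{(k),*}(R_t^{(k)}-1)\bigr)$. Writing $a=\gamma^{(i)}(1-\theta^{(i)}/N)\pi_t^{(i)}$, the result is the strictly convex expression
\[
\E\!\left[e^{-\gamma^{(i)}\Delta\tilde X_t^{(i)}}\,\bigl|\,\mathcal{F}_{t-1}\right] \;=\; A_t^{1(i)}\,e^{-a(u_t^{(i)}-1)}+A_t^{2(i)}\,e^{-a(d_t^{(i)}-1)}.
\]
The first-order condition yields $e^{a^{*}(u_t^{(i)}-d_t^{(i)})}=\tfrac{(1-q_t^{(i)})A_t^{1(i)}}{q_t^{(i)}A_t^{2(i)}}$ after using $q_t^{(i)}=(1-d_t^{(i)})/(u_t^{(i)}-d_t^{(i)})$, which is \eqref{OptimalStrategyN1}, and substitution of $a^{*}$ back delivers the minimum value $A_t^{1(i)}B_t^{1(i)}+A_t^{2(i)}B_t^{2(i)}$ after applying \eqref{expressionB1}--\eqref{expressionB2}.

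\medskip

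\textbf{Induction and verification of the four conditions.} From the definitions, $A_t^{j(i)}$ and $B_t^{j(i)}$ are functions of the $\mathcal{F}_{t-1}$-measurable market parameters $u_t^{(i)},d_t^{(i)},p_t^{cn},p_t^{(i),1},p_t^{(i),0},q_t^{(i)}$ and of the competitor strategies $\pi_t^{(j),*}$, which by hypothesis are $\mathcal{F}_{t-1}$-measurable. Hence $D_t^{(i)}$ is $\mathcal{F}_{t-1}$-measurable and positive, so $U_t^{(i)}(\cdot,\omega)$ is a CARA utility for almost every $\omega$; this gives (i) and (ii). The one-period computation above shows that for any admissible $\pi_t^{(i)}$,
\[
\E\!\left[U_t^{(i)}(\tilde X_t^{(i)})\,\bigl|\,\mathcal{F}_{t-1}\right]
= -\frac{e^{-\gamma^{(i)}\tilde X_{t-1}^{(i)}}}{D_t^{(i)}}\cdot\bigl(A_t^{1(i)}e^{-a(u_t^{(i)}-1)}+A_t^{2(i)}e^{-a(d_t^{(i)}-1)}\bigr)\;\leq\;U_{t-1}^{(i)}(\tilde X_{t-1}^{(i)}),
\]
with equality at $\pi_t^{(i),*}$ thanks to the recursion $D_t^{(i)}=D_{t-1}^{(i)}\cdot(A_t^{1(i)}B_t^{1(i)}+A_t^{2(i)}B_t^{2(i)})$ built into the ansatz. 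This is precisely (iii) and (iv) at stage $t$. Summing the optimal increments $\pi_n^{(i),*}(R_n^{(i)}-1)$ over $n=1,\dots,t$ delivers \eqref{OptimalWealthN}.

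\medskip

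\textbf{Main obstacle.} The real content is the conditional-independence decomposition in Step 2: the competitor strategies are themselves random through $\mathbb{F}$ and depend on the common noise, so without first conditioning on $\mathcal{F}_{t-1}\vee\sigma(\xi_t^{cn})$ the product structure $\prod_{j\neq i}C_{jt}^{(i),\cdot}$ would not emerge and the exponential ansatz would fail to close. Once this decomposition is in place, the remainder is a convex first-order-condition calculation and a routine measurability check.
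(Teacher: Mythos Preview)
Your proposal is correct and follows essentially the same route as the paper: the paper isolates the one-period computation as a separate lemma by conditioning on $\mathcal{F}_{t-1}\vee\sigma(\xi_t^{cn})$, factorizing via the conditional independence of the $B_t^{(j)}$'s into the agent-$i$ factor and the product $\prod_{j\neq i}C_{jt}^{(i),\cdot}$, solving the resulting strictly convex first-order condition, and then verifying conditions (i)--(iv) of Definition~\ref{def:RelativeFrowardPreferences} from that lemma. Your identification of the common-noise conditioning as the crux is exactly the point the paper's argument hinges on.
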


Theorem \ref{Thm:RelativeForwardNAgents} gives us the best-response strategy for the $i$th agent in terms of the given strategies of all other agents $j\neq i$. 
In the following, we will consider the case where agents interact dynamically and competitively for two special cases: The case where agents are homogeneous and the case where $N = 2$. 
% we say that if the solution $(\pi_n^{(1),*}, \pi_n^{(2),*},\dots,\pi_n^{(N),*})$ to the system of $N$ equations of form (\ref{OptimalStrategyN1}) exists, then there exists a forward Nash equilibrium for each agent. We consider two specific cases in the rest of this section.  

We again emphasize that the market considered herein is  incomplete due to the updating of parameters and the non-traded stochastic factor.
Forward performance processes without relative performance concerns  in incomplete binomial models have been studied in \cite{musiela2016forward}. 
Without competition, setting $\theta_{(i)}=0$, it can be readily shown that
$A_{(i),t}^{1}=p_{(i),t}, A_{(i),t}^{2}=1-p_{(i),t},$ and $A_{(i),t}^{1}B_{(i),t}^{1}+A_{(i),t}^{2}B_{(i),t}^{2}=\left(\frac{q_{(i),t}}{p_{(i),t}}\right)^{-q_{(i),t}}\left(\frac{1-q_{(i),t}}{1-p_{(i),t}}\right)^{q_{(i),t}-1}$. 
It is then easy to show that the forward performance process and the corresponding optimal strategy obtained herein contain the corresponding result in \cite[Theorem 3]{musiela2016forward} as a special case. 

% Furthermore, the market is incomplete in a broader sense, as it accommodates forward performance processes without relative performance concerns, as demonstrated in the incomplete binomial model studied by \cite{musiela2016forward}. Specifically, if there is no competition and $\theta_{(i)}=0$, it can be readily shown that
% $A_{(i),t}^{1}=p_{(i),t}, A_{(i),t}^{2}=1-p_{(i),t},$ and $A_{(i),t}^{1}B_{(i),t}^{1}+A_{(i),t}^{2}B_{(i),t}^{2}=\left(\frac{q_{(i),t}}{p_{(i),t}}\right)^{-q_{(i),t}}\left(\frac{1-q_{(i),t}}{1-p_{(i),t}}\right)^{q_{(i),t}-1}$. 
% Thus, the forward performance process and the corresponding optimal strategy derived from \eqref{U_tNAgent} and \eqref{OptimalStrategyN1} in Theorem \ref{Thm:RelativeForwardNAgents} encompass the results outlined in cf. \cite[Theorem 3] {musiela2016forward} in the absence of relative performance concerns.}

%Analogously, we next define the concept of predictable relative forward preferences under mean field equilibrium for the game of infinite population. 

Having obtained an explicit characterization of the equilibrium strategies, we next investigate the limiting behavior of the model when one or more agents become risk neutral.
It follows immediately from equation (11) that, as her risk-aversion parameter approaches zero, the risk-neutral agent own equilibrium strategy becomes unbounded. This behavior is not surprising, since the expected-utility maximization problem of a risk-neutral investor is ill-posed in the presence of risky assets with positive excess return. 
Now suppose that the market offers a positive expected excess return and that agent $j$ becomes risk neutral, so that $\gamma_{(j)}\to 0$ and hence $\pi_{(j),t}^*\to +\infty$. For a risk-averse competitor $i\neq j$, the peer-interaction terms satisfy
\begin{align*}
    C_{jt}^{(i),1}\rightarrow p_{(j),t}^1,e^{\gamma_{(i)}\frac{\theta_{(i)}}{N} \pi_{(j),t}(u_{(j),t}-1)},\qquad
C_{jt}^{(i),0}\rightarrow p_{(j),t}^0,e^{\gamma_{(i)}\frac{\theta_{(i)}}{N} \pi_{(j),t}(u_{(j),t}-1)},
\end{align*}
as $\pi_{(j),t}\to+\infty$.
Factoring out the common term $\gamma_{(i)}\frac{\theta_{(i)}}{N} \pi_{(j),t}(u_{(j),t}-1)$  using the asymptotics above, it cancels in the ratio $\frac{A_{(i),t}^1}{A_{(i),t}^2}$. Therefore the ratio converges to a finite constant given by
$$\frac{A_{(i),t}^1}{A_{(i),t}^2}\ \longrightarrow
\frac{
p_{(i),t}^1p_t^{cn}p_{(j),t}^1 \prod_{k\neq i,j} C_{kt}^{(i),1}
+
p_{(i),t}^0(1-p_t^{cn})p_{(j),t}^0 \prod_{k\neq i,j} C_{kt}^{(i),0}
}{
(1-p_{(i),t}^1)p_t^{cn}p_{(j),t}^1 \prod_{k\neq i,j} C_{kt}^{(i),1}
+
(1-p_{(i),t}^0)(1-p_t^{cn})p_{(j),t}^0 \prod_{k\neq i,j} C_{kt}^{(i),0}
},$$
Consequently, although the risk-neutral agent's optimal position diverges, the equilibrium strategy of a risk-averse competitor remains well defined. In particular,
\begin{align*}
\lim_{\gamma_{(j)}\to0}\pi_{(i),t}^*
=\frac{\ln\left(
\frac{1-q_{(i),n}}{q_{(i),n}}
\frac{
p_{(i),t}^1p_t^{cn}p_{(j),t}^1 \prod_{k\neq i,j} C_{kt}^{(i),1}
+
p_{(i),t}^0(1-p_t^{cn})p_{(j),t}^0 \prod_{k\neq i,j} C_{kt}^{(i),0}
}{
(1-p_{(i),t}^1)p_t^{cn}p_{(j),t}^1 \prod_{k\neq i,j} C_{kt}^{(i),1}
+
(1-p_{(i),t}^0)(1-p_t^{cn})p_{(j),t}^0 \prod_{k\neq i,j} C_{kt}^{(i),0}
}
\right)}{\gamma_{(i)}\left(1-\frac{\theta_{(i)}}{N}\right)\big(u_{(i),n}-d_{(i),n}\big)}
.    
\end{align*}
Hence, while the strategy of a risk-neutral agent explodes, her influence on risk-averse competitors remains bounded, and their equilibrium strategies converge to finite limits.

\subsection{Homogeneous multi-agent game}
To obtain an explicit solution with multiple agents, we make the further assumption that agents are homogeneous, i.e., the parameter values of every agent are identical across the network. 
In this setting, the parameters $p_{(i),t}^1$, $p_{(i),t}^0$, $\gamma_{(i)}$, $\theta_{(i)}$, $u_{(i),t}$, $d_{(i),t}$, $q_{(i),t}$, $A_{(i),t}^{1}$, $A_{(i),t}^{2}$, $B_{(i),t}^{1}$, $B_{(i),t}^{2}$, $C_{jt}^{(i),1}$ and $C_{jt}^{(i),0}$ for $i,j=1,2,\dots,N$ do not depend on $i,j$ and can thus be abbreviated to $p_t^{1}$, $p_t^{0}$, $\gamma$, $\theta$, $u_t$, $d_t$, $q_t$, $A_t^{1}$, $A_t^{2}$, $B_t^{1}$, $B_t^{2}$, $C_t^{1}$ and $C_t^{0}$.

In an $N$-agent homogeneous model, the optimization problem faced by any agent is symmetric. 
In this setting, we focus on finding a symmetric equilibrium where all agents follow the same strategy.
Thus, the problem studied reduces to a game where the trading policies executed by all $N$ agents are identical. 
We refer the reader to \citep[p.~258]{binmore2007playing} and \cite{hefti2017equilibria} for more detailed discussion on symmetric equilibria in symmetric games.
% {\ms Is this a conjecture? Or can we point to a reference to back up this claim?}
% {\color{blue}Yuwei's further explanation: There can be symmetric equilibrium and asymmetric equilibrium in a symmetric game, and what we want to find in this subsection is the symmetric equilibrium.}
Let $\pi_t^{*}$ denote the equilibrium strategy of any agent in the group  of $N$ homogeneous agents for time period $[t-1,t), t\in \mathbb{N}$. 
By Theorem \ref{Thm:RelativeForwardNAgents}, the next step can be simplified from solving a system of $N$ equations of the form (\ref{OptimalStrategyN1})
to a single equation given by
\begin{align}\label{OptimalStrategyHomoN}
    \pi_t^{*}=\frac{1}{\gamma (1-\frac{\theta}{N})(u_t-d_t)}\ln \frac{(1-q_t)A_t^{1}}{q_tA_t^{2}}.
\end{align}
Note that the unknown $\pi_t^{*}$ appears on both sides of the equation, with $A^{1}_t$ and $A^{2}_t$ given by
\begin{align*}
A^{1}_t&=p_t^{1}p_t^{cn}\left(C_{t}^{1}\right)^{N-1}+p_t^{0}(1-p_t^{cn})\left(C_{t}^{0}\right)^{N-1},
\end{align*}
\begin{align*}
A^{2}_t&=(1-p_t^{1})p_t^{cn}\left(C_{t}^{1}\right)^{N-1}+(1-p_t^{0})(1-p_t^{cn})\left(C_{t}^{0}\right)^{N-1},
\end{align*}
where 
    $C_{t}^{1}=p_t^{1}e^{\gamma \frac{\theta}{N}\pi^{*}_t(u_t-1)}+(1-p_t^{1})e^{\gamma\frac{\theta}{N}\pi^{*}_t(d_t-1)}$ and $
    C_{t}^{0}=p_t^{0}e^{\gamma\frac{\theta}{N}\pi^{*}_t(u_t-1)}+(1-p_t^{0})e^{\gamma \frac{\theta}{N}\pi^{*}_t(d_t-1)}.$
Using the change of variables $y_t=e^{(1-\frac{\theta}{N})\gamma (u_t-d_t)\pi_t^*}$, equation (\ref{OptimalStrategyHomoN}) can be transformed to 
\begin{align}\label{HomoN}
y_t=\frac{1-q_t}{q_t}\frac{p_t^{1}p_t^{cn}\left(\frac{p_t^{1}y_t^{\frac{\theta}{N-\theta}}+(1-p_t^{1})}{p_t^{0}y_t^{\frac{\theta}{N-\theta}}+(1-p_t^{0})}\right)^{N-1}+p_t^{0}(1-p_t^{cn})}{(1-p_t^{1})p_t^{cn}\left(\frac{p_t^{1}y_t^{\frac{\theta}{N-\theta}}+(1-p_t^{1})}{p_t^{0}y_t^{\frac{\theta}{N-\theta}}+(1-p_t^{0})}\right)^{N-1}+(1-p_t^{0})(1-p_t^{cn})}, \quad y_t>0.
\end{align}
Let $f(y_t)=\frac{1-q_t}{q_t}\frac{p_t^{1}p_t^{cn}\left(\frac{p_t^{1}y_t^{\frac{\theta}{N-\theta}}+(1-p_t^{1})}{p_t^{0}y_t^{\frac{\theta}{N-\theta}}+(1-p_t^{0})}\right)^{N-1}+p_t^{0}(1-p_t^{cn})}{(1-p_t^{1})p_t^{cn}\left(\frac{p_t^{1}y_t^{\frac{\theta}{N-\theta}}+(1-p_t^{1})}{p_t^{0}y_t^{\frac{\theta}{N-\theta}}+(1-p_t^{0})}\right)^{N-1}+(1-p_t^{0})(1-p_t^{cn})}$.
We show the existence and uniqueness of the solution to the associated fixed point equation (\ref{HomoN}) in the following theorem. 
\begin{theorem}\label{Thm:HomoNAgentsGame}
Given $\gamma>0$, $\theta \in [0,1]$ and $0<q_t, p_t^{1}, p_t^{0}, p_t^{cn}<1$, equation (\ref{HomoN}) admits a unique positive solution $y_t^*$, $t\in \mathbb{N}$. 
Furthermore, there exists a forward Nash equilibrium of the homogeneous $N$-agent game for period $[t-1,t)$ given by
\begin{align}\label{HomoNStrategy}
    \pi_t^*=\frac{1}{\gamma (1-\frac{\theta}{N})(u_t-d_t)}\ln y_t^*.
\end{align}

Accordingly, the process
\[
U_t(\widetilde{x})=-\frac{e^{-\gamma \widetilde{x}}}{\prod_{n=1}^{t}
\Big(
(A_n^{1})^{q_n}(A_n^{2})^{1-q_n}
\big(
(\tfrac{q_n}{1-q_n})^{1-q_n}
+
(\tfrac{1-q_n}{q_n})^{q_n}
\big)
\Big)}
\]
with $U_0(\widetilde{x})=-e^{-\gamma\widetilde{x}}$, is a PRFPP for each agent.
\end{theorem}

%{\ms Is there a typo in \eqref{HomoNStrategy}?}

\begin{remark}
    Theorem 2 guarantees the existence and uniqueness of the solution, thereby ensuring that the numerical computation is well posed. Numerically, the fixed point can be obtained by applying standard root-finding methods, such as Newton's method, to the equation $f(y_t)-y_t=0$. 
    % Efficient algorithms, such as Newton's method, may be employed, and the computation can be readily implemented using standard off-the-shelf solvers available in MATLAB.
\end{remark}

The equilibrium strategy $\pi_t^*$ depends on the risk aversion $\gamma$, the number of agents $N$, the competition weight $\theta$, the possible returns $u_t, d_t$,  and, implicitly through $y_t^*$, on the market parameters, $p_t^{cn}$, $p_t^{1}$ and $p_t^{0}$.
While it is difficult to perform an analytical sensitivity analysis for all results because of the implicit definition of $y_t^*$, it can still be deduced that
%{\color{OliveGreen}
%\begin{corollary}\label{Cor:Homo}
%In the homogeneous $N$-agent game over period $[t-1,t)$, $\vert\pi_t^*\vert$ decreases as $\gamma$ increases, while $\pi_t^*$ increases if both $u_t$ and $d_t$ increase but with constant distance $u_t-d_t$.
%{\ms Do we need that $u_t-d_t$ remains constant? It seems natural that $\pi_t^*$ increases if $u_t$ and $d_t$ increase without this requirement. If not, is there a counterexample? }
%{\color{blue}Yes, we need this requirement, because although increase in $u_t$ or $d_t$ will result in an increase in $y_t^{*}$, there is still a factor $\frac{1}{\gamma (1-\frac{\theta}{N})(u_t-d_t)}$ which decreases as $u$ increases.}
%{\ms Do we have an example where $\pi_t^*$ decreases when $u_t$ or $d_t$ increase?}
%\end{corollary}
as the agent and her competitors become more risk-averse as a whole, her absolute position on the risky asset decreases. 
%When the assets invested by the whole population can all achieve better performance in a certain way with constant $u_t-d_t$, she tends to increase her risky investment.
%Indeed, we know that higher $u_t$ and $d_t$ will result in an increase in $\frac{1-q_t}{q_t}$ and hence increase in $f(0), f(+\infty)$ and $f^{'}(y_t)$, then the solution $y_t^{*}$ to $f(y_t)=h(y_t)$ or equivalently equation $(\ref{HomoN})$ becomes larger, and $\pi_t^{*}$ given by $(\ref{HomoNStrategy})$ %increases.}
This observation also holds true in the literature on investment under relative performance criteria for continuous-time setting, see, for example, \cite{lacker2019mean}, \cite{anthropelos2022competition} and \cite{dos2021forward, dos2022forward}. 
%{\ms The dependence on  $u_t$ and $d_t$ is specific to our discrete-time setting. What are the analogous results in the literature for the continuous-time setting you refer to? }
%{\color{OliveGreen}where the drift coefficient of the diffusion process describing asset price therein corresponds to the performance of stock return in our discrete-time setting with two possible price levels  $u_t$ and $d_t$.} 
On the other hand, when all assets invested by the entire population are anticipated to achieve better performance with a constant distance $u_t-d_t$, the agent tends to increase her risky investment.
It is worth noting that in \cite{lacker2019mean} and \cite{dos2021forward, dos2022forward}, market parameters and, thus, optimal strategies are constant over time.
However, our model herein allows for dynamic updating of parameters, resulting in a time-varying forward Nash equilibrium strategy.

% In addition, the wealth-independence in the forward Nash equilibrium for CARA preferences exists not only in classical portfolio selection problems, but also the frameworks with competition concerns or forward-in-time feature including our model.

As we will see in the next section when considering the mean field game (MFG),  the $N$-agent game with homogeneous agents converges to the MFG with respect to both the associated fixed point equation and the equilibrium strategy as $N\rightarrow +\infty$.
%In other words, we are able to make the two parts of the paper connected. 

\subsection{Heterogeneous 2-agent game}
In the heterogeneous case, we limit ourselves to the game of $2$ agents. This allows us to show the existence of an equilibrium strategy. 
It still remains an open and challenging problem to investigate the system of equations (\ref{OptimalStrategyN1}) for an arbitrary number of heterogeneous agents. 

We label agents with superscript $(1), (2)$ to distinguish between the first and the second agent. Define $p_t^{1,1}=p_t^{cn}p_{(2),t}^1p_{(1),t}^1+(1-p_t^{cn})p_{(2),t}^0p_{(1),t}^0$, 
$p_t^{0,1}=p_t^{cn}p_{(2),t}^1(1-p_{(1),t}^1)+(1-p_t^{cn})p_{(2),t}^0(1-p_{(1),t}^0)$, 
$p_t^{1,0}=p_t^{cn}(1-p_{(2),t}^1)p_{(1),t}^1+(1-p_t^{cn})(1-p_{(2),t}^0)p_{(1),t}^0$,
and $p_t^{0,0}=p_t^{cn}(1-p_{(2),t}^1)(1-p_{(1),t}^1)+(1-p_t^{cn})(1-p_{(2),t}^0)(1-p_{(1),t}^0)$.
We first present the following auxiliary result.
\begin{lemma}\label{ExistenceoFor2agents}
Given $\gamma_{(k)}>0$, $\theta_{(k)} \in [0,1]$ and $0<q_{(k),t}, p_{(k),t}^1, p_{(k),t}^0, p_t^{cn}<1$ for $k=1,2$, there exists an almost surely positive solution $y_t^{*}\in \mathcal{F}_{t-1}$ to the equation 
\begin{align}\label{EqnOfUnknowny}
    y_t^{\frac{(2-\theta_{(1)})\gamma_{(1)}}{\theta_{(2)}\gamma_{(2)}}}=\frac{(1-q_{(1),t})\left(p_t^{1,1}\left(\frac{(1-q_{(2),t})(p_t^{1,1}y_t+p_t^{0,1}))}{q_{(2),t}(p_t^{1,0}y_t+p_t^{0,0})}\right)^{\frac{\gamma_{(1)}\theta_{(1)}}{\gamma_{(2)}(2-\theta_{(2)})}}+p_t^{1,0}\right)}{q_{(1),t}\left(p_t^{0,1}\left(\frac{(1-q_{(2),t})(p_t^{1,1}y_t+p_t^{0,1})}{q_{(2),t}(p_t^{1,0}y_t+p_t^{0,0})}\right)^{\frac{\gamma_{(1)}\theta_{(1)}}{\gamma_{(2)}(2-\theta_{(2)})}}+p_t^{0,0}\right)}.
\end{align}
Furthermore, $y_t^*<\left(\frac{p_t^{1,1}(1-q_{(1),t})}{p_t^{0,1}q_{(1),t}}\right)^{\frac{\theta_{(2)}\gamma_{(2)}}{(2-\theta_{(1)})\gamma_{(1)}}}$ almost surely.
\end{lemma}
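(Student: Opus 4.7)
The plan is to reformulate equation~(\ref{EqnOfUnknowny}) as $\Phi(y)=0$ on $(0,\infty)$ and invoke the intermediate value theorem. Set $\alpha:=\frac{(2-\theta^{(1)})\gamma^{(1)}}{\theta^{(2)}\gamma^{(2)}}>0$ and $\beta:=\frac{\gamma^{(1)}\theta^{(1)}}{\gamma^{(2)}(2-\theta^{(2)})}\geq 0$, and define
\begin{align*}
H(y):=\frac{(1-q_t^{(2)})(p_t^{1,1}y+p_t^{0,1})}{q_t^{(2)}(p_t^{1,0}y+p_t^{0,0})}, \qquad F(y):=\frac{(1-q_t^{(1)})\bigl(p_t^{1,1}H(y)^{\beta}+p_t^{1,0}\bigr)}{q_t^{(1)}\bigl(p_t^{0,1}H(y)^{\beta}+p_t^{0,0}\bigr)},
\end{align*}
so that (\ref{EqnOfUnknowny}) reads $\Phi(y):=y^{\alpha}-F(y)=0$. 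Each $p_t^{i,j}$ is a strictly positive probability, so both denominators appearing in $H$ and $F$ remain strictly positive for $y\ge 0$; hence $\Phi$ is continuous on $[0,\infty)$.

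Next I evaluate $\Phi$ at two endpoints. At $y=0$ it is immediate that $\Phi(0)=-F(0)<0$. Setting $y_{\max}:=\left(\frac{p_t^{1,1}(1-q_t^{(1)})}{p_t^{0,1}q_t^{(1)}}\right)^{1/\alpha}$, so that $y_{\max}^{\alpha}=\frac{p_t^{1,1}(1-q_t^{(1)})}{p_t^{0,1}q_t^{(1)}}$, the inequality $\Phi(y_{\max})>0$ reduces after clearing the (positive) denominators to $p_t^{1,1}\bigl(p_t^{0,1}H(y_{\max})^{\beta}+p_t^{0,0}\bigr)>p_t^{0,1}\bigl(p_t^{1,1}H(y_{\max})^{\beta}+p_t^{1,0}\bigr)$; the $H(y_{\max})^{\beta}$ cross terms cancel and this collapses to $p_t^{1,1}p_t^{0,0}>p_t^{0,1}p_t^{1,0}$, independently of the value of $H(y_{\max})$.

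The crux is therefore this cross-ratio inequality, which is exactly the statement that $B_t^{(1)}$ and $B_t^{(2)}$ are positively correlated under the conditional law given $\mathcal{F}_{t-1}$. Substituting the definitions of the four $p_t^{i,j}$'s and cancelling the $(p_t^{cn})^2$ and $(1-p_t^{cn})^2$ contributions produces the clean factorisation
\begin{align*}
p_t^{1,1}p_t^{0,0}-p_t^{0,1}p_t^{1,0}=p_t^{cn}(1-p_t^{cn})(p_t^{(1),1}-p_t^{(1),0})(p_t^{(2),1}-p_t^{(2),0}),
\end{align*}
which is almost surely strictly positive because $0<p_t^{cn}<1$ and $p_t^{(i),1}>p_t^{(i),0}$ a.s.\ for $i=1,2$ by the standing assumption. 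The intermediate value theorem applied to $\Phi$ on $[0,y_{\max}]$ then yields a root $y_t^*\in(0,y_{\max})$, delivering both the existence claim and the strict upper bound.

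Finally, to obtain an $\mathcal{F}_{t-1}$-measurable version of $y_t^*$, I would set $y_t^*(\omega):=\inf\{y\in(0,y_{\max}(\omega)]:\Phi(y,\omega)\ge 0\}$. All coefficients of $\Phi$ are $\mathcal{F}_{t-1}$-measurable and $\Phi(\cdot,\omega)$ is continuous in $y$, so this first-passage level is $\mathcal{F}_{t-1}$-measurable by a standard measurable-selection argument, with $0<y_t^*<y_{\max}$ almost surely by the sign analysis and continuity. The main obstacle I anticipate is spotting the algebraic simplification that leads to the factorised identity above; once that clean form is recognised, the remainder is a routine IVT plus measurable-selection argument.
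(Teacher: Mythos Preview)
Your argument is correct and in fact cleaner than the paper's. The paper first rearranges \eqref{EqnOfUnknowny} into the form $L(y)=R(y)$ with
\[
L(y)=H(y)^{\beta},\qquad R(y)=\frac{\tfrac{1-q_t^{(1)}}{q_t^{(1)}}p_t^{1,0}-p_t^{0,0}\,y^{\alpha}}{p_t^{0,1}\,y^{\alpha}-\tfrac{1-q_t^{(1)}}{q_t^{(1)}}p_t^{1,1}},
\]
notes that $R$ has a vertical asymptote precisely at $y_{\max}$, shows that both $L$ and $R$ are increasing (this is where the paper uses the cross-ratio identity $p_t^{1,1}p_t^{0,0}-p_t^{0,1}p_t^{1,0}>0$), and then argues that since $L>0$ while $R$ runs from a negative value at $0$ to $+\infty$ as $y\uparrow y_{\max}$, the curves must cross in $(0,y_{\max})$. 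You bypass the rearrangement, the singularity, and the monotonicity analysis entirely: evaluating $\Phi=y^{\alpha}-F(y)$ directly at $0$ and at $y_{\max}$ already gives the sign change, because the cross terms in $H^{\beta}$ cancel and the comparison collapses to the very same inequality $p_t^{1,1}p_t^{0,0}>p_t^{0,1}p_t^{1,0}$. Your factorisation of this quantity matches the paper's. What your route buys is brevity and a built-in proof of the upper bound (it is your right IVT endpoint); what the paper's route buys is the additional qualitative picture of how $L$ and $R$ behave globally, which it later illustrates with a figure. Your treatment of $\mathcal{F}_{t-1}$-measurability via the first-passage selection is also more explicit than the paper's, which does not address this point.
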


We next present one of our main results showing how to construct a forward Nash equilibrium strategy for two heterogeneous agents.

%\begin{remark}
%The economic meaning of $p_t^0, p_t^1, p_t^2$ and $p_t^3$ are joint probabilities of stock returns, $R_{(i),t}$ and %$R_{(j),t}$, specialized by agent $i$ and agent $j$. Moreover, here we do not prove the uniqueness as we know the %forward utility is not necessarily unique in general. Our numeical results that the solution to equation %(\ref{EqnOfUnknowny}) is indeed unique from a large set of data.
%\end{remark}

\begin{theorem}\label{Thm:2AgentsGame}
Let $y_t^{*}>0$ denote a positive solution of (\ref{EqnOfUnknowny}). 
There exists a forward Nash equilibrium of the $2$-agent game for period $[t-1,t)$, given by
\begin{align}\label{OptimalStrategyj}
    \pi_{(1),t}^{*}=\frac{2\ln y_t^*}{(u_{(1),t}-d_{(1),t})\gamma_{(2)}\theta_{(2)}},
\end{align}
\begin{align}\label{OptimalStrategyi}
    \pi_{(2),t}^{*}=\frac{1}{\gamma_{(2)} (1-\frac{\theta_{(2)}}{2})(u_{(2),t}-d_{(2),t})}\ln \frac{(1-q_{(2),t})\left(p_t^{1,1}y_t^* + p_t^{0,1}\right)}{q_{(2),t}\left(p_t^{1,0}y_t^*+p_t^{0,0}\right)}.
\end{align}
\end{theorem}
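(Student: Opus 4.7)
The strategy is to show that the pair of strategies in (\ref{OptimalStrategyj})--(\ref{OptimalStrategyi}), parametrised by the scalar $y_t^*$ supplied by Lemma~\ref{ExistenceoFor2agents}, constitutes a fixed point of the best-response map derived in Theorem~\ref{Thm:RelativeForwardNAgents}. The PRFPPs for both agents are then obtained by plugging this equilibrium pair into formula (\ref{U_tNAgent}) and iterating across periods, with measurability being inherited from the $\mathcal{F}_{t-1}$-measurability of $y_t^*$.

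First I would specialise Theorem~\ref{Thm:RelativeForwardNAgents} to $N=2$. For each $i\in\{1,2\}$ and competitor $j\neq i$, this yields the best-response formula
\begin{align*}
\pi_t^{(i),*}=\frac{2}{\gamma^{(i)}(2-\theta^{(i)})(u_t^{(i)}-d_t^{(i)})}\ln\frac{(1-q_t^{(i)})A_t^{1(i)}}{q_t^{(i)}A_t^{2(i)}},
\end{align*}
where $A_t^{1(i)},A_t^{2(i)}$ depend on the competitor's strategy $\pi_t^{(j),*}$ through the quantities $C_{jt}^{(i),1},C_{jt}^{(i),0}$ defined in the excerpt. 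The ansatz $y_t^*:=\exp\!\bigl(\tfrac{1}{2}\gamma^{(2)}\theta^{(2)}(u_t^{(1)}-d_t^{(1)})\pi_t^{(1),*}\bigr)$ identifies (\ref{OptimalStrategyj}) with the scalar parametrisation.

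Next I would verify that agent~2's strategy~(\ref{OptimalStrategyi}) is the best response to agent~1's strategy~(\ref{OptimalStrategyj}). Substituting $\pi_t^{(1),*}$ into $C_{1t}^{(2),k}$ and using $1-q_t^{(1)}=(u_t^{(1)}-1)/(u_t^{(1)}-d_t^{(1)})$ and $-q_t^{(1)}=(d_t^{(1)}-1)/(u_t^{(1)}-d_t^{(1)})$, the exponentials collapse to powers of $y_t^*$. After factoring out the common term $y_t^{*-q_t^{(1)}}$ we obtain $C_{1t}^{(2),k}=y_t^{*-q_t^{(1)}}\bigl(p_t^{(1),k}y_t^*+1-p_t^{(1),k}\bigr)$ for $k\in\{0,1\}$. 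The common factor cancels in the ratio $A_t^{1(2)}/A_t^{2(2)}$, and expanding yields exactly $(p_t^{1,1}y_t^*+p_t^{0,1})/(p_t^{1,0}y_t^*+p_t^{0,0})$. Plugging this into the best-response expression reproduces (\ref{OptimalStrategyi}) for any positive $y_t^*$, with no further restriction imposed.

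The main step is then to verify that agent~1's strategy~(\ref{OptimalStrategyj}) is the best response to (\ref{OptimalStrategyi}) precisely when $y_t^*$ solves (\ref{EqnOfUnknowny}). Setting $w_t:=\tfrac{(1-q_t^{(2)})(p_t^{1,1}y_t^*+p_t^{0,1})}{q_t^{(2)}(p_t^{1,0}y_t^*+p_t^{0,0})}$ and $K:=\gamma^{(1)}\theta^{(1)}/(\gamma^{(2)}(2-\theta^{(2)}))$, the same CARA-driven collapse produces $C_{2t}^{(1),k}=w_t^{-Kq_t^{(2)}}\bigl(p_t^{(2),k}w_t^{K}+1-p_t^{(2),k}\bigr)$, so that
\begin{align*}
\frac{A_t^{1(1)}}{A_t^{2(1)}}=\frac{p_t^{1,1}w_t^{K}+p_t^{1,0}}{p_t^{0,1}w_t^{K}+p_t^{0,0}}.
\end{align*}
Substituting into the best-response formula for agent~1, equating with (\ref{OptimalStrategyj}), and taking logarithms gives exactly the fixed-point equation (\ref{EqnOfUnknowny}). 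Lemma~\ref{ExistenceoFor2agents} delivers an almost surely positive, $\mathcal{F}_{t-1}$-measurable solution $y_t^*$, which makes the candidate strategies admissible and demonstrates that they form a forward Nash equilibrium in the sense of Definition~\ref{def:RelativeFrowardPreferences}.

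The chief obstacle is purely algebraic: one has to handle nested exponentials of the competitor's strategy appearing inside $C_{jt}^{(i),\cdot}$, and recognise that the CARA structure together with the identities $q_t^{(i)}=(1-d_t^{(i)})/(u_t^{(i)}-d_t^{(i)})$ reduces those exponentials to rational functions in the single scalar $y_t^*$. Once this structural observation is made, both verifications are direct computations, and the remaining task is bookkeeping to match the joint probabilities $p_t^{k_1,k_2}$ obtained by summing over common-noise regimes.
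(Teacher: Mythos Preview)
Your proposal is correct and follows essentially the same approach as the paper: specialise the best-response formula of Theorem~\ref{Thm:RelativeForwardNAgents} to $N=2$, exploit the CARA structure so that the competitor's exponentials collapse to powers of a single scalar, and reduce the coupled system to the fixed-point equation~(\ref{EqnOfUnknowny}). The paper arrives there by substituting one best response into the other and then changing variables to $y=e^{(A_1-A_2)x}$, whereas you parametrise the candidate pair by $y_t^*$ from the outset and verify the two best-response identities separately; the computations are identical, only the order of presentation differs.
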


% {\ms Formulate the below as a corollary. Informally, when the stock of the competitor gets better, an agent with relative forward performance preferences invests more. Has this been observed in the literature on continuous-time forward performance preferences? }
% {\color{blue} Yes, this can also be observed in previous work on continuous-time forward performance preferences.}

%\hl{Under the ansatz that the forward utility is of exponential type, setting $\gamma_{(1)} = 0$ or $\gamma_{(2)} = 0$ leads the system of equations }\colorbox{yellow!90}{\eqref{OptimalStrategyN1}}\hl{ characterizing the two-agent equilibrium over $[t-1,t)$, equivalently equation }\colorbox{yellow!90}{\eqref{EqnOfUnknowny}}\hl{, to admit no solution in the limit of vanishing risk aversion. This is not surprising, since the expected utility maximization problem for a risk-neutral agent is not well-posed.}

We close the section with a corollary that has interesting implications for investment behavior. 
\begin{corollary}\label{Cor:pi2Heter}
    In the heterogeneous 2-agent game over period $[t-1,t)$, $\pi_{(1),t}^{*}$ increases as $u_{(2),t}$ or $d_{(2),t}$ increases. Furthermore, if $p_{(1),t}^1<\frac{1}{2}<q_{(1),t}$, then $y_t^*<1$ and $\pi_{(1),t}^{*}<0$ for all parameter values of the opponent.
    %{\ms The furthermore statement is not precise. }
\end{corollary}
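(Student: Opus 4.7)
The two assertions will follow from a careful analysis of the implicit equation \eqref{EqnOfUnknowny} combined with the explicit form \eqref{OptimalStrategyj}. Since $\pi_t^{(1),*}$ is, up to a positive multiplicative constant, equal to $\ln y_t^*$, its signs and monotonicities are controlled by those of $y_t^*$. Moreover, the parameters $u_t^{(2)}$ and $d_t^{(2)}$ enter \eqref{EqnOfUnknowny} only through the risk-neutral probability $q_t^{(2)}=(1-d_t^{(2)})/(u_t^{(2)}-d_t^{(2)})$, and one computes $\partial q_t^{(2)}/\partial u_t^{(2)}=-(1-d_t^{(2)})/(u_t^{(2)}-d_t^{(2)})^2<0$ and $\partial q_t^{(2)}/\partial d_t^{(2)}=(1-u_t^{(2)})/(u_t^{(2)}-d_t^{(2)})^2<0$ since $u_t^{(2)}>1$. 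Thus Part 1 reduces to showing that $y_t^*$ is strictly decreasing in $q_t^{(2)}$.

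For this reduction I would rewrite \eqref{EqnOfUnknowny} as $y^\alpha=\phi(y,q_t^{(2)})$ with $\alpha=(2-\theta^{(1)})\gamma^{(1)}/(\theta^{(2)}\gamma^{(2)})>0$ and $\phi$ denoting the right-hand side. A direct computation using the probabilistic meaning $p_t^{i,j}=\mathbb{P}[B_t^{(1)}=i,B_t^{(2)}=j\mid\mathcal{F}_{t-1}]$ and conditional independence of $B_t^{(1)}, B_t^{(2)}$ given the common noise gives the determinant identity $p_t^{1,1}p_t^{0,0}-p_t^{0,1}p_t^{1,0}=(p_t^{(1),1}-p_t^{(1),0})(p_t^{(2),1}-p_t^{(2),0})p_t^{cn}(1-p_t^{cn})>0$. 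Differentiating the inner fraction (which gives $Z$ in \eqref{EqnOfUnknowny}) and the outer fraction defining $\phi$ then yields that $\phi$ is strictly increasing in $y$ and, via the factor $(1-q_t^{(2)})/q_t^{(2)}$ in $Z$, strictly decreasing in $q_t^{(2)}$. Starting from $y_0=0$ and iterating $y_{n+1}=\phi(y_n,q_t^{(2)})^{1/\alpha}$ produces an increasing sequence converging to the smallest positive fixed point; since the iteration map is itself pointwise monotone in $q_t^{(2)}$, the resulting fixed point is strictly decreasing in $q_t^{(2)}$, completing Part 1.

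For Part 2, I invoke directly the upper bound $y_t^*<M:=\bigl(p_t^{1,1}(1-q_t^{(1)})/(p_t^{0,1}q_t^{(1)})\bigr)^{\theta^{(2)}\gamma^{(2)}/((2-\theta^{(1)})\gamma^{(1)})}$ established in Lemma \ref{ExistenceoFor2agents}. The hypothesis $q_t^{(1)}>1/2$ immediately gives $(1-q_t^{(1)})/q_t^{(1)}<1$. The hypothesis $p_t^{(1),1}<1/2$, combined with the standing assumption $p_t^{(1),0}<p_t^{(1),1}$, yields $p_t^{(1),0}<p_t^{(1),1}<1/2$, whence the algebraic identity $p_t^{0,1}-p_t^{1,1}=p_t^{cn}p_t^{(2),1}(1-2p_t^{(1),1})+(1-p_t^{cn})p_t^{(2),0}(1-2p_t^{(1),0})>0$ shows $p_t^{1,1}/p_t^{0,1}<1$. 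The base of $M$ is therefore strictly less than one, and since the exponent is positive, $M<1$; thus $y_t^*<M<1$ and $\pi_t^{(1),*}<0$ by \eqref{OptimalStrategyj}.

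The main delicate point is the comparative-statics argument of Part 1: Lemma \ref{ExistenceoFor2agents} asserts existence but not uniqueness of the positive solution, so one must identify $y_t^*$ with the smallest positive fixed point in order for the monotone iteration argument to apply cleanly. The cleanest way to close this gap would be to establish uniqueness of the positive solution in $(0,M)$—note that $\phi(y,q_t^{(2)})<M^\alpha$ for all finite $y$ since the ratio $(p_t^{1,1}W+p_t^{1,0})/(p_t^{0,1}W+p_t^{0,0})$ is strictly increasing with limit $p_t^{1,1}/p_t^{0,1}$ as $W\to\infty$—which reduces to showing $\alpha y^{\alpha-1}>\partial_y\phi(y,q_t^{(2)})$ at any crossing. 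Verifying this inequality, or alternatively arguing that the iterative construction in the proof of Lemma \ref{ExistenceoFor2agents} selects the same branch throughout the comparative-statics perturbation, will be the main technical hurdle.
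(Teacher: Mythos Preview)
The paper does not supply a proof of this corollary, so there is no ``paper's approach'' to compare against; your proposal stands on its own and is largely sound.

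Your treatment of Part 2 is clean and complete: invoking the upper bound $y_t^*<M$ from Lemma \ref{ExistenceoFor2agents} and then checking that the base of $M$ is strictly below one under the hypotheses $p_t^{(1),1}<\tfrac12<q_t^{(1)}$ is exactly the right idea, and your computation $p_t^{0,1}-p_t^{1,1}=p_t^{cn}p_t^{(2),1}(1-2p_t^{(1),1})+(1-p_t^{cn})p_t^{(2),0}(1-2p_t^{(1),0})$ is correct. Note, moreover, that the bound $y_t^*<M$ from Lemma \ref{ExistenceoFor2agents} applies to \emph{every} positive solution of \eqref{EqnOfUnknowny}, so Part 2 holds regardless of which fixed point one selects; no uniqueness is needed here.

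For Part 1 your reduction to monotonicity in $q_t^{(2)}$ is correct, as is the verification that $\phi$ is increasing in $y$ and decreasing in $q_t^{(2)}$ via the determinant identity $p_t^{1,1}p_t^{0,0}-p_t^{0,1}p_t^{1,0}>0$ (already established in the proof of Lemma \ref{ExistenceoFor2agents}). The monotone-iteration comparative statics you outline is valid for the \emph{smallest} positive fixed point; the residual gap you flag---that Lemma \ref{ExistenceoFor2agents} does not assert uniqueness---is a gap in the paper itself, not in your argument. An alternative route that sidesteps branch selection is to work directly with the $L(y)=R(y)$ reformulation from the proof of Lemma \ref{ExistenceoFor2agents}: on $(0,M)$ one has $L(0^+)>0>R(0^+)$ and $R(y)\to+\infty$ as $y\uparrow M$ while $L$ stays bounded, so $L-R$ changes sign; since $q_t^{(2)}$ enters only $L$ and decreasing $q_t^{(2)}$ shifts $L$ strictly upward, the zero set of $L-R$ moves to the right. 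Either way, fully closing the argument requires either uniqueness on $(0,M)$ or an explicit selection rule, which the paper leaves implicit.
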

The agent increases her risky investment if the stock held by her opponent has higher price levels $u_{(2),t}$ or $d_{(2),t}$. 
In other words, if the stock traded by her competitor becomes more attractive, the agent increases the investment in her own stock even if the characteristics of her own stock remain the same. Observe further from Corollary \ref{Cor:pi2Heter} that the agent is expected to sell her stock short when the price levels and transition probability for an upward move of her stock are not desirable independent of the characteristics of the stocks traded by her competitor. 

%{\ms A classical expected utility investor without relative performance concern longs a stock when the expected return is larger one, does not invest when it is equal one, and shorts the stock when it is smaller one. It's interesting to compare the below with this benchmark.}

Recall that in the classical expected utility framework, an investor without relative performance concern always longs a stock when the expected return exceeds one, abstains from investing when it equals one, and shorts the stock when it falls below one. 
Under PRFPPs, this holds no longer true. 
For example, consider the setting where $p_{(1),t}^1=0.6$, $p_{(1),t}^0=0.36$, $p_{(2),t}^1=0.46$, $p_{(2),t}^0=0.2$, $\gamma_{(1)}=\gamma_{(2)}=10$,  $p_t^{cn}=0.6$, $\theta_{(1)}=\theta_{(2)}=0.9$, $u_{(1),t}=u_{(2),t}=1.1$, and $d_{(1),t}=d_{(2),t}=0.9$.
For these parameters the expected excess return of $R^{(1)}$ is positive, but the investor shorts the stock, $\pi_{(1),t}^{*}=-0.0098<0$. 
Shorting a stock with positive expected excess return in the presence of relative performance concerns is an indirect effect. If the stock traded by the opponent has a negative expected excess return, an agent's  investment in the risky asset decreases and may even become negative.  
% However, we emphasize that if both $p_{(2),t}^1$ and $p_{(2),t}^0$ are significantly small, the stock becomes desirable to the opponent, as it presents a high probability of yielding positive returns through short selling. In this extreme case, the first agent would increase her risky investment in response to the higher average levels of population wealth.
The observation that relative performance concerns can decrease the investment in the risky asset, even to the point where an agent shorts a stock with a positive expected excess return, is interesting and seems to be new. 
This phenomenon occurs even in a backward setting with relative performance concerns, where the terminal utility and market model are specified ex ante for a fixed investment horizon, yet it has not been explicitly noted in the literature. For instance, \cite{lacker2020many} report that relative performance concerns typically drive agents to invest more in risky assets under CARA preferences. However, their findings rely on the assumption that the risky assets traded by all agents have positive expected excess returns, meaning the mechanism described does not apply.
A large empirical literature shows that fund managers adjust portfolio risk and active positions in response to relative performance concerns, career incentives, and performance-sensitive investor flows, see for example, \cite{brown1996tournaments}, \cite{chevalier1997risk} and \cite{chevalier1999career}).
However, to the best of our knowledge, there is no direct empirical evidence testing whether relative performance concerns can lead agents to short the market despite a positive expected excess return, which remains an interesting direction for future empirical or experimental research.

We remark that incorporating portfolio constraints, such as no-short-selling restrictions arising in many realistic market settings, remains an open problem in the discrete-time forward framework. Even without competition, constructing predictable forward performance processes under such control constraints is not yet well understood and is left for future research.

While the heterogeneous two-agent game admits an explicit analytical characterization, extending the analysis to a general heterogeneous N-agent setting quickly becomes intractable due to the nonlinear interaction of agents' strategies. Nevertheless, examining the N-agent case remains important for understanding how competition operates in larger populations. We therefore complement herein with a numerical study of the heterogeneous N-agent game, which allows us to illustrate how the equilibrium evolves as the number of competitors increases and to highlight qualitative patterns that are not visible in the two-agent setting. 

Consider an $N$-agent game in which agents $2,\dots,N$ are homogeneous, whereas agent $1$ differs from the rest and is referred to as the odd agent. As the $N-1$ homogeneous agents face identical optimization problems, they adopt the same equilibrium strategy. Figure \ref{figHeterN} illustrates the impact of population size on the equilibrium strategy of the odd agent and its deviation from the majority.
\begin{figure}[H]
\centering
  \includegraphics[width=10cm]{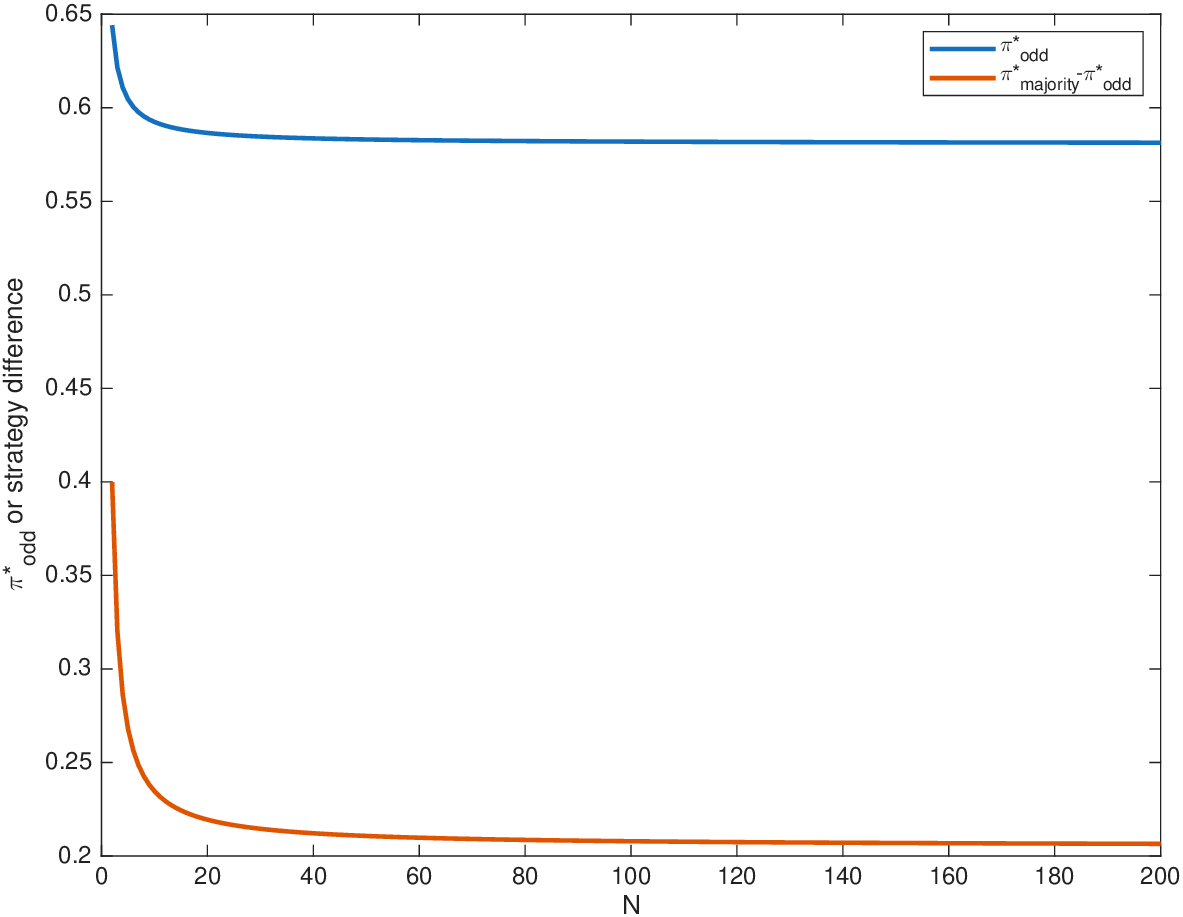}
  \caption{Heterogeneous $N$-Agent game: population effects\\
  {\footnotesize{\textit{Notes}. We fix the parameters of the odd agent as $(\gamma_O,\theta_O,u_O,d_O)=(2,0.2,1.4,0.8)$, who competes against $N-1$ homogeneous agents with parameters $(\gamma_M,\theta_M,u_M,d_M)=(3,0.5,1.2,0.9)$. We then vary the population size (N). The transition probabilities are set uniformly across agents as $(p^1,p^0,p^{cn})=(0.6,0.4,0.5)$.
}}}
  \label{figHeterN}
\end{figure}

We first find that the odd agent's equilibrium strategy decreases as the number of identical competitors increases. This can be understood from the structure of relative wealth, $\widetilde{X}_{(i),t}=\left(1-\frac{\theta_{(i)}}{N}\right)X_{(i),t}-\frac{\theta_{(i)}}{N}\sum_{k\neq i}X_{(k),t}.$ As the number of identical competitors grows, the average peer wealth against which performance is evaluated becomes less volatile. In particular, the odd agent's optimization problem is increasingly dominated by the first term, whose coefficient $1-\frac{\theta_{(i)}}{N}$ converges to one as $N$ increases, while the second term, which is responsible for inducing more aggressive risk taking in competitive environments, becomes less influential. Moreover, by the law of large numbers, the variance of the average peer wealth diminishes as $N$ grows, so that the benchmark can be regarded as asymptotically constant. Under CARA preferences, this implies that the odd agent's equilibrium strategy converges to the optimal strategy obtained in the absence of relative performance concerns. 
At the same time, across a wide range of parameter constellations, we consistently observe that the absolute difference between the equilibrium strategy of the odd agent and that of the identical competitors decreases as the population size increases. This pattern reflects an conformity effect: as the majority becomes larger, the odd agent is increasingly inclined to follow the prevailing investment behavior, gradually reducing deviations from the general market trend.

\section{The mean field game}\label{sec:MFG}
In this section, \textit{we assume that the market almost surely offers a positive expected excess return for the representative agent, i.e., $\E[R_t-1\vert \mathcal{F}^{MF}_{t-1}]>0$.} Recall that
the conditional risk-neutral probability of an upward move of the stock traded by the representative agent is given by
$q_t = \frac{1-d_t}{u_t-d_t}$, $t \in \mathbb{N}$.
The following auxiliary result will be used to prove the existence and uniqueness of the mean field equilibrium (MFE).
\begin{lemma}\label{Lemma:fixedpoint}
Let $F: \mathbb{N} \times \mathbb{R}\times \Omega \rightarrow \mathbb{R}$ denote a mapping given by
\begin{equation}\label{fixedpoint}
\begin{aligned}
    F(t,y,\omega):=\phi_1 ( \omega) + \E\left[\frac{({\Delta}_t^1-{\Delta}_t^0)}{\gamma(u_t-d_t)}\ln\left(1+\frac{(\frac{p_t^{0}}{p_t^{1}}-\frac{1-p_t^{0}}{1-p_t^{1}})}{\frac{p_t^{cn}}{1-p_t^{cn}}e^{\gamma \theta y}+\frac{1-p_t^{0}}{1-p_t^{0}}}\right)\Bigg\vert\mathcal{F}_{t-1}^{CN}\right] (\omega),
\end{aligned}
\end{equation}
where $\phi_1=\E\left[\frac{({\Delta}_t^1-{\Delta}_t^0)}{\gamma(u_t-d_t)}\ln\frac{(1-q_t)p_t^{1}}{q_t(1-p_t^{1})}\big\vert\mathcal{F}_{t-1}^{CN}\right]$, ${\Delta}_t^1=p_t^{1}(u_t-1)+(1-p_t^{1})(d_t-1)$, and ${\Delta}_t^0=p_t^{0}(u_t-1)+(1-p_t^{0})(d_t-1)$. 
Then for any fixed $t \in \mathbb{N}$ and almost all $\omega \in \Omega$, $y=F(t,y,\omega)$ admits a unique solution $y_t^{*}(\omega)\in \mathbb{R}^{+}$. 
Moreover, the random variable $y_t^{*}$ is $\mathcal{F}_{t-1}^{CN}$-measurable. 
\end{lemma}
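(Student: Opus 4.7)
The plan is to fix $\omega$ outside a null set and view $F(t,\cdot,\omega)$ as a deterministic function on $\mathbb{R}^+$. I would establish (a) continuity and strict monotonicity of $F(t,\cdot,\omega)$, (b) existence of a positive fixed point via the intermediate value theorem, (c) uniqueness via a derivative estimate, and finally (d) $\mathcal{F}_{t-1}^{CN}$-measurability of the solution via a measurable selection argument.

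The structural analysis relies on two sign observations. First, since $p_t^{1}>p_t^{0}$ (a standing assumption), the prefactor $(\Delta_t^{1}-\Delta_t^{0})/[\gamma(u_t-d_t)]=(p_t^{1}-p_t^{0})/\gamma$ is strictly positive. Second, a direct computation yields $c:=p_t^{0}/p_t^{1}-(1-p_t^{0})/(1-p_t^{1})=-(p_t^{1}-p_t^{0})/[p_t^{1}(1-p_t^{1})]<0$. Combined with the fact that $y\mapsto ae^{\gamma\theta y}+1$ is strictly increasing (where $a=p_t^{cn}/(1-p_t^{cn})>0$), the logarithmic integrand is negative and strictly increases to $0$ as $y\to\infty$. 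Differentiating under the conditional expectation (justified by dominated convergence, as the integrand and its $y$-derivative are uniformly bounded on any compact $y$-subinterval) then gives $F'(y,\omega)>0$. Moreover, $\lim_{y\to\infty}F(t,y,\omega)=\phi_1(\omega)$ because the log term vanishes, and $\lim_{y\to 0^+}F(t,y,\omega)$ is finite.

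For existence, set $G(y):=y-F(t,y,\omega)$, which is continuous with $G(y)\to\infty$ as $y\to\infty$. Using the standing assumption $\E[R_t-1\vert\mathcal{F}_{t-1}^{MF}]>0$, I would show that $\phi_1(\omega)>0$ and indeed $F(t,0^+,\omega)>0$ on a set of full measure, from which $G(0^+)<0$. The intermediate value theorem then yields a positive root $y_t^*(\omega)\in(0,\phi_1(\omega))$. For uniqueness, which is the main technical obstacle, I would show $G$ is strictly increasing by establishing $F'(y,\omega)<1$ pointwise. An explicit computation under the conditional expectation gives
\begin{align*}
F'(y,\omega)=\theta\,\E\left[\frac{(p_t^{1}-p_t^{0})^2}{p_t^{1}(1-p_t^{1})}\cdot\frac{ae^{\gamma\theta y}}{(ae^{\gamma\theta y}+1)(ae^{\gamma\theta y}+1+c)}\bigg\vert\mathcal{F}_{t-1}^{CN}\right],
\end{align*}
and, after substituting the explicit expression for $-c$, the desired bound reduces to an algebraic inequality in the market parameters (using $\theta\leq 1$ together with elementary bounds on the rational expression in $e^{\gamma\theta y}$); I anticipate that restricting $y$ to the bounded a-priori range $(0,\phi_1(\omega))$ identified via Step 2 facilitates this estimate.

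For the measurability claim, for each fixed $y$ the map $\omega\mapsto F(t,y,\omega)$ is $\mathcal{F}_{t-1}^{CN}$-measurable because all market parameters are $\mathcal{F}_{t-1}$-measurable and $F$ is a conditional expectation with respect to $\mathcal{F}_{t-1}^{CN}$. Combined with continuity in $y$, $(y,\omega)\mapsto G(y,\omega)$ is a Carathéodory function. Strict monotonicity of $G$ in $y$ together with the expression $y_t^*(\omega)=\inf\{q\in\mathbb{Q}^+:G(q,\omega)\geq 0\}$ then gives $\mathcal{F}_{t-1}^{CN}$-measurability as a countable infimum of measurable quantities. The hard step of this plan is the derivative bound of Step 3 that underpins uniqueness, for which careful algebraic manipulation exploiting the explicit relationship between $c$, $p_t^{0}$ and $p_t^{1}$ is essential.
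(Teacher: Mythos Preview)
Your plan is essentially the paper's own argument: both hinge on the derivative bound $0<F'_y(t,y,\omega)<1$, then show $F(t,0,\omega)>0$ via the positive excess-return assumption to force positivity of the fixed point. Two minor differences are worth noting. First, the bound $F'<1$ is simpler than you anticipate: after expanding the product in the denominator, the expression under the conditional expectation factors as $\theta\cdot(p_t^1-p_t^0)$ times a ratio whose numerator $\frac{1-p_t^0}{1-p_t^1}-\frac{p_t^0}{p_t^1}$ is strictly smaller than the term $\frac{1-p_t^0}{1-p_t^1}+\frac{p_t^0}{p_t^1}$ appearing (together with further positive terms) in its denominator; hence the ratio is below $1$ uniformly in $y$, and no a-priori restriction to $(0,\phi_1)$ is needed. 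Second, having established $0<F'<1$ on all of $\mathbb{R}$, the paper observes that $F(t,\cdot,\omega)$ is a contraction and invokes the Banach fixed-point theorem; this yields existence and uniqueness in one stroke, and moreover the iterates $y_t^{j}=F(t,y_t^{j-1},\cdot)$ are $\mathcal{F}_{t-1}^{CN}$-measurable and converge to $y_t^*$, so measurability follows immediately as a limit of measurable functions. Your Carath\'eodory/infimum-over-rationals argument is a valid alternative, but the contraction viewpoint is shorter. (Also note the obvious typo in the statement: the last term in the denominator should read $\frac{1-p_t^0}{1-p_t^1}$, not $\frac{1-p_t^0}{1-p_t^0}$; your derivative formula with ``$+1$'' reflects taking that typo literally.)
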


By the proof of Lemma \ref{Lemma:fixedpoint}, $F(t,\cdot,\omega): \mathbb{R}^{+}\rightarrow \mathbb{R}$ is a contraction for fixed $t\in \mathbb{N}$ and fixed $\omega \in \Omega$. 
Thus, the unique fixed point can be found by Picard iteration: Start with an arbitrary element $y_t^0$, let $y_t^j(\omega)=F(t,y_t^{j-1}(\omega),\omega)$, $j=1,2,\dots$, and obtain $y_t^*(\omega)=\lim\limits_{j\rightarrow \infty}y_t^j(\omega)$ by the Banach fixed-point theorem.
% {\ms We need to argue measurability of $y$.}
%  {\color{OliveGreen}Define the random variable $y_t^{*}=y_t^{*}(\omega)$ for all $\omega \in \Omega$ in a state-wise manner. We know from Lemma \ref{Lemma:fixedpoint} that $F(t,y,\cdot)$ and thus $H^{(-1)}(t,0,\cdot)$ are $\mathcal{F}_{t-1}^{RC}$-measurable, it then follows that $y_t^{*}$ is $\mathcal{F}_{t-1}^{RC}$-measurable.} 
 We next present another auxiliary result for each single-period backward problem.

%Note that the above ${\Delta}_t^1$ is defined as expectation conditional on $\mathcal{F}_{t-1}^{RC}\vee\mathcal{F}_{t-2}^{MF}\vee\{\xi_t^{cn}=1\}$ because type vector $\zeta$ is defined to be predictable.

\begin{lemma}\label{lemmaSupMFG}
% Assume that for $t\in \mathbb{N}$ the transition probability $\P_{t-1} [B_t = 1]$ of the stock specialized by the representative agent is influenced by a common systematic factor $\xi_t^{cn}$, 
Let $y_t^{*}$, $t\in \mathbb{N}$ be the unique positive solution of (\ref{fixedpoint}).
We have
\begin{align}\label{eq:SupMFG}
    \sup_{\pi_t}\E[-e^{-\gamma\widetilde X_{t}}\vert \mathcal{F}_{ t-1}^{MF}]
    &=-e^{-\gamma \widetilde X_{t-1}} G_t(\pi_t^{*}),
\end{align}
where
\begin{equation}\label{Gpit}
\begin{aligned}
G_t(\pi_t)=&p_t^{cn}e^{\gamma \theta\E\left[\pi_t{\Delta}_t^{1}\vert \mathcal{F}_{t-1}^{CN}\right]}p_t^{1}e^{-\gamma\pi_t(u_t-1)}+p_t^{cn}e^{\gamma \theta\E\left[\pi_t{\Delta}_t^{1}\vert \mathcal{F}_{t-1}^{CN}\right]}(1-p_t^{1})e^{-\gamma\pi_t(d_t-1)}\\&+(1-p_t^{cn})e^{\gamma \theta\E\left[\pi_t{\Delta}_t^{0}\vert \mathcal{F}_{t-1}^{CN}\right]}p_t^{0}e^{-\gamma\pi_t(u_t-1)}+(1-p_t^{cn})e^{\gamma \theta\E\left[\pi_t{\Delta}_t^{0}\vert \mathcal{F}_{t-1}^{CN}\right]}(1-p_t^{0})e^{-\gamma\pi_t(d_t-1)}.
\end{aligned}
\end{equation}
The optimal policy for \eqref{eq:SupMFG} is given by
\begin{align}\label{OptimalStrategyMFG}
    \pi_t^{*}=\frac{1}{\gamma(u_t-d_t)}\ln\left(\frac{(1-q_t)\left(p_t^{cn}e^{\gamma \theta y_t^*}p_t^{1}+(1-p_t^{cn})p_t^{0}\right)}{q_t\left(p_t^{cn}e^{\gamma \theta y_t^*}(1-p_t^{1})+(1-p_t^{cn})(1-p_t^{0})\right)}\right).
\end{align}
%{\ms Here we need that $y^*_t$ is $\mathcal{F}^{MF}_{t-1}$-measurable. }
%{\color{blue}Sorry that there was a mistake in Lemma 2, and it has been corrected, now $y^*_t$ is $\mathcal{F}^{RC}_{t-1}$-measurable}
\end{lemma}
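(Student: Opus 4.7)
The plan is to reduce the conditional optimization to an inner pointwise convex minimization coupled with an outer scalar fixed-point equation, and then invoke Lemma \ref{Lemma:fixedpoint} for the existence and uniqueness.

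First, I would substitute the one-step decomposition of $\overline{X}_t$ from Proposition \ref{Prop:MFGAverageWealth} into $\tilde{X}_t = X_t-\theta\overline{X}_t$ to obtain
\begin{align*}
\tilde{X}_t = \tilde{X}_{t-1} + \pi_t(R_t-1) - \theta\E[\pi_t\Delta_t^{1}\vert\mathcal{F}_{t-1}^{CN}]\mathbbm{1}_{\{\xi_t^{cn}=1\}} - \theta\E[\pi_t\Delta_t^{0}\vert\mathcal{F}_{t-1}^{CN}]\mathbbm{1}_{\{\xi_t^{cn}=0\}}.
\end{align*}
Since $\mathcal{F}_{t-1}^{CN}\subset\mathcal{F}_{t-1}^{MF}$, the two conditional expectations appearing above are $\mathcal{F}_{t-1}^{MF}$-measurable and factor out of $\E[\,\cdot\,\vert\mathcal{F}_{t-1}^{MF}]$. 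Summing over the four atoms of the joint conditional law of $(\xi_t^{cn},B_t)$ with weights $p_t^{cn}p_t^{1}$, $p_t^{cn}(1-p_t^{1})$, $(1-p_t^{cn})p_t^{0}$, $(1-p_t^{cn})(1-p_t^{0})$ then produces the factorization $\E[-e^{-\gamma\tilde{X}_t}\vert\mathcal{F}_{t-1}^{MF}] = -e^{-\gamma\tilde{X}_{t-1}}G_t(\pi_t)$ in \eqref{Gpit}.

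Second, I would handle the optimization of $G_t$ by treating the two self-referential quantities $\Phi_t^{1}:=\E[\pi_t\Delta_t^{1}\vert\mathcal{F}_{t-1}^{CN}]$ and $\Phi_t^{0}:=\E[\pi_t\Delta_t^{0}\vert\mathcal{F}_{t-1}^{CN}]$ as parameters. For any fixed $\mathcal{F}_{t-1}^{CN}$-measurable $(\Phi_t^{1},\Phi_t^{0})$, $\pi_t\mapsto G_t(\pi_t)$ reduces to $ae^{-\gamma\pi_t(u_t-1)}+be^{-\gamma\pi_t(d_t-1)}$ with strictly positive coefficients
\begin{align*}
a = p_t^{cn}e^{\gamma\theta\Phi_t^{1}}p_t^{1} + (1-p_t^{cn})e^{\gamma\theta\Phi_t^{0}}p_t^{0},\qquad b = p_t^{cn}e^{\gamma\theta\Phi_t^{1}}(1-p_t^{1}) + (1-p_t^{cn})e^{\gamma\theta\Phi_t^{0}}(1-p_t^{0}),
\end{align*}
hence strictly convex in $\pi_t$ with unique first-order-condition minimizer $\pi_t^{*}=\frac{1}{\gamma(u_t-d_t)}\ln\frac{(1-q_t)a}{q_tb}$. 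Since the ratio $a/b$ depends on $(\Phi_t^{1},\Phi_t^{0})$ only through the scalar $y:=\Phi_t^{1}-\Phi_t^{0}$ (the common factor $e^{\gamma\theta\Phi_t^{0}}$ cancels), the self-consistency constraint collapses to a single scalar fixed-point equation for $y$.

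Third, I would substitute this $\pi_t^{*}$ into the definition $y=\E[\pi_t^{*}(\Delta_t^{1}-\Delta_t^{0})\vert\mathcal{F}_{t-1}^{CN}]$, use the identity $\Delta_t^{1}-\Delta_t^{0}=(p_t^{1}-p_t^{0})(u_t-d_t)$, and simplify to verify that the resulting equation is exactly $y=F(t,y,\omega)$ from Lemma \ref{Lemma:fixedpoint}. That lemma then supplies a unique $\mathcal{F}_{t-1}^{CN}$-measurable solution $y_t^{*}$; plugging $y_t^{*}$ back into the minimizer formula yields \eqref{OptimalStrategyMFG}, and strict convexity of $G_t(\cdot)$ upgrades this critical point to the global minimizer, which in turn produces \eqref{eq:SupMFG}.

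The main obstacle is the self-referential structure: the optimizer appears both explicitly in $\pi_t(R_t-1)$ and implicitly through its own conditional expectations $\Phi_t^{1},\Phi_t^{0}$, so ordinary pointwise calculus is not directly applicable. Resolving this cleanly requires the two-level separation above (inner pointwise convex minimization parameterized by a scalar, outer fixed-point for that scalar) and a careful algebraic check that the resulting fixed-point equation matches the specific form of $F$ so that Lemma \ref{Lemma:fixedpoint} applies verbatim.
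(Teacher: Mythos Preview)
Your proposal is correct and follows essentially the same route as the paper's proof: factor out $-e^{-\gamma\tilde X_{t-1}}$ via the one-step dynamics and Proposition~\ref{Prop:MFGAverageWealth}, solve the resulting pointwise convex minimization in $\pi_t$ with the self-referential expectations frozen, and then close the loop by recognizing the self-consistency condition as the fixed-point equation of Lemma~\ref{Lemma:fixedpoint}. Your presentation is slightly more streamlined---you identify upfront that only the scalar $y=\Phi_t^1-\Phi_t^0$ matters, whereas the paper arrives at this reduction after the algebraic manipulation of the log term---but the mathematical content is the same.
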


We are now able to present one of our main results on the construction of a forward MFE. 

\begin{theorem}\label{Thm:RelativeForwardMFG}
% Assume that for $t\in \mathbb{N}$ the transition probability $\P_{t-1} [B_t = 1]$ of the stock specialized by the representative agent is influenced by a common systematic factor $\xi_t^{cn}$, 
% Let $(y_n^{*})_{n=1}^t$ denote the sequence of unique positive solutions to equations (\ref{fixedpoint}) indexed by $n=1,\dots,t$ for the representative agent {\ms Where do we need this?},
For $\widetilde{x}\in \mathbb{R}$ and $t \in \mathbb{N}$, the process
\begin{align}
    U_t(\widetilde{x})=-\frac{e^{-\gamma \widetilde{x}}}{\prod \limits_{n=1}^tG_n(\pi_n^{*})},
\end{align} 
with $U_0(\widetilde{x})=-e^{-\gamma\widetilde{x}}$, is a PRFPP with its associated MFE $\pi_t^{*}$ given by \eqref{OptimalStrategyMFG}.
\end{theorem}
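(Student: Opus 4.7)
The plan is to verify the four conditions of Definition \ref{def:RelativeFrowardPreferencesMFG} directly, using Lemmas \ref{Lemma:fixedpoint} and \ref{lemmaSupMFG} as the key building blocks. I proceed by induction on $t$, with the base case $t=0$ being trivial since $U_0(\tilde x) = -e^{-\gamma\tilde x}$ is deterministic and a utility function by inspection.

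For condition (i), I would observe that $y_n^*$ is $\mathcal{F}_{n-1}^{CN}$-measurable by Lemma \ref{Lemma:fixedpoint}, so the candidate $\pi_n^{*}$ defined in \eqref{OptimalStrategyMFG} is $\mathcal{F}_{n-1}^{MF}$-measurable, being a measurable function of $y_n^*$ together with the parameters $u_n, d_n, q_n, p_n^{cn}, p_n^{1}, p_n^{0}$, all of which are predictable. Consequently $G_n(\pi_n^{*})$, a weighted sum of exponentials of $\mathcal{F}_{n-1}^{MF}$-measurable random variables, is itself $\mathcal{F}_{n-1}^{MF}$-measurable, so that $U_t(\tilde x)$ is $\mathcal{F}_{t-1}^{MF}$-measurable for each $\tilde x$. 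Condition (ii) is then immediate: $G_n(\pi_n^{*}) > 0$ almost surely as a sum of strictly positive terms, so $U_t(\cdot,\omega)$ is a positive scalar multiple of $-e^{-\gamma\tilde x}$, hence a utility function.

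For (iii) and (iv), I would compute $\E[U_t(\tilde X_t)\,\vert\,\mathcal{F}_{t-1}^{MF}]$ for an arbitrary admissible $\pi \in \mathcal{A}^{MF}$. Applying Proposition \ref{Prop:MFGAverageWealth} to expand $\overline X_t - \overline X_{t-1}$, the relative wealth one-step increment reads
\[
\tilde X_t = \tilde X_{t-1} + \pi_t(R_t-1) - \theta\E[\pi_t\Delta_t^{1}\,\vert\,\mathcal{F}_{t-1}^{CN}]\mathbbm{1}_{\{\xi_t^{cn}=1\}} - \theta\E[\pi_t\Delta_t^{0}\,\vert\,\mathcal{F}_{t-1}^{CN}]\mathbbm{1}_{\{\xi_t^{cn}=0\}}.
\]
Since $\tilde X_{t-1}$ is $\mathcal{F}_{t-1}^{MF}$-measurable, I factor out $e^{-\gamma\tilde X_{t-1}}$ and then evaluate the remaining expectation by tower, conditioning first on $\mathcal{F}_{t-1}^{MF}\vee\sigma(\xi_t^{cn})$. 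Using $\P[B_t=1\,\vert\,\mathcal{F}_{t-1}^{MF}\vee\sigma(\xi_t^{cn})] = p_t^{1}\mathbbm{1}_{\{\xi_t^{cn}=1\}} + p_t^{0}\mathbbm{1}_{\{\xi_t^{cn}=0\}}$ and then taking expectation over $\xi_t^{cn}$ reproduces precisely the expression $G_t(\pi_t)$ in \eqref{Gpit}, so that
\[
\E[U_t(\tilde X_t)\,\vert\,\mathcal{F}_{t-1}^{MF}] = -\frac{e^{-\gamma\tilde X_{t-1}}\,G_t(\pi_t)}{\prod_{n=1}^{t}G_n(\pi_n^{*})}.
\]

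Finally, Lemma \ref{lemmaSupMFG} identifies $\pi_t^{*}$ as the maximizer of this conditional expectation, equivalently the almost sure minimizer of $G_t(\cdot)$ with minimum value $G_t(\pi_t^{*})$. Therefore $G_t(\pi_t) \geq G_t(\pi_t^{*})$ almost surely for every admissible $\pi_t$, yielding the supermartingale inequality (iii); substituting $\pi_t = \pi_t^{*}$ gives equality in (iv) since the quotient becomes one and the resulting expression equals $U_{t-1}(\tilde X_{t-1}^{*})$. The main obstacle is the careful interplay of the two filtrations $\mathbb{F}^{MF}$ and $\mathbb{F}^{CN}$: one must justify that the two correction terms involving $\E[\pi_t\Delta_t^{j}\,\vert\,\mathcal{F}_{t-1}^{CN}]$ can be pulled out as $\mathcal{F}_{t-1}^{MF}\vee\sigma(\xi_t^{cn})$-measurable factors and then combined with the conditional binomial law of $R_t$ to recover exactly the four-term structure of $G_t$; once this identification is in place, the induction closes cleanly.
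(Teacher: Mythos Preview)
Your proposal is correct and follows essentially the same approach as the paper. Both proofs reduce conditions (iii) and (iv) to the identity $\E[-e^{-\gamma\tilde X_t}\mid\mathcal{F}_{t-1}^{MF}] = -e^{-\gamma\tilde X_{t-1}}G_t(\pi_t)$ and then invoke Lemma~\ref{lemmaSupMFG} to identify $\pi_t^{*}$ as the minimizer of $G_t$; you are simply more explicit than the paper in checking the measurability in (i)--(ii) and in spelling out the tower-property computation that produces $G_t$ (which the paper carries out inside the proof of Lemma~\ref{lemmaSupMFG} rather than repeating it here).
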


The representative agent is characterized by stochastic type vectors that represent the distribution of market and preference characteristics across an infinite number of agents. 
Therefore, the PRFPPs and the corresponding MFE depend on the realization of the type vectors $(\psi_t)_{t \in \mathbb{N}}$. 
% If one considers $U_t(\widetilde{x})$ and $\pi_t^{*}$ as functions of $\eta$ and $(\zeta_n)_{n=1}^t$, $U_t(\widetilde{x},\eta,(\zeta_n)_{n=1}^t)$ and $\pi_t^{*}(\eta,(\zeta_n)_{n=1}^t)$, then the relative forward preferences and the optimal strategy computed for an agent at time $t-1$ with deterministic type vector $\eta^{0}$ and $\mathcal{F}_{t-1}^{MF}$-measurable type vector  $(\zeta_n)_{n=1}^t$ are respectively $U_t(\widetilde{x},\eta^{0},(\zeta_n)_{n=1}^t)$ and $\pi_t^{*}(\eta^{0},(\zeta_n)_{n=1}^t)$.

The equilibrium strategy of the representative agent given by (\ref{OptimalStrategyMFG}) is influenced by the preference parameters and the market parameters through the type distribution of the population as a whole, and is increasing in $y_t^*$. 
Then, given any specific agent in the game with fixed parameter values, an increase in her risky investment can be the consequence of an increase in the transition probability of the non-traded stochastic factor $p_t^{cn}$, or changes in some of the population, e.g., a decrease in risk aversion  or increase in competition weight of the population.
Indeed, the function $F(t,y,\omega)$ is increasing in the above mentioned parameters implying a larger $y_t^*$ as the fixed point solution to equation \eqref{fixedpoint}.
In a nutshell, an agent with relative performance concerns typically invests less in the stock if some of her competitors are more risk-averse or less competitive, or when the stocks traded in the population market become less attractive.

% As mentioned earlier, the relative forward performance process and the corresponding equilibrium strategy are constructed by repeatedly applying Theorem \label{Thm:RelativeForwardMFG} on the problem of the form $(\ref{eq:InverseProblemMFG})$, conditionally on the newly updated market information of the population, which is in contrast with the constant forward Nash equilibrium studied in \cite{lacker2019mean}, \cite{anthropelos2022competition} and \cite{dos2021forward, dos2022forward}. 

It turns out that the equilibrium strategy $(\ref{OptimalStrategyMFG})$ can be expressed as the sum of two components,
\begin{align*}
    \pi_t^{*}=\frac{1}{\gamma(u_t-d_t)}\left( \ln\frac{(1-q_t)p_t}{q_t(1-p_t)}+\ln\frac{(1-p_t)\left(p_t^{cn}e^{\gamma \theta y_t^*}p_t^{1}+(1-p_t^{cn})p_t^{0}\right)}{p_t\left(p_t^{cn}e^{\gamma \theta y_t^*}(1-p_t^{1})+(1-p_t^{cn})(1-p_t^{0})\right)}\right).
\end{align*}
One is the classical Merton portfolio, the second one is increasing in the competition weight $\theta$ and vanishes when $\theta=0$, all agents act uncompetitively.
In addition, the second component is always positive, we observe the same pattern as in \cite{lacker2019mean} that competition is an incentive for agents with CARA utilities to increase their risky investment.

\begin{remark}
The fixed point solution derived in Lemma \ref{Lemma:fixedpoint} satisfies $$y_t^{*}=\E\left[\pi_t^{*}({\Delta}_t^1-{\Delta}_t^0)\vert \mathcal{F}_{t-1}^{CN}\right]>0,$$ which holds if and only if
\begin{align*}
   \E\left[ \pi_t^{*} \left(p_t^{1}(u_t-1)+(1-p_t^{1})(d_t-1)\right) \big\vert \mathcal{F}_{t-1}^{CN} \right]
   > \E\left[\pi_t^{*} \left(p_t^{0}(u_t-1)+(1-p_t^{0})(d_t-1)\right) \big\vert \mathcal{F}_{t-1}^{CN} \right].
\end{align*}
%{\ms Revised the above, please check.}
Therefore, the financial interpretation of the positiveness of the fixed point is that the average of the expected gain across the entire population of agents in the state $\{\xi_t^{cn}=1\}$ is higher than in $\{\xi_t^{cn}=0\}$. 
This in turn justifies designation of the two states as bull respectively bear market.
\end{remark}

We close this section with a proposition showing that the forward
multi-agent game converges to the forward MFG given agents with homogeneous type vectors, as validated in \cite{lacker2019mean}, \cite{anthropelos2022competition} and \cite{dos2021forward, dos2022forward}.
%{\ms I don't understand what we want to say with the 'as intuitively validated' part. Does the mentioned literature prove analogous results? Or do they only claim that this is intuitively true? }
\begin{proposition}\label{Prop:Convergence}
As the number of agents tends to infinity, the equilibrium strategy $(\ref{HomoNStrategy})$ of multi-agent game converges in distribution to the MFE $(\ref{OptimalStrategyMFG})$.
\end{proposition}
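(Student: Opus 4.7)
The plan is to analyze convergence by comparing the two implicit fixed-point equations that pin down the equilibrium strategies: equation~(\ref{HomoN}) for $\pi_t^{*,N}$ in the homogeneous $N$-agent game, and equation~(\ref{fixedpoint}) from Lemma~\ref{Lemma:fixedpoint} for the MFG. Since Proposition~\ref{Prop:Convergence} presumes homogeneous type vectors, it is natural to match it with the MFG in which the type vector $\psi_t$ is degenerate (a point mass at the common parameter values); almost-sure convergence of the strategies will then yield the claimed convergence in distribution.

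The key calculation substitutes $y_t^{*,N}=\exp\{(1-\theta/N)\gamma(u_t-d_t)\pi_t^{*,N}\}$ into~(\ref{HomoN}), which forces $(y_t^{*,N})^{\theta/(N-\theta)}=\exp\{\theta\gamma(u_t-d_t)\pi_t^{*,N}/N\}$. Writing $K_N$ for the base $\frac{p_t^{1}(y_t^{*,N})^{\theta/(N-\theta)}+(1-p_t^{1})}{p_t^{0}(y_t^{*,N})^{\theta/(N-\theta)}+(1-p_t^{0})}$ that appears raised to the power $N-1$ in~(\ref{HomoN}), a Taylor expansion gives
\[
K_N=1+\frac{(p_t^{1}-p_t^{0})\theta\gamma(u_t-d_t)\pi_t^{*,N}}{N}+O(1/N^{2}),
\]
so, as long as $\pi_t^{*,N}$ stays bounded, $K_N^{N-1}\to\exp\{\theta\gamma(u_t-d_t)(p_t^{1}-p_t^{0})\pi_t^{*}\}$ while $(1-\theta/N)\gamma(u_t-d_t)\pi_t^{*,N}\to\gamma(u_t-d_t)\pi_t^{*}$. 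Taking the limit in~(\ref{HomoN}) and then its logarithm reproduces exactly formula~(\ref{OptimalStrategyMFG}), because in the degenerate-type case the MFG quantity $y_t^{*}=\E[\pi_t^{*}(\Delta_t^{1}-\Delta_t^{0})\mid\mathcal{F}_{t-1}^{CN}]$ collapses to $\pi_t^{*}(p_t^{1}-p_t^{0})(u_t-d_t)$, matching the exponent produced by $K_N^{N-1}$.

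The missing ingredient, and the main obstacle, is the uniform-in-$N$ control on $\pi_t^{*,N}$. For each fixed $\omega$ the right-hand side of~(\ref{HomoN}) is sandwiched between $\frac{(1-q_t)p_t^{0}}{q_t(1-p_t^{0})}$ and $\frac{(1-q_t)p_t^{1}}{q_t(1-p_t^{1})}$, both independent of $N$, so $y_t^{*,N}$ stays in a compact subset of $(0,\infty)$; since $1-\theta/N\ge 1-\theta>0$, this translates into $|\pi_t^{*,N}(\omega)|\le C(\omega)$ with $C$ free of $N$. Along any subsequence $\pi_t^{*,N_k}\to\pi$ the previous paragraph shows that $\pi$ solves the MFG fixed-point equation, and the uniqueness part of Lemma~\ref{Lemma:fixedpoint} forces $\pi=\pi_t^{*}$. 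Hence the full sequence $\pi_t^{*,N}$ converges almost surely to $\pi_t^{*}$, which in turn yields the claimed convergence in distribution.
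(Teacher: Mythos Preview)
Your approach is essentially the same as the paper's: both identify that the $N$-agent fixed-point equation~(\ref{HomoN}) converges to the MFG fixed-point equation by computing the limit of $K_N^{N-1}$, then deduce convergence of the unique solutions. Your treatment is in fact more careful than the paper's---the paper merely asserts that convergence of the equations yields convergence of their solutions, whereas you supply the uniform-in-$N$ a~priori bound and the subsequence/uniqueness argument; the only slip is the inequality $1-\theta/N\ge 1-\theta>0$, which fails at $\theta=1$ but is repaired by $1-\theta/N\ge 1-1/N\ge 1/2$ for $N\ge 2$.
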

Whether an analogous result to Proposition 
\ref{Prop:Convergence} holds for heterogeneous agents is a challenging open problem left to future research.

\section{Numerical analysis of MFE}\label{sec:NumericalAnalysis}

In this section, we perform a numerical study to do a sensitivity analysis of model parameters for the MFE. 
We will investigate the sensitivity of the optimal strategies with regards to preference parameters in Subsection \ref{subs:SensitivityPreference} and the sensitivity with regards to market parameters in Subsection \ref{subs:SensitivityMarket}. 
Throughout the section, our analysis is based on a simulation of $10,000$ agents.  
We work with uniformly distributed benchmark parameters given by
$p_t^{1} \sim \mathcal{U}[0.5,0.7]$, $p_t^{0} \sim \mathcal{U}[0.3,0.5]$, $u_t \sim \mathcal{U}[1.16,1.24]$, $d_t \sim \mathcal{U}[0.86,0.94]$, $\gamma \sim \mathcal{U}[2,4]$, and $\theta \sim \mathcal{U}[0.2,0.6]$. 
We suppose that $p_t^{cn}=0.6$ for all agents. 
When considering a fixed agent, we set the benchmark parameters to $p_t^{1}=0.6$, $p_t^{0}=0.4$, $u_t=1.2$, $d_t=0.9$, $\theta=0.4$, and $\gamma=3$.
%{\ms Did I get this right? If so, this does not make much sense. We should consider the fixed agent with average parameters within the network. }
%{\color{blue}Yes, you are right, although when changing the impact of varying $\bar{\theta}$, we consider an agent with fixed $\theta$ in different groups, all the remaining parameters should be set as you suggested.}

\subsection{Sensitivity analysis for preference parameters}\label{subs:SensitivityPreference}

We first investigate the impact of the preference parameters $\theta$ and $\gamma$ on optimal investment strategies.
To this end, we compute the optimal investment strategy of the agent numerically in two scenarios. 
First, we vary the competition weight $\theta$ and risk aversion $\gamma$ of that particular agent while keeping the preference parameters of the network fixed. 
Second, we fix the preference parameters of the agent and vary the distributions of the competition weight $\theta$ and risk aversion $\gamma$ across the network by changing their respective range.
% The monotonicity with respect to $\theta$ and $\overline{\theta}$ is consistent with the numerical results presented in \cite{lacker2019mean}.

\begin{figure}[ht]
   \includegraphics[width=7.96cm]{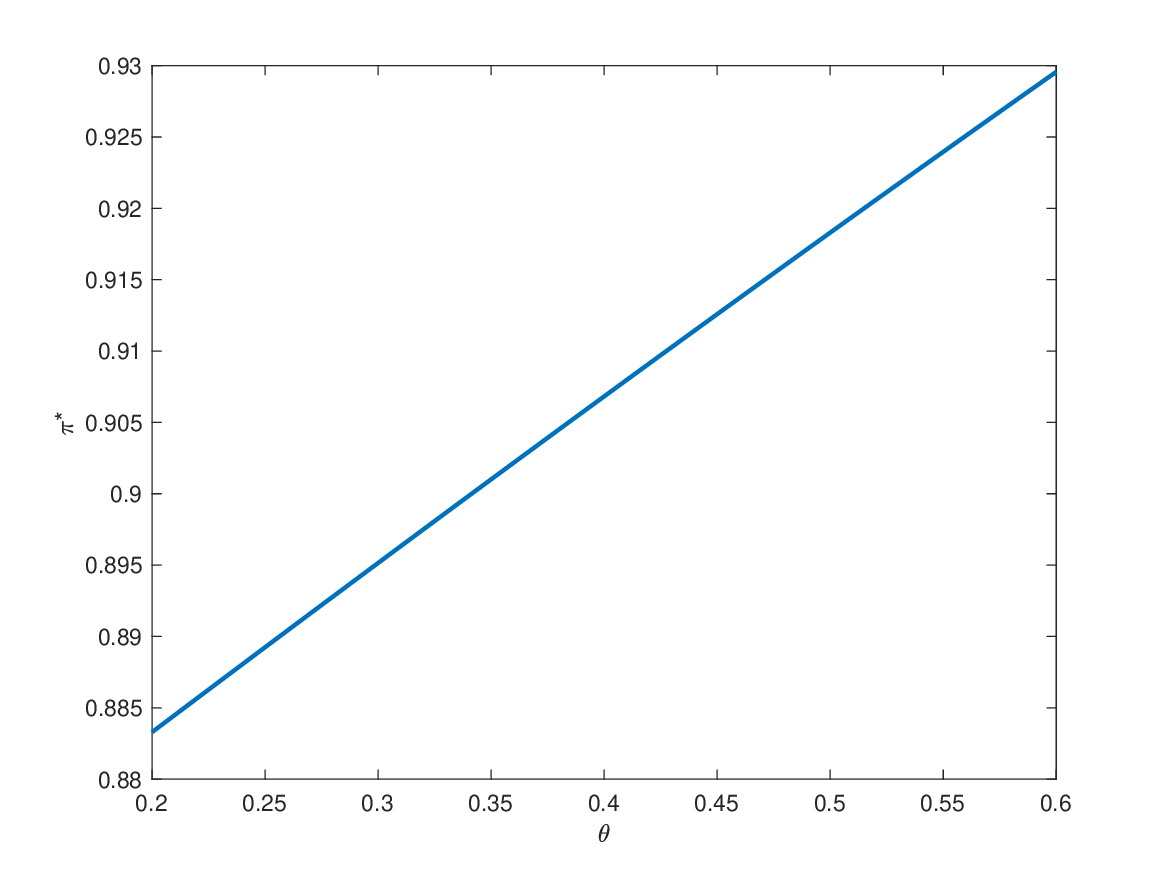}
  \includegraphics[width=7.96cm]{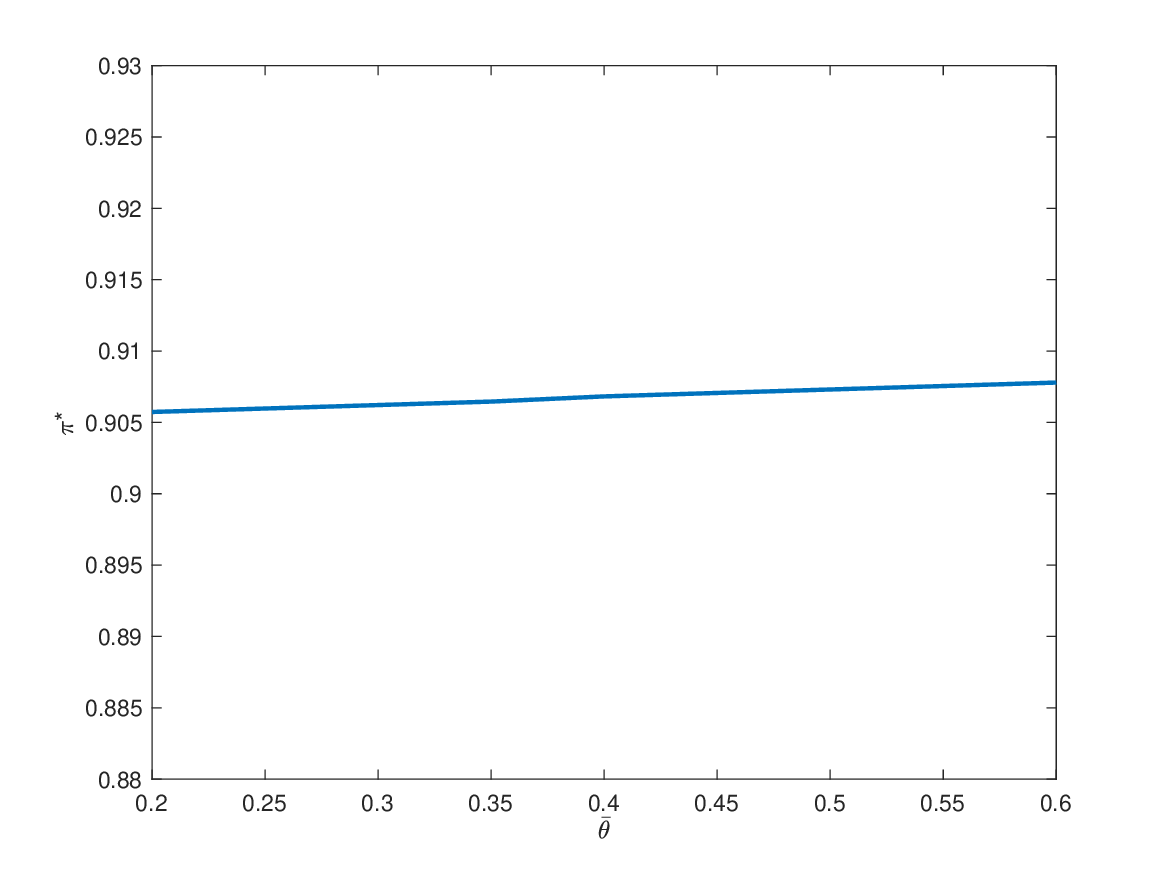}
  \caption{Equilibrium strategy of a fixed agent when varying competition weight of that agent or the network.\\
{\footnotesize{\textit{Notes}. On the left-hand side of Figure \ref{fig1-1}, we increase the competition weight of the fixed agent while keeping the parameters of the network fixed. On the right-hand side of Figure \ref{fig1-1}, we fix the parameters of agent and vary the distribution of the competition weight of the network as $\mathcal{U}[0.05z,0.4+0.05z]$, $z\in (0,8)$. }}}
  \label{fig1-1}
\end{figure}

\begin{figure}[ht]
   \includegraphics[width=7.96cm]{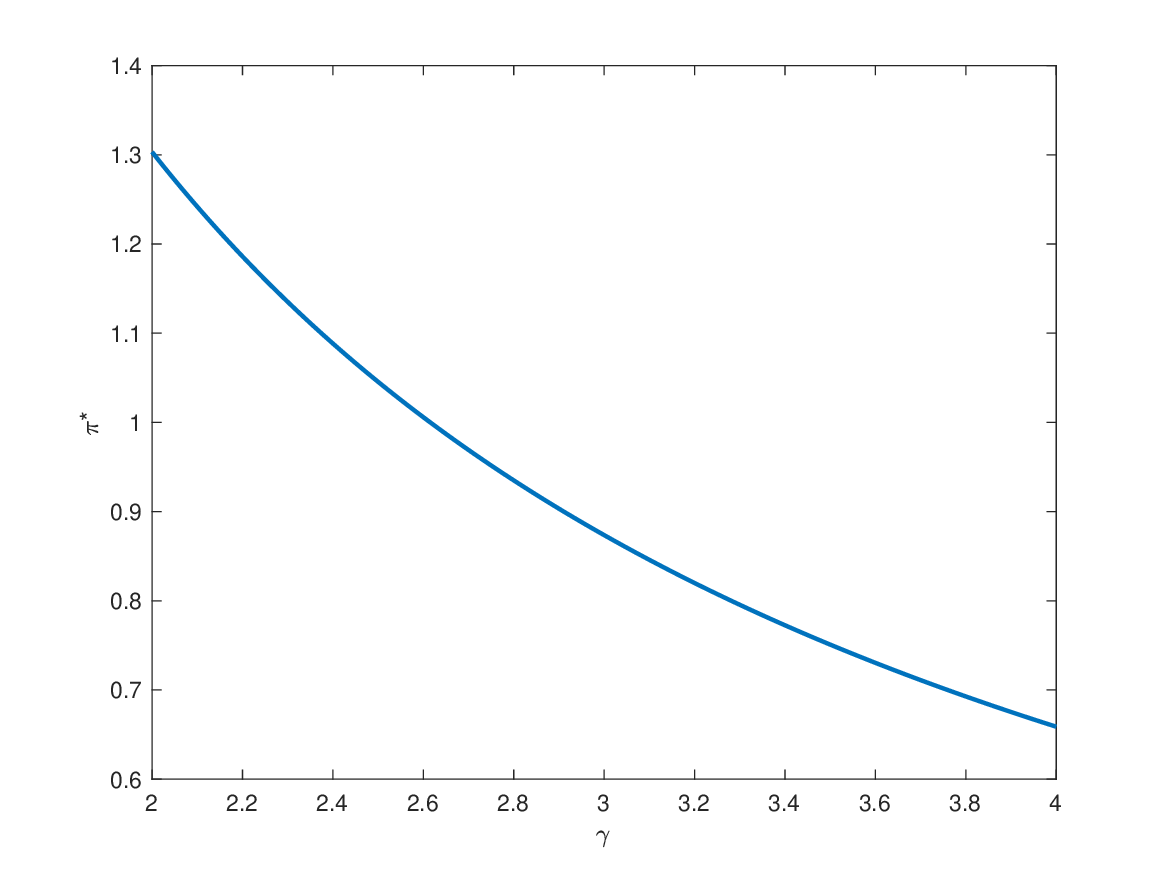}
  \includegraphics[width=7.96cm]{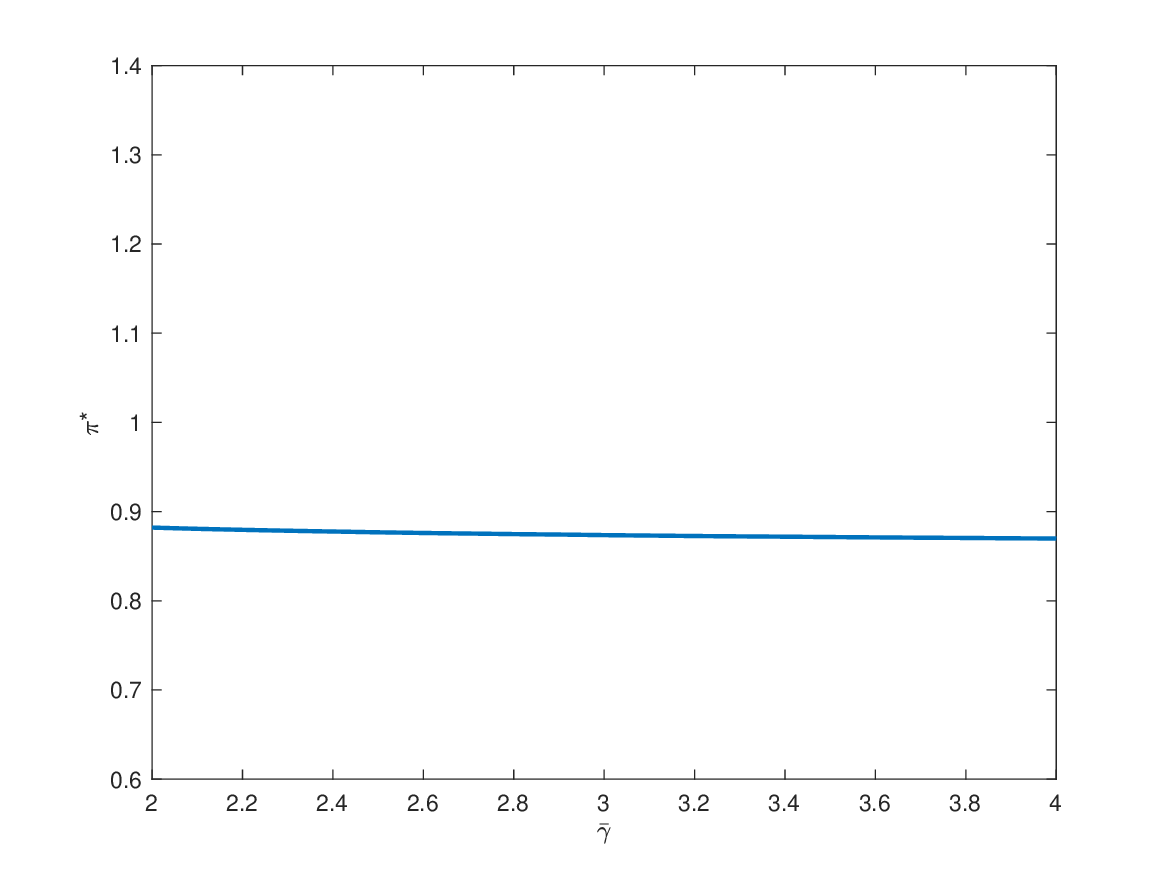}
  \caption{Equilibrium strategy of a fixed agent when varying risk aversion of that agent or the network.}
  \label{fig1-2}
{\footnotesize{\textit{Notes}. On the left-hand side of figure \ref{fig1-2}, we increase the risk aversion of the fixed agent while keeping the parameters of the network fixed. On the right-hand side of figure \ref{fig1-2}, we fix the parameters of agent and vary the distribution of the risk aversion of the network as $\mathcal{U}[1+0.1z,3+0.1z]$, $z\in (0,8)$.}}
\end{figure}

%\begin{figure}[H]
   %\includegraphics[width=7.96cm]{0524FixedAgentTheta.eps}
  %\includegraphics[width=7.96cm]{0524FixedGroupTheta.eps}\\
  %\includegraphics[width=7.96cm]{0524FixedAgentGamma.eps}
  %\includegraphics[width=7.96cm]{0524FixedGroupGamma.eps}
  %\caption{The MFE $\pi^∗$ computed with varying risk aversion and competition weight
   %{\ms 1. Make figures nicer, e.g. by using linewidth =2. 2. Change titles to `Optimal strategy of a fixed agent when varying competition weight/risk aversion of that agent/network' 3. Add Figure notes that explain what the figure shows and how it is obtained. 4. Referse order of columns such that the first column shows the results when keeping varying the parameters of the agent}
   %}
  %\label{fig1}
%\end{figure}

The left-hand side of Figures \ref{fig1-1} and \ref{fig1-2} shows the equilibrium strategy of a given agent when varying the preference parameters of that agent while the right-hand side shows the equilibrium strategy of the agent when varying the preference parameters of the network. 
We make the following observations. 
    
% We consider in the above two graphs of the first row a fixed agent with $\theta=0.4$ facing the population groups with competition
% weight distributions following uniform distributions and on intervals $[0,0.4], [0.05,0.45],\dots,[0.4,0.8]$ with common width 0.4, and an agent with varying $\theta$ facing a fixed population group with competition weight distribution following uniform distribution on interval $[0.2,0.6]$. In the two graphs of the second row, a fixed agent with $\gamma=3$ is faced with different population groups with risk aversion distributions following uniform distributions and on intervals $[1,3], [1.1,3.1],\dots,[3,5]$ with common width 2,
% and for a fixed population group with risk aversion distribution following uniform distribution on interval $[1.5,3.5]$, the equilibrium strategy of any possible agent in this group is decreasing in her risk aversion parameter $\gamma$. 

First, the allocation to the risky asset increases both in the competition weight of the agent herself as well as the average competition weight of her network.
The first observation shows that paying attention to the wealth of others leads to more risk taking. 
The second phenomenon is an indirect effect. When the average competition weight within the network increases, agents optimally raise their risky allocations, thereby increasing the magnitude and volatility of aggregate wealth. Since asset returns are positively correlated across investors, a given and fixed agent responds by increasing her risky allocation as well, in order to keep up with the increased volatility of aggregate wealth within the network.

%The second phenomenon is an indirect effect. When the average competition weight within the network increases, agents within the network tend to invest more in the risky asset, and a given and fixed agent responds by also increasing her allocation to the risky asset to keep up with the now higher on average wealth of agents in her network even when her competition weight remains the same. 

Second, the allocation to the risky asset decreases both in the risk aversion of the agent herself as well as the average risk aversion of her network. 
An agent with higher risk aversion is more inclined to choose a more stable portfolio by reducing their exposure to the risky asset. 
The second phenomenon is again an indirect effect. 
If a given agent is part of a population with higher risk aversion, the peer wealth which the agent intends to outperform is less volatile and less aggressive.
In response, the agent reduces her exposure to the risky asset.

We also observe that a more risk-tolerant agent increases her risky positions at a slower pace, as she becomes increasingly competitive with higher relative performance concerns. 
In contrast, agents with higher risk aversion exhibit a more rapid adjustment in their risky positions, when faced with the same change in the competition weight. Indeed, if we do the numerical approximation for different agents with increasing risk aversion $\gamma=\{2, 2.5, 3, 2.5, 4\}$, and compute their corresponding percentage change in the equilibrium strategy, which represents the relative change between the case when $\theta=0.2$ and $\theta=0.6$, the result is given by $\{1.12\%, 1.39\%, 1.66\%, 1.93\%, 2.20\%\}$. 
%{\ms I don't follow the above paragraph. How to compare the increase? They are absolutely and relatively different. }
%{\color{blue} Sorry, the previous paragraph is indeed confusing, which should be discussing 2 different things. And the above was revised.}

An interesting observation is that varying the parameters of a given agent has a much stronger effect than varying the average of the same parameter across the network for both competition weight and risk aversion.  
When an investor's own competition weight increases or risk aversion decreases, holding all else fixed, her optimal strategy adjusts directly and becomes more aggressive. 
On the other hand, when preference parameters of the population change in the same direction, it is the competitors that increase their risky allocations, and the average wealth in the network thus increases on average and becomes more volatile. Since asset returns are positively correlated, the investor correspondingly increases her own risky exposure to keep up with the now higher average wealth in the network. However, this response is attenuated because the competition weight given to the average wealth in the network is smaller than the weight given to one's own wealth. Consequently, changes in network-wide average parameters exert a weaker influence on the optimal strategy than changes in individual parameters.
As we will see in the next subsection, this finding still holds for other model parameters as well.

\subsection{Sensitivity analysis of market parameters}\label{subs:SensitivityMarket}
We next turn to study the monotonicity of equilibrium strategy with respect to features of the financial market, including volatility, expected return, skewness, and the correlation structure of all stocks traded by the population.

\subsubsection{Volatility}

In the binomial setting, we fix the transition probability and the expected return of each stock while varying the price levels for them to represent different dispersions of returns.  The upper and lower price levels are adjusted simultaneously such that the expected return remains the same.

The left-hand side of Figure \ref{fig2} shows the equilibrium strategy of a given agent when varying the stock volatility of that agent while the right-hand side shows the equilibrium strategy of the agent when varying the stock volatility of the network.

As is expected, investors dislike volatility.
%, more risk-averse agents exhibit stronger tendency to avoid investment risk by choosing stable portfolio. 
%{\ms How can we infer something about agents with different risk-aversion from the present analysis? }
Interestingly, a given agent also reduces her exposure to the risky asset when the assets traded by others get more volatile, even though the asset traded by herself remains unchanged. 
The intuition is that when the stocks traded by others are more volatile, they decrease their exposure to stocks. Under the assumption that stocks are positively correlated, the performance of the benchmark becomes smaller on average and less volatile. 
This leads the agent to reduce her own exposure to her risky asset. 

\begin{figure}[H]
\includegraphics[width=7.96cm]{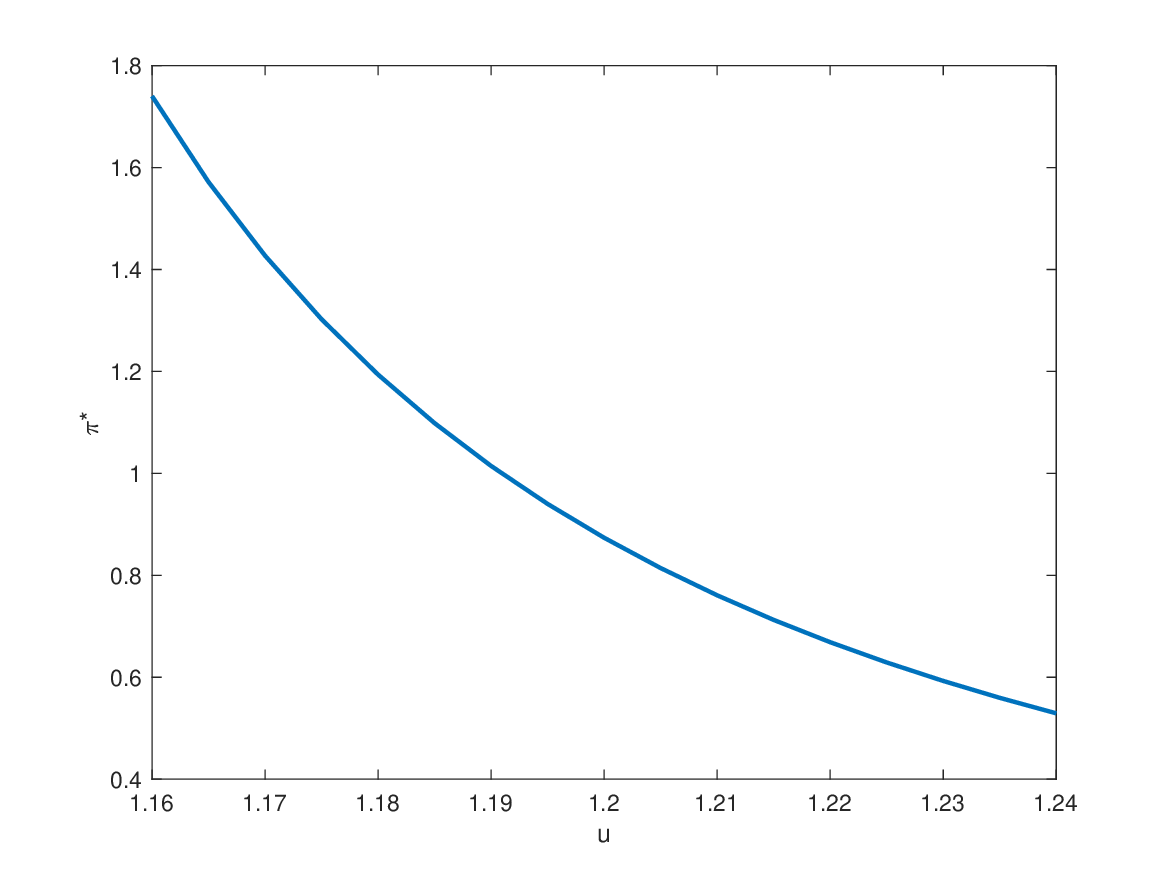}
\includegraphics[width=7.96cm]{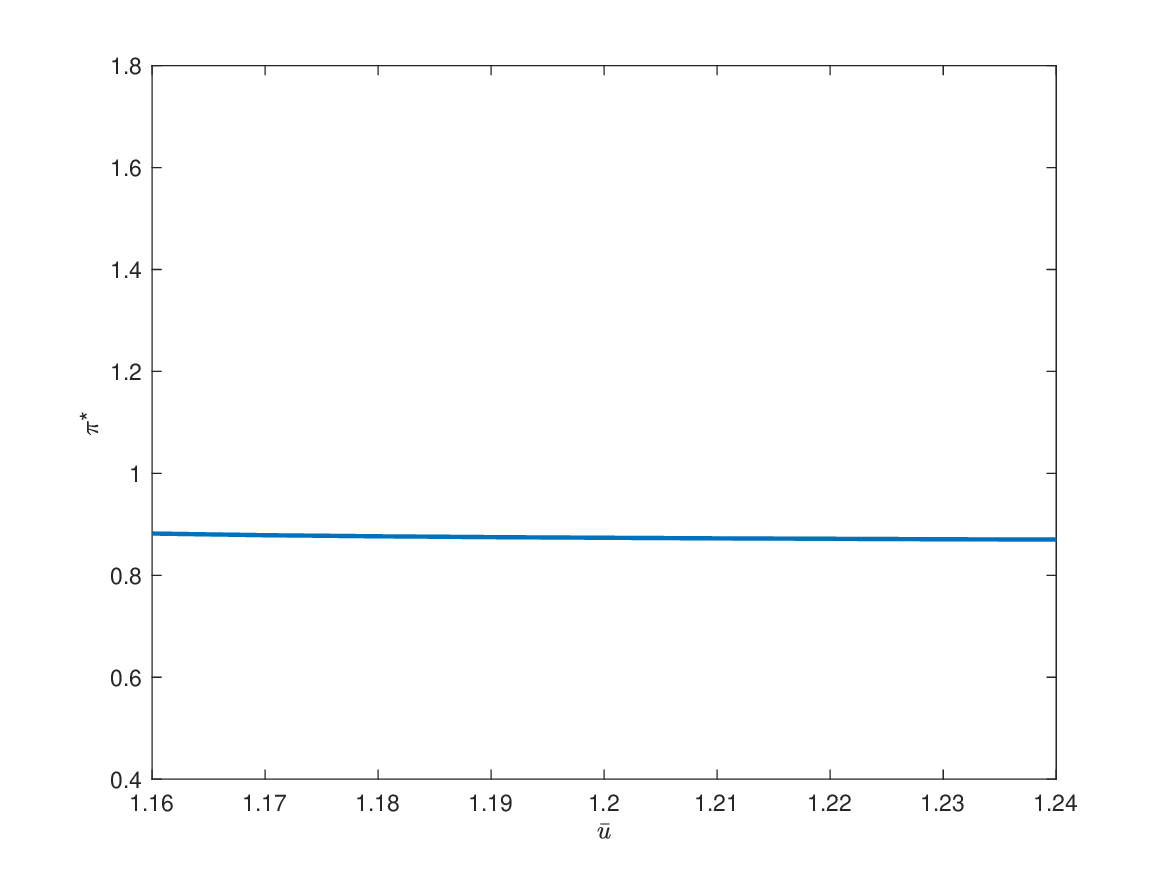}
\caption{Equilibrium strategy of a fixed agent when varying stock volatility of that agent or the population.\\
{\footnotesize{\textit{Notes}. On the left-hand side of Figure \ref{fig2}, we increase the stock volatility of a fixed agent while keeping the parameters of the population fixed. 
On the right-hand side of Figure \ref{fig2}, we fix the parameters of the agent and vary the distribution of the stock volatility of the population. 
This is done by varying the upper price level $u$ as $\mathcal{U}[1.12+0.01z,1.2+0.01z]$, $z\in (0,8)$ with lower price level $d$ adjusted accordingly such that the expected stock return remains unchanged.
}}
}
\label{fig2}
\end{figure}
\subsubsection{Expected Return}

Next, we investigate how the investment behaviour of the agent is influenced by the stock's expected return traded in the market.

The left-hand side of Figure \ref{fig3} shows the equilibrium strategy of a given agent when varying the stock's expected return of that agent while the right-hand side shows the equilibrium strategy of the agent when varying the stock's expected return of the network.

If a given agent trades a more attractive stock, she naturally increases the allocation to that stock. 
Relative competition concerns cause her to also increase the investment to the risky asset when the stocks traded by other investors become more attractive. 
This is an indirect effect of competitors increasing their allocation to now more attractive stocks resulting in higher average levels of wealth in the economy. Therefore, an agent who cares about her wealth in relation to that of others, chooses to allocate more in the risky stocks even though the characteristics of this own stock remain unchanged.

\begin{figure}[H]
\includegraphics[width=7.96cm]{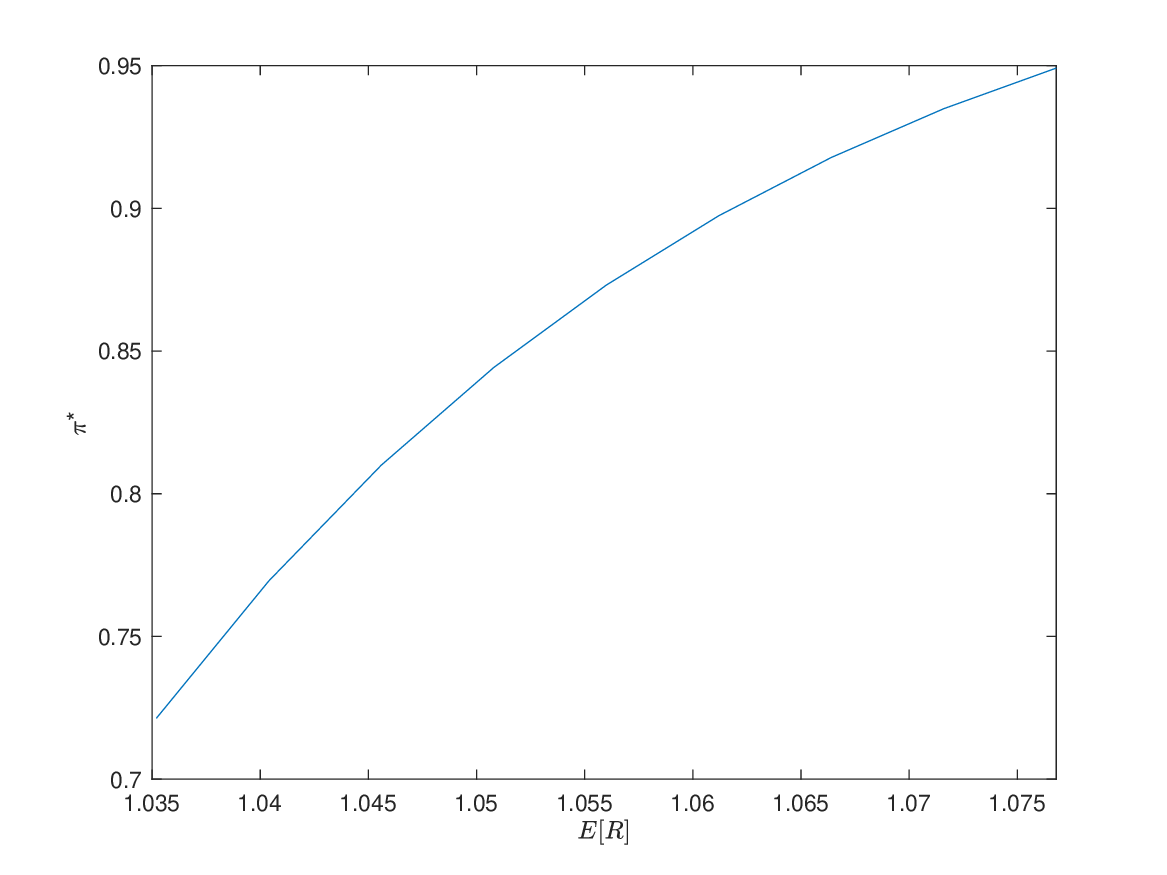}
\includegraphics[width=7.96cm]{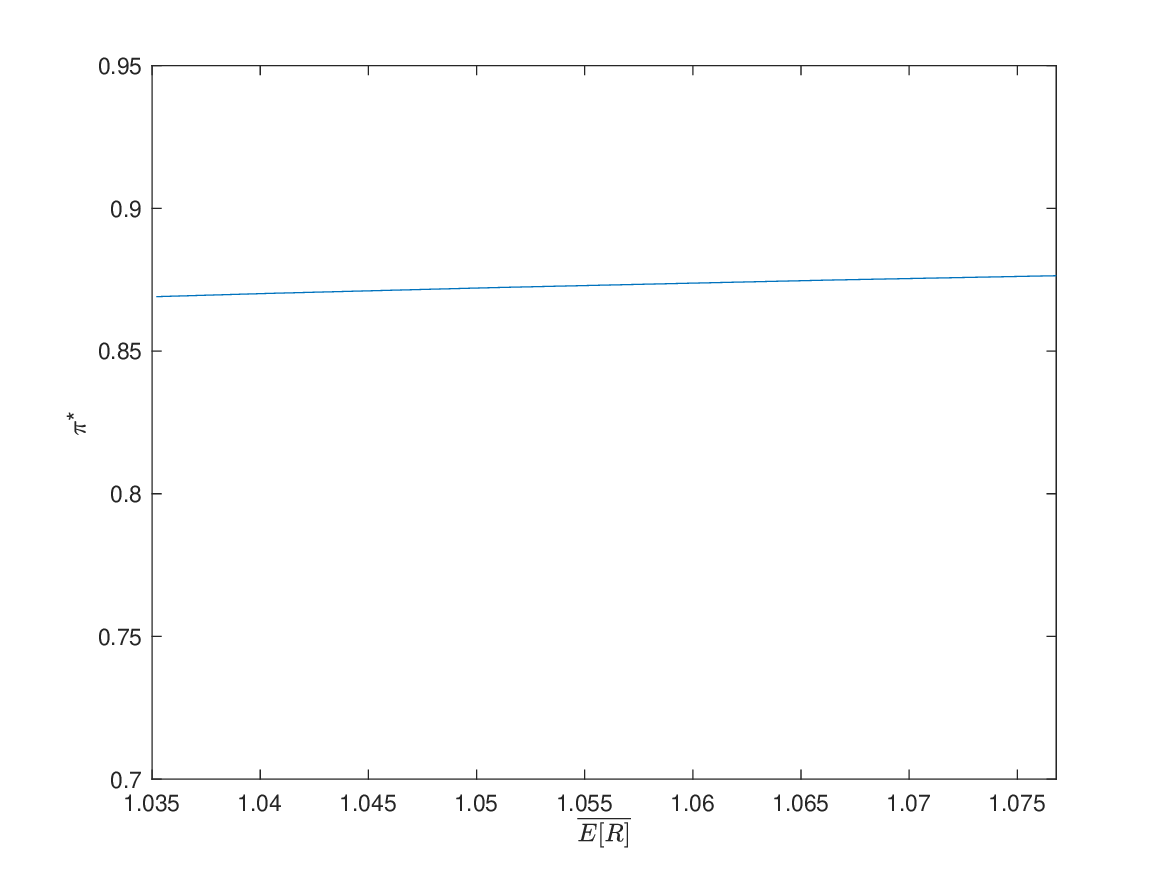}
\caption{Equilibrium strategy of a fixed agent when varying stock return of that agent or the network.\\
{\footnotesize{\textit{Notes}. On the left-hand side of Figure \ref{fig3}, we increase the expected stock return of the fixed agent while keeping the parameters of the network fixed. On the right-hand side of Figure \ref{fig2}, we fix the parameters of agent and vary the distribution of the expected stock return of the network. 
This is obtained by varying the upper price level $u$ as $\mathcal{U}[1.12+0.01z,1.2+0.01z]$, $z\in (0,8)$, while keeping the lower price level $d$ fixed.}}
}
  \label{fig3}
\end{figure}

%Note that the varying expected return here is indirectly realized by randomly generating upper price level $u_t$ on $[1.12,1.2], [1.13,1.21],\dots,[1.2,1.28]$, an increasing sequence of intervals with common width 0.08.

\subsubsection{Skewness}

A key advantage of working with a discrete market over an Itô-diffusion model is the ability to vary skewness independently and examine its effects on investment behaviour.
While skewness may change with parameters in continuous-time models, it is generally not an independent modeling dimension. In diffusion settings such as the Black–Scholes model, the return distribution is fully determined by the drift and volatility parameters. As a result, changes in skewness are inseparable from changes in other distributional features, making it difficult to isolate the economic role of skewness. Moreover, asset prices are lognormally distributed and thus exhibit strictly positive skewness; negative skewness cannot be studied within this framework.
By contrast, discrete-time models retain additional degrees of freedom that allow the support of returns (including extreme outcomes) and their transition probabilities to be specified separately. This flexibility makes it possible to hold key quantities of interest, such as the mean return or downside exposure, fixed. For this reason, discrete-time models provide a more suitable framework to isolate and study the impact of skewness.
A binomial stock is positively skewed if the mean is greater than the median, meaning the stock price has a higher probability of going down. Conversely, the price of a binomial stock with a negatively skewed distribution has a higher probability of going up. This allows us to study the effects of lottery-type stocks with large chances of a small loss and small chances of a large gain that are considered in the market.

%{\ms The difference is not because of the distribution, but because of different preferences. In \cite{barberis2008stocks}, they consider probability weighting, and this explains a preference for skewness. }
%We remark that different from the best known studies in behavioral economics by \cite{barberis2008stocks}, as a special case, 

%{\ms Investors liking skeweness is a positive observation, not a normative finding. We don't have to justify that our prediction differs.}

%The reason why people in our setup do not necessarily have a preference for the discrete binomial stocks with positive skewness can be twofold.
%First, the stock considered herein follows two point distribution thus has only two possible outcomes of returns, therefore varying the upper return level and its probability while fixing expected return and lower return level not only makes the right tail long and thin which implies that there is some probability of obtaining more extreme gains compared to the frequent small loss, but also influences the loss part to the left of the mean by increasing the probability of the small loss. People are not willing to take on skewness risk when constructing their portfolios in view of greater chance of potential loss they may face. Furthermore, as we compute in Figure \ref{fig4} the equilibrium strategy for each single trading period, we argue that, in practice, short-term investors especially ought to concern about extreme events because they are less likely to hold a position long enough to be confident that the overall return will average out.

\begin{figure}[H]
   \includegraphics[width=7.96cm]{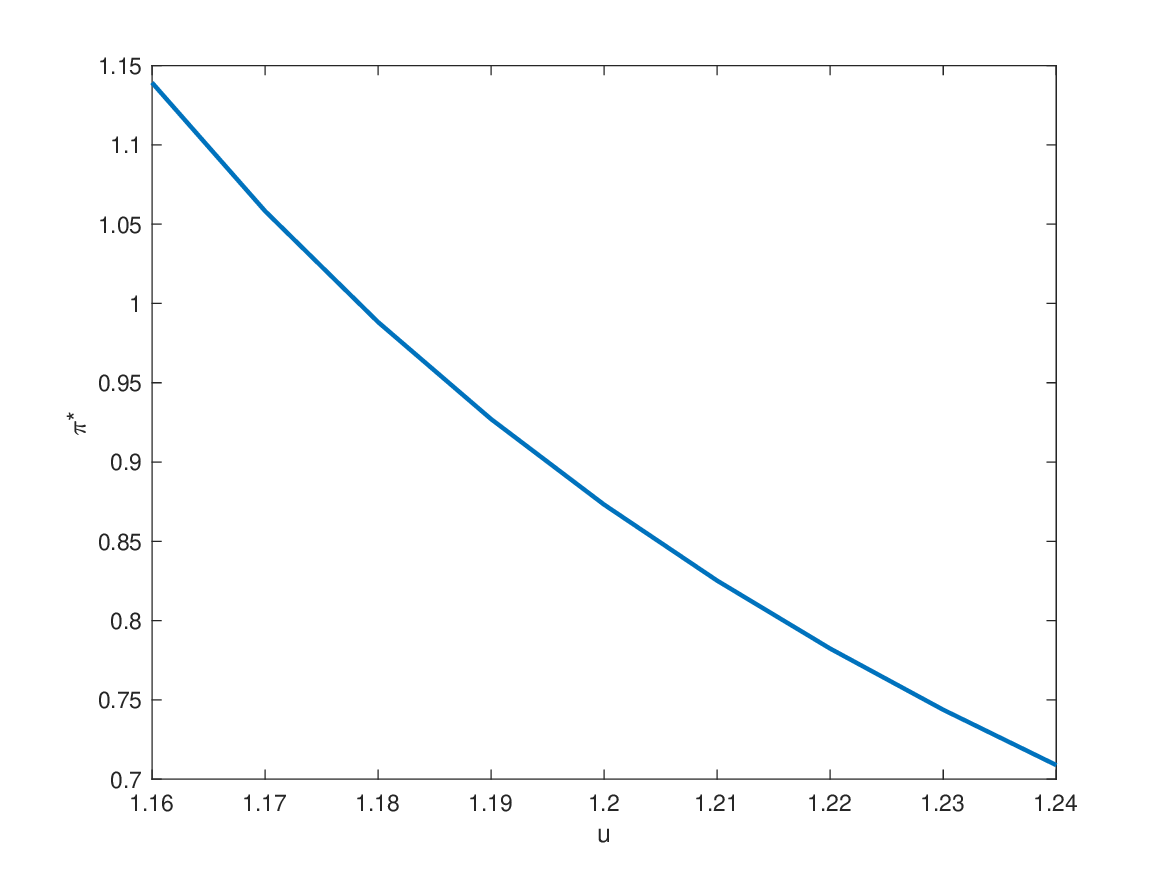}
  \includegraphics[width=7.96cm]{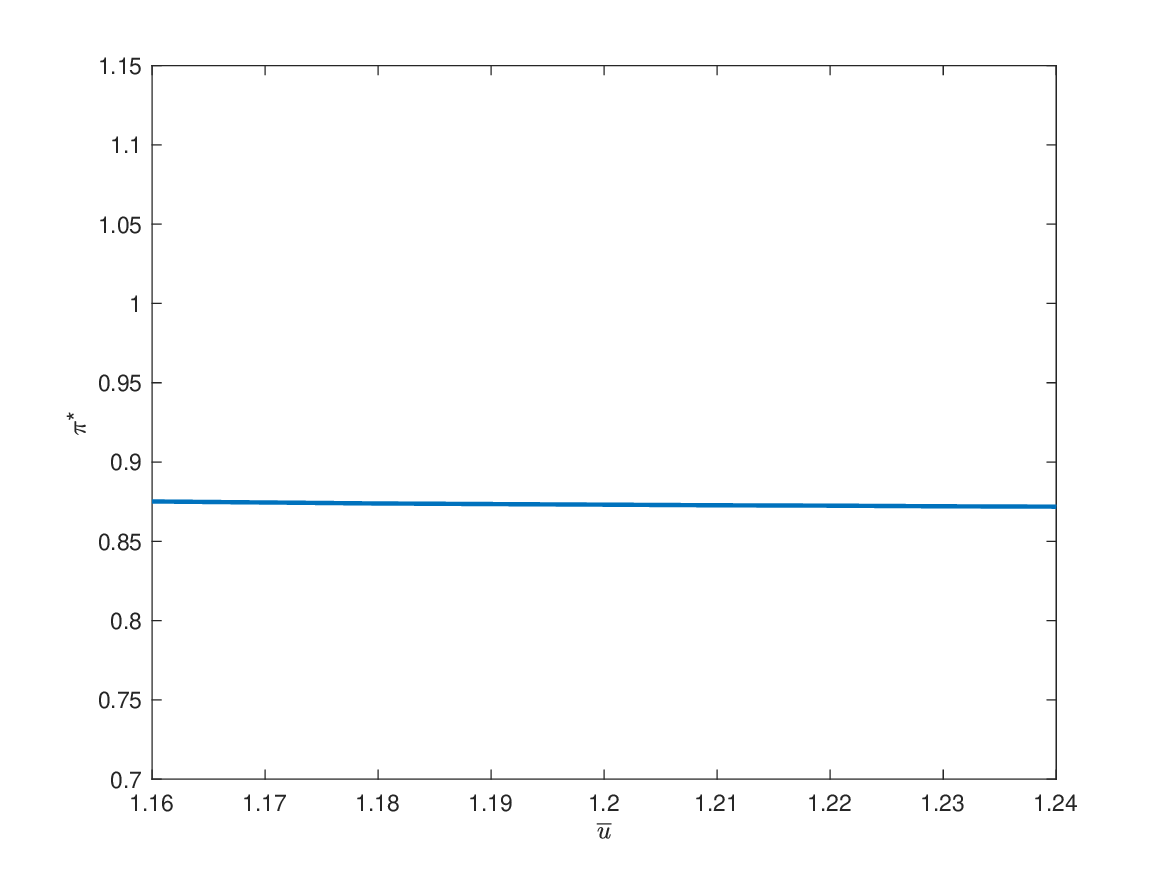}\\
  \includegraphics[width=7.96cm]{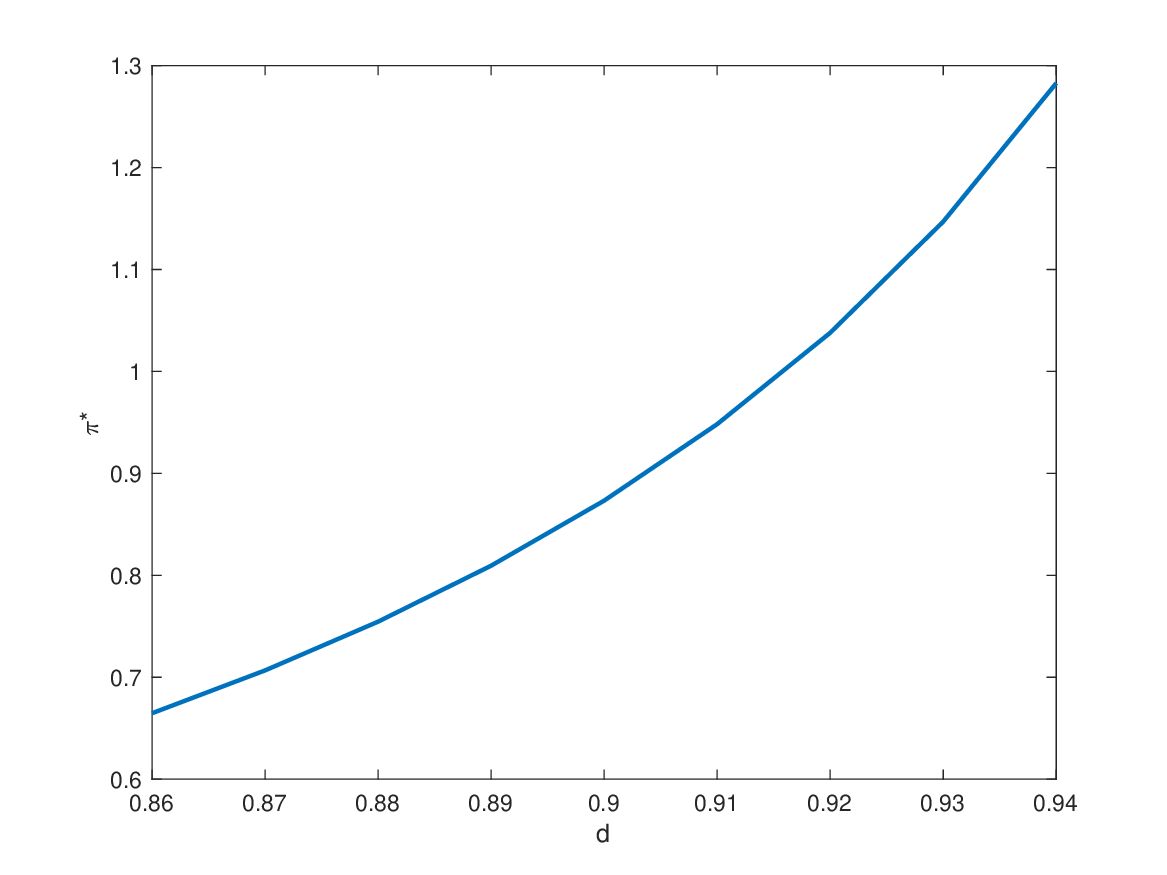}
  \includegraphics[width=7.96cm]{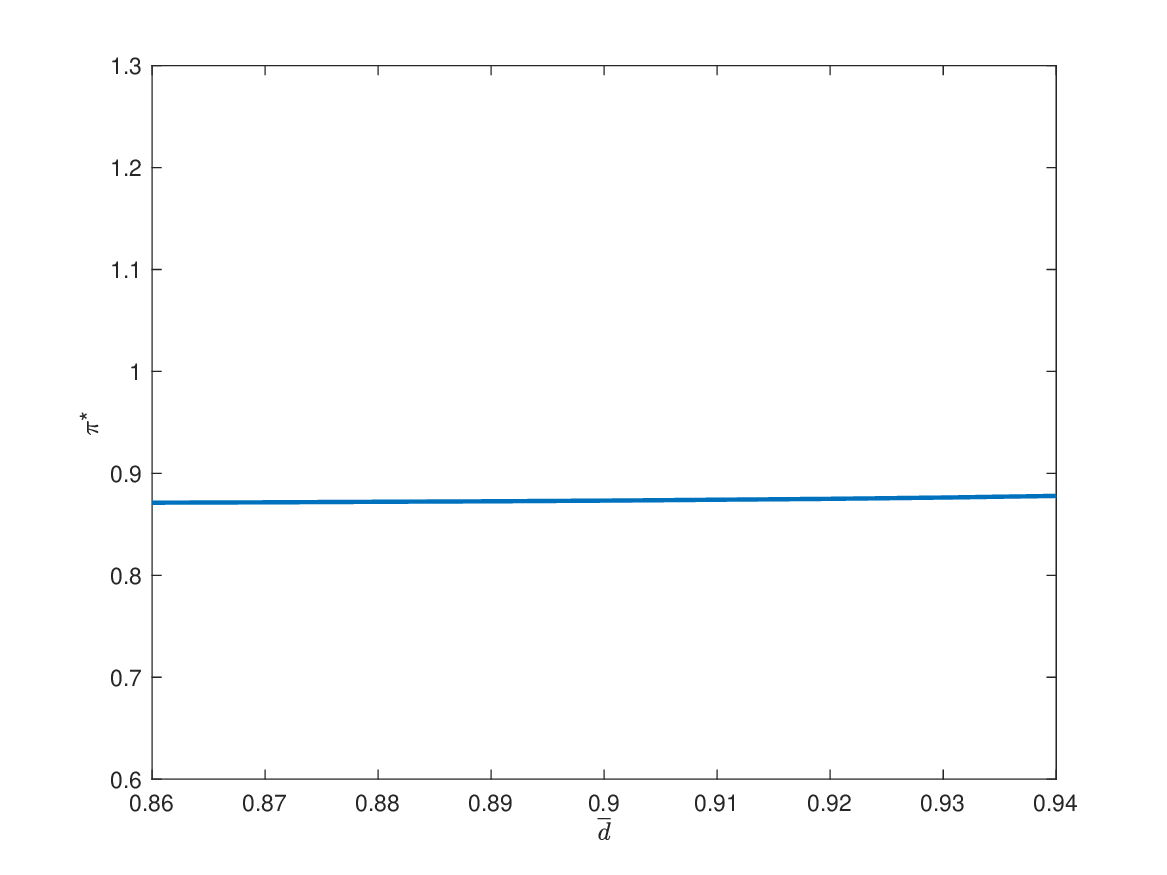}
  \caption{Equilibrium strategy of a fixed agent when varying stock skewness of that agent or the network.\\
  {\footnotesize{\textit{Notes}. On the left-hand side of Figure \ref{fig4},  we increase the positive or negative skewness of the fixed agent by varying price level $u$ or $d$ and the transition probability
  while keeping other parameters and the expected stock return of the network fixed.
  On the right-hand side of Figure \ref{fig4}, we fix the parameters of the agent and vary the distribution of the positive or negative skewness of the network. 
  This is obtained by varying the upper price level $u$ as $\mathcal{U}[1.12+0.01z,1.2+0.01z]$, $z\in (0,8)$, respectively the lower price level $d$ as $\mathcal{U}[0.9-0.01z,0.98-0.01z]$, $z\in (0,8)$, with the transition probability adjusted such that the expected stock return remains constant. 
  % {\ms How about the variance? Our response to Comment 1.8 and the new paragraph at the beginning of this subsection on skewness suggest that this is also held constant. }
}}}
  \label{fig4}
\end{figure}

The left-hand side of Figure \ref{fig4} shows the equilibrium strategy of a given agent when varying the skewness of the stock traded by that agent while the right-hand side shows the equilibrium strategy of the agent when varying the distribution of skewness of the stocks in the network. 
When the stock is positively skewed, we vary skewness by varying the upper return level and the probability of an increase in stock price simultaneously such that the expected return and lower return level remain constant. 
Thus, greater $u$ and $\bar{u}$ correspond to higher positive skewness of the stocks. 
In the second row, we consider stocks with negative skewness.
We vary the lower return level and the probability of going down simultaneously such that the expected return and upper return level remain constant.
Thus, lower $d$ and $\bar{d}$ represent more negatively skewed stocks.

Empirical research, for example, \cite{arditti1967risk}, \cite{brunnermeier2007optimal}, \cite{scott1980direction} and \cite{ebert2021skewness}, finds that investors prefer assets with positively skewed return distribution and dislike assets with negatively skewed return distribution. 
In contrast, we find that an investor with PRFPPs does neither like positively nor negatively skewed assets: Strategies are decreasing in the absolute value of skewness.
Furthermore, for a fixed agent, as the stocks traded by competitors become more skewed, the agent chooses to decrease her investment in the risky asset as an indirect effect.

\subsubsection{Correlation}
This subsection discusses the impact of the correlation between the stocks held by different agents on the sensitivity of the MFE with respect to the degree of relative performance concerns of the whole population for any given trading period $[t-1,t)$.

% {\ms Rewrite the below more clearly. 1) Explain how correlation is varied. 2) In an MFG setting, don't use dependence on $i$.}
% {\color{blue} Revised in the below, please check. 1)The correlation is varied by adjusting conditional probabilities as explained below. 2) We are solving the MFE by considering the optimization faced by the representative agent, who is in a population group consisting of infinitely many i.i.d. copies of the representative agent given the non-traded stochastic factor, i.e., any agent in the group, without generality indexed by $i$, with type vector $\psi_{(i)}$ faces conditional i.i.d. copies of the same optimization problem of the representative agent. For example when defining the covariance, we just use different index $(i), (j)$ to represent two conditionally i.i.d. random variables $R_{(i),t}, R_{(j),t}$ given the non-traded stochastic factor. \\}

Recall that we interpret the mean field game as an economy comprising an infinite number of agents, each having type vectors drawn as i.i.d. copies of the type vector of the representative agent, given the non-traded stochastic factor.
To examine the impact of correlations between different stocks, we fix the unconditional probability and vary only the dependence structure given non-traded stochastic factor by changing the conditional probabilities. 
For simplicity, we assume that there is no randomness in the type vector of the representative agent arising from $\P[B_t=1\vert \mathcal{F}_{t-1}^{MF}]=p_t$ and price levels $u_t, d_t$. 
The unconditional (in terms of non-traded stochastic factor) probability of stock price going up, price levels, along with the variance of any agent indexed by $i$ in the MFG, are assumed to 
%follow one-point distribution {\ms What is a one-point distribution? Constant?} in the sense that they are 
be $\mathcal{F}_{t-1}^{RC}$-measurable, where $\mathbb{F}^{RC} = (\mathcal{F}_t^{RC})_{t \in \mathbb{N}_0}$ represents the natural filtration generated by $(p_{m}^{cn})_{m=1}^{t+1}$, $(\xi_{m}^{cn})_{m=1}^{t}$ and $(B_{(i),m})_{m=1}^{t}, i \in \mathbb{N}$.

Covariance is the only characteristic of the stock return we vary, and this is done by varying the conditional probabilities $p_{(i),t}^1$ and $p_{(i),t}^0$ of any agent $i$.
By the law of total probability, we have
\begin{align*}
    p_t=\P[B_{(i),t} = 1,\xi_t^{cn}=1\vert \mathcal{F}_{t-1}^{MF}]+\P[B_{(i),t} = 1,\xi_t^{cn}=0\vert \mathcal{F}_{t-1}^{MF}]=(1-p_t^{cn})p_{(i),t}^0+p_t^{cn}p_{(i),t}^1.
\end{align*}
We point out that $p_{(i),t}^1$ and $p_{(i),t}^0$ should be adjusted in a certain manner such that $p_t$ remains $\mathcal{F}_{t-1}^{RC}$-measurable. 
%{\ms Should it not remain constant?}

We consider two agents of the population, $i \neq j$, with type vectors $\psi_{(i)}$ and $\psi_{(j)}$ respectively. Considering that the covariance refers to the degree to which any two agents are dependent, we herein need to consider an enlarged filtration $\mathbb{G}^{MF} = (\mathcal{G}_t^{MF})_{t \in \mathbb{N}_0}$, with $\mathcal{G}_t^{MF}$ being generated by $(p^{cn}_{m})_{m=1}^{t+1}$, $(\xi_{m}^{cn})_{m=1}^{t}$, $\eta_{(i),0}$, $(\zeta_{(i),m})_{m=1}^{t+1}$ and $(B_{(i),m})_{m=1}^{t}, i\in \mathbb{N}$. \textit{Assume that $B_{(i),t}$ is conditionally independent of $B_{(j),t}$ given $\mathcal{G}_{t-1}^{MF}\vee \sigma(\xi_t^{cn})$.}
%{\ms Why do we need to consider this filtration? Can we express it in terms of the two original filtrations? }
Define the covariance of the stock returns over trading period $[t-1,t)$ in this mean field setup by
\begin{align*}
{\rm{cov}}_t^{\psi}:={\rm{cov}}\left[R_{(i),t}, R_{(j),t}\big\vert\mathcal{G}_{t-1}^{MF}\right]=\E[R_{(i),t}R_{(j),t}\vert\mathcal{G}_{t-1}^{MF}]-\E[R_{(i),t}\vert\mathcal{G}_{t-1}^{MF}]\E[R_{(j),t}\vert\mathcal{G}_{t-1}^{MF}].
\end{align*}
%{\ms Why this notation? It looks like the covariance of the type vector.}
The proposition below formulates the covariance structure of the stocks traded by different agents in the group.
\begin{proposition}\label{Prop:Covariance}
For any two agents of the population group in MFG, without generality we assume that they are the $i$th agent and the $j$th agent respectively, the covariance between the returns of their specialized stocks is given by
\begin{align}\label{Formula:Cov}
     {\rm{cov}}_t^{\psi}=(u_t-d_t)^2\left((1-p_t^{cn})p_{(i),t}^0p_{(j),t}^0+p_t^{cn}p_{(i),t}^1p_{(j),t}^1-(p_t)^2\right).
\end{align}
\end{proposition}

Note that when there is no such non-traded stochastic factor affecting the transition probability of each risky asset, i.e., their correlation equals zero, we have $p_{(i),t}^0p_{(j),t}^0=p_{(i),t}^1p_{(j),t}^1=(p_t)^2$, and the expression of covariance in (\ref{Formula:Cov}) is indeed zero as expected.

It can be inferred from (\ref{Formula:Cov}) that the distribution of ${\rm{cov}}_t^{\psi}$ depends on the distribution of $p_{(i),t}^0$ and $p_{(j),t}^0$, which are i.i.d. random variables according to $p_t^{0}$.
With the aim of more explicitly comparing and investigating the investment strategies among different population groups, we consider in our problem setup homogeneous agents for each single network, i.e., $p_{(i),t}^0=p_t^{0}$ and $p_{(i),t}^1=p_t^{1}$ are now  $\mathcal{F}_{t-1}^{RC}$-measurable. Then the stock returns of any two agents from a fixed network have a fixed correlation, and the degree to which any two stocks considered in a MFG move in the same direction can be measured by a single number so that we are able to vary the covariance structure through the whole network.

By Proposition \ref{Prop:Covariance}, the covariance can be expressed as a function of all the model parameters. In the following analysis, we vary  one of them, namely, $p_t^{1}$, while keeping the unconditional probability unchanged by adjusting $p_t^{0}$ accordingly.
This variation allows us to trace out the dependence structure, first between two stocks and then in the MFG in general. To elaborate further, $p_t^{1}$ can be set by starting from the value of $p_t$, which corresponds to a scenario where the non-traded stochastic factor contains no information and every stock is uncorrelated. Subsequently, we increase the value of $p_t^{1}$ until it approaches the limiting value $1$, representing a setup where all stocks are perfectly correlated.

\begin{figure}[H]
\centering
  \includegraphics[width=11cm]{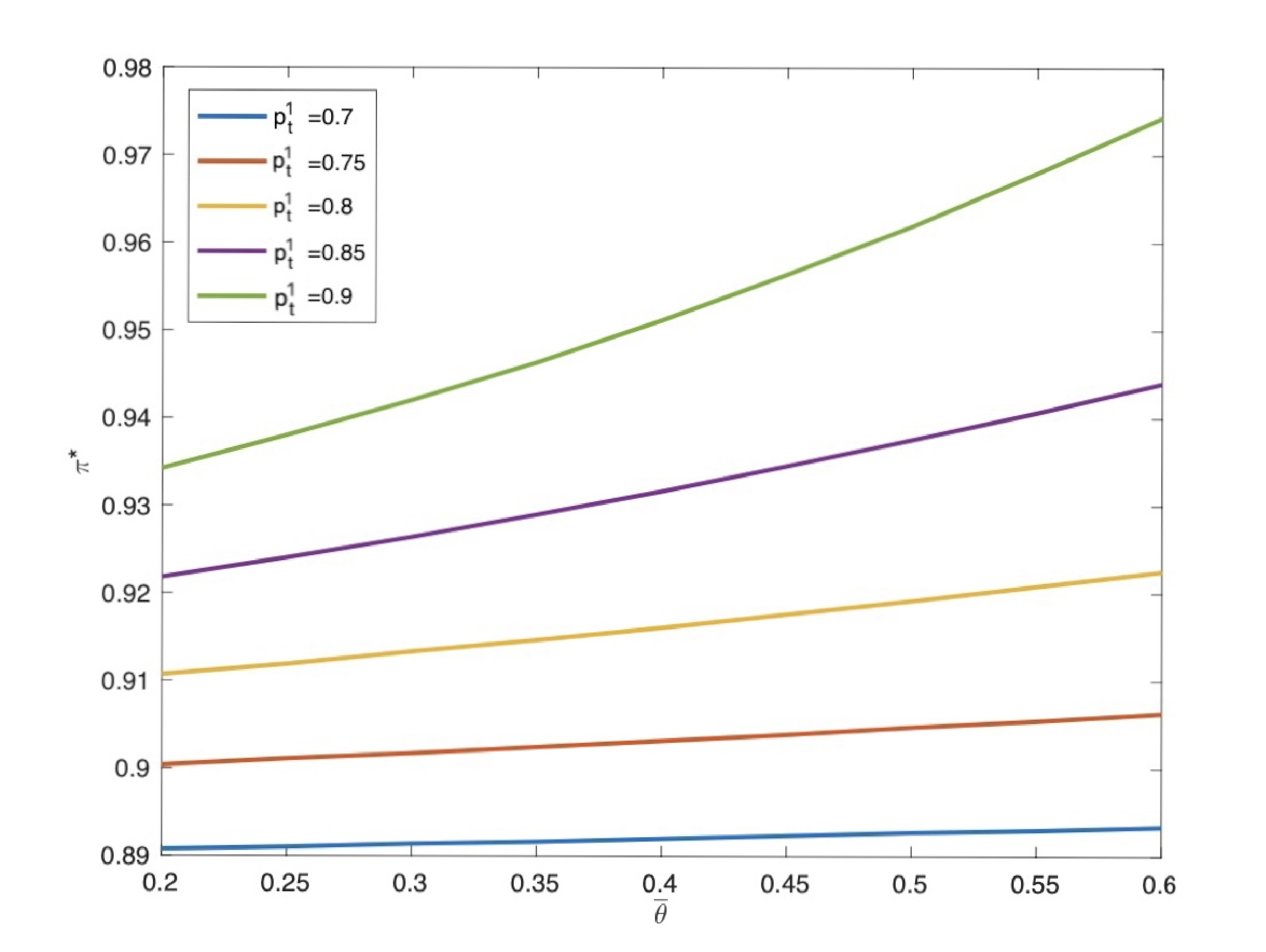}
  \caption{Equilibrium strategy of a fixed agent when varying competition weight of the network with different correlation structure.\\
  {\footnotesize{\textit{Notes}. We fix the parameters of agent and vary the distribution of the competition weight as $\mathcal{U}[0.05z,0.4+0.01z]$, $z\in (0,8)$. Different color refers to different $p_t^{1}$ but constant probability of an upward move.
}}}
  \label{fig5}
\end{figure}

We limit our analysis on a numerical example comparing five distinct scenarios characterized by varying values of $p_t^{1}$.
Specifically, we set $p_t^{1}$ at $0.9, 0.85, 0.8, 0.75, 0.7$.
We then proceed to plot and investigate the sensitivity of the MFE concerning the dependence structure among all agents.
As illustrated in Figure \ref{fig5}, we observe a noteworthy pattern: When the competing agents exhibit higher levels of correlation, the rate at which a given agent's MFE increases with respect to the population competition weight accelerates.

As demonstrated in Proposition \ref{Prop:NoCommonNoise}, when the risky assets are completely uncorrelated, the MFE $\pi^{*}$ remains independent of the average competition weight $\overline{\theta}$ characterizing the population. At the opposite extreme, the effect of relative performance concerns becomes most pronounced when all stocks are fully correlated.

\section{Conclusion}\label{sec:Conclusion}
We studied discrete-time 
predictable forward performance processes under relative performance concerns in an $N$-agent game and a mean field game setting, where both the performance measurement and the model parameters are evaluated and dynamically updated at the beginning of each trading period. 
Our main contribution is the explicit construction of PRFPPs, along with the associated Nash and mean field equilibria under exponential performance criteria. 
We also showed that the equilibrium for the $N$-agent game converges to the mean field equilibrium in distribution.
We then conducted a numerical analysis on the properties of the MFE and showed that
changing an individual agent's parameters has a more pronounced effect than adjusting the network-wide average of the same parameter. 
An agent driven by relative performance concerns increases her stock holdings when competitors demonstrate higher competitiveness, lower risk aversion, or when the competitors' stocks exhibit reduced volatility, increased expected returns, or skewness.

 \section*{Acknowledgments}
The authors are grateful to Larbi Alili, Guanxing Fu, Martin Herdergen, Yinlian Zeng, Chao Zhou, and two anonymous referees for their helpful comments and suggestions.
This work was presented at NUS and CityU HK Quantitative Finance \& Mean Field Game Seminar Series 2023, Recent Advances on Quantitative Finance 2023, The 5th (2023) Academic Conference on Quantitative Finance and Insurance of the Chinese Society of Optimization, Overall Planning and Economical Mathematics.
The authors thank the participants for their comments.
This research was supported by the Gillmore Centre for Financial Technology. 
Author order is alphabetical. All authors are co-first authors of this paper. 

\section*{Appendix. Discussion of the two extreme cases}
\renewcommand{\thesubsection}{\Alph{subsection}}
In this appendix, we study the Nash and mean field equilibria in two extreme cases: (1) when each risky asset is identical almost surely, and (2) when there is no non-traded stochastic factor.
These two cases will yield more tractable and interpretable solutions.
%{\ms Instead of the examples, summarize the main results in two propositions, delegate the derivation to the appendix, and expand on the discussion and interpretation of the results. }

\subsection{Identical stock case}

Some of the effects of competition and risk aversion are more transparent if all agents trade exactly the \textit{same stock}, characterized by transition probability $p_t$ and price levels $u_t$ and $d_t$ over the trading period $[t-1, t)$. 
This setting can be studied within our model by considering stocks that are identical almost surely.
Let $q_t=(1-d_t)/(u_t-d_t)$, $t\in\mathbb{N}$.
In the following proposition, we describe the resulting forward Nash equilibrium of the $N$-agent game and MFE explicitly.

\begin{proposition}\label{Prop:NAgentGameSingleStock}
If each risky asset is identical almost surely,
the forward Nash equilibrium of the $N$-agent game and the MFE in the game of infinite agents over trading period $[t-1, t)$, $t\in\mathbb{N}$, are respectively given by
\begin{align}\label{NagentGameSingleStock}
    \pi_{(i),t}^{*}=\frac{1}{(u_t-d_t)}\left(\ln\frac{p_t(1-q_t)}{(1-p_t)q_t}\right)\left(\frac{1}{\gamma_{(i)}}+\frac{\theta_{(i)}\sum\limits_{k=1}^N\frac{1}{\gamma_{(k)}}}{N\left(1-\frac{\sum\nolimits_{k=1}^N\theta_{(k)}}{N}\right)}\right),
\end{align}
and
\begin{equation}\label{MFESingleStock}
\begin{aligned}
    \pi_t^*=\frac{1}{u_t-d_t}\left(\ln\frac{p_t(1-q_t)}{(1-p_t)q_t}\right)\left(\frac{1}{\gamma}+\frac{\theta\overline{1/\gamma}}{(1-\overline{\theta})}\right).
\end{aligned}
\end{equation}
where $\overline{1/\gamma}=\lim\limits_{N\rightarrow +\infty}\frac{1}{N}\sum\limits_{k=1}^{N}\frac{1}{\gamma_{(k)}}$ and $\overline{\theta}=\lim\limits_{N\rightarrow +\infty}\frac{1}{N}\sum\limits_{i=1}^{N}\theta_{(i)}$. Furthermore, fix $\psi_t=\widetilde{\psi}_{(i),t}$, then $\pi_{(i),t}^{*}$ in (\ref{NagentGameSingleStock}) converges to (\ref{MFESingleStock}) in distribution as $N\rightarrow \infty$. 
%{\ms The above equation is ambiguous. What does the $log$ apply to? Don't sum over $i$ when you consider the strategies of the $i'th$ agent. Also, to be consistent, only use $[\cdot]$ brackets for operators such as the expectation.}
\end{proposition}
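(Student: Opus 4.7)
The plan is to derive each of the three assertions directly, without relying on Theorem~\ref{Thm:RelativeForwardNAgents} or Theorem~\ref{Thm:RelativeForwardMFG}, whose constructions presuppose conditional independence of the individual shocks $B_t^{(i)}$ across agents---an assumption that collapses when all agents trade a single common stock. For the $N$-agent formula (\ref{NagentGameSingleStock}), I would first observe that since every agent's return equals $R_t$, the relative wealth of agent $i$ evolves one-dimensionally as
\[
\tilde X_t^{(i)} = \tilde X_{t-1}^{(i)} + \alpha_t^{(i)}(R_t-1), \qquad \alpha_t^{(i)} := \pi_t^{(i)} - \frac{\theta^{(i)}}{N}\sum_{k=1}^{N}\pi_t^{(k)}.
\]
Treating the competitors' strategies as fixed, the CARA one-period problem in the effective exposure $\alpha_t^{(i)}$ is the classical single-asset binomial Merton problem, whose optimum is $\alpha_t^{(i),*}=L_t/\gamma^{(i)}$ with $L_t:=(u_t-d_t)^{-1}\ln\bigl(p_t(1-q_t)/((1-p_t)q_t)\bigr)$. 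The forward Nash equilibrium therefore satisfies the linear system $\pi_t^{(i),*}-\frac{\theta^{(i)}}{N}\Pi_t^{*}=L_t/\gamma^{(i)}$ for $i=1,\dots,N$, where $\Pi_t^{*}:=\sum_{k}\pi_t^{(k),*}$. Summing over $i$ eliminates the individual strategies on the left-hand side and yields $(1-\bar\theta^{(N)})\Pi_t^{*}=L_t\sum_k 1/\gamma^{(k)}$, and back-substitution recovers (\ref{NagentGameSingleStock}). Verification of the PRFPP axioms of Definition~\ref{def:RelativeFrowardPreferences} is then a routine induction using an exponential form $U_t^{(i)}(\tilde x)=-e^{-\gamma^{(i)}\tilde x}/D_t^{(i)}$ with an $\mathcal{F}_{t-1}$-measurable normaliser $D_t^{(i)}$.

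For the MFE (\ref{MFESingleStock}) I would proceed analogously. Since the single stock is common to all agents, the population-average wealth obeys $\bar X_t=\bar X_{t-1}+\bar\pi_t(R_t-1)$ with $\bar\pi_t$ the type-average of $\pi_t$, so the representative agent's relative wealth evolves as $\tilde X_t=\tilde X_{t-1}+(\pi_t-\theta\bar\pi_t)(R_t-1)$. The same one-period CARA analysis forces the best-response identity $\pi_t-\theta\bar\pi_t=L_t/\gamma$. Averaging this identity over the type distribution produces the scalar fixed-point equation $(1-\bar\theta)\bar\pi_t=L_t\,\overline{1/\gamma}$, whose unique solution $\bar\pi_t=L_t\overline{1/\gamma}/(1-\bar\theta)$ is well-defined because $\bar\theta<1$. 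Substituting back into the best-response identity yields (\ref{MFESingleStock}), and the PRFPP is recovered by the standard exponential construction.

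Finally, the convergence claim follows from the law of large numbers. Fixing $\psi_t=\tilde\psi_t^{(i)}$ holds $\gamma^{(i)}$ and $\theta^{(i)}$ constant, while the remaining $N-1$ type vectors are i.i.d.\ samples from the limiting type distribution, so $\frac{1}{N}\sum_{k=1}^{N}1/\gamma^{(k)}\to\overline{1/\gamma}$ and $\bar\theta^{(N)}\to\bar\theta$ in probability. Since $\bar\theta<1$, the continuous mapping theorem applied to (\ref{NagentGameSingleStock}) gives convergence in probability of $\pi_t^{i,*}$ to the right-hand side of (\ref{MFESingleStock}), hence convergence in distribution. The main conceptual hurdle is to recognise that the earlier theorems cannot be specialised term-by-term to the perfectly correlated single-stock setting; once one redoes the best-response and population-averaging steps in this degenerate regime, the remaining work reduces to elementary linear algebra and a standard LLN argument.
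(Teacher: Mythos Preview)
Your proposal is correct and matches what the paper intends: the paper omits the proof entirely, saying only that it ``follows by direct computation and [is] akin to the proofs of previous theorems,'' and your reduction to the effective exposure $\alpha_t^{(i)}=\pi_t^{(i)}-\frac{\theta^{(i)}}{N}\sum_k\pi_t^{(k)}$, followed by solving the resulting linear system (respectively scalar fixed point) and a CARA normalisation, is precisely such a direct computation. Your observation that Theorem~\ref{Thm:RelativeForwardNAgents} cannot be invoked verbatim---because its derivation relies on the conditional independence of the $B_t^{(i)}$ given $\mathcal{F}_{t-1}\vee\sigma(\xi_t^{cn})$, which fails when all agents share the same $B_t$---is a useful clarification the paper leaves implicit; once this is noted, the route you take is the natural one and there is no substantive difference in approach.
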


The Nash equilibrium is of Merton type but with a modified risk tolerance, which depends on the individual and average risk aversion and competition parameters.
Furthermore, we observe that in the identical stock case, the limiting optimal strategy converges to the MFE in distribution.
With (\ref{MFESingleStock}), we are able to study the monotonicity of the MFE $\pi_t^{*}$ for the identical stock case analytically. 
First, we know that when the market offers positive expected excess return, then $\ln\frac{p_t(1-q_t)}{(1-p_t)q_t}>0$. 
Indeed, when $\E[R_t-1\vert \mathcal{F}_{t-1}^{MF}]>0$, we have $p_t(u_t-1)+(1-p_t)(d_t-1)>0$, and hence $\frac{p_t(1-q_t)}{(1-p_t)q_t}=\frac{p_t(u_t-1)}{(1-p_t)(1-d_t)}>1$ naturally follows. 
Second, taking first-order derivatives with respect to $\theta$, $\gamma$, $\overline{\theta}$, and $\overline{1/\gamma}$ yields
\begin{align*}
    &(\pi_t^*)^{'}(\theta)=\frac{1}{u_t-d_t}\frac{\overline{1/\gamma}}{(1-\overline{\theta})}\ln\frac{p_t(1-q_t)}{(1-p_t)q_t}>0,
    \\&(\pi_t^*)^{'}(\gamma)=\frac{-1}{u_t-d_t}\frac{1}{\gamma^2}\ln\frac{p_t(1-q_t)}{(1-p_t)q_t}<0,
    \\&(\pi_t^*)^{'}(\overline{\theta})=\frac{1}{u_t-d_t}\frac{\theta \overline{1/\gamma}}{(1-\overline{\theta})^2}\ln\frac{p_t(1-q_t)}{(1-p_t)q_t}>0,
    \\&(\pi_t^*)^{'}(\overline{1/\gamma})=\frac{1}{u_t-d_t}\frac{\theta}{(1-\overline{\theta})}\ln\frac{p_t(1-q_t)}{(1-p_t)q_t}>0.
\end{align*}
From the above, we conclude that $\pi_t^{*}$ is increasing in $\theta$, $\overline{\theta}$, and $\overline{1/\gamma}$, while decreasing in $\gamma$.

\subsection{No non-traded stochastic factor}
Next, we study the other extreme case where there is no non-traded stochastic factor, or equivalently, all the risky assets are mutually independent.
\begin{proposition}\label{Prop:NoCommonNoise}
If the returns of all risky assets are mutually independent, the forward Nash equilibrium of the $N$-agent game and the MFE in the game of infinite agents over trading period $[t-1, t)$, $t\in\mathbb{N}$, are respectively given by
\begin{align}\label{NagentGameNoCN}
    \pi^{*}_{(i),t}=\frac{1}{\gamma_{(i)}(1-\frac{\theta_{(i)}}{N})(u_{(i),t}-d_{(i),t})}\ln\frac{p_{(i),t}(1-q_{(i),t})}{(1-p_{(i),t})q_{(i),t}},
\end{align}
and 
\begin{align}\label{MFENoCN}
    \pi_t^{*}=\frac{1}{\gamma(u_t-d_t)}\ln\frac{p_t(1-q_t)}{(1-p_t)q_t}.
\end{align}
Furthermore, fix $\psi_t=\widetilde{\psi}_{(i),t}$, then $\pi_{(i),t}^{*}$ in (\ref{NagentGameNoCN}) converges to (\ref{MFENoCN}) in distribution as $N\rightarrow \infty$. 
\end{proposition}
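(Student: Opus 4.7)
The plan is to specialize Theorem \ref{Thm:RelativeForwardNAgents}, Lemma \ref{Lemma:fixedpoint}, and Lemma \ref{lemmaSupMFG} to the regime in which stocks are mutually independent. The key reduction is that ``no common noise'' is equivalent to $p_t^{(i),1}=p_t^{(i),0}=:p_t^{(i)}$ for every $i$, since under conditional independence of the $B_t^{(i)}$ given $\mathcal{F}_{t-1}\vee\sigma(\xi_t^{cn})$ the stocks become unconditionally (given $\mathcal{F}_{t-1}$) independent precisely when their conditional success probabilities no longer depend on $\xi_t^{cn}$. Under this condition the quantities $C_{jt}^{(i),1}$ and $C_{jt}^{(i),0}$ defined after \eqref{expressionB2} collapse to a common value $C_{jt}^{(i)}$, so
$$A_t^{1(i)}=p_t^{(i)}\prod_{j\neq i}C_{jt}^{(i)},\qquad A_t^{2(i)}=(1-p_t^{(i)})\prod_{j\neq i}C_{jt}^{(i)},$$
and their ratio equals $p_t^{(i)}/(1-p_t^{(i)})$, independently of competitor policies. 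Inserting this into the best-response formula \eqref{OptimalStrategyN1} of Theorem \ref{Thm:RelativeForwardNAgents} yields \eqref{NagentGameNoCN}, and since the resulting expression contains no $\pi_t^{(j),*}$ for $j\neq i$, the system of best responses is automatically a fixed point and the profile constitutes a forward Nash equilibrium.

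For the mean field equilibrium I would apply Lemma \ref{Lemma:fixedpoint} and Lemma \ref{lemmaSupMFG} with $p_t^1=p_t^0=p_t$. This gives $\Delta_t^1=\Delta_t^0$, so the coefficient $(\Delta_t^1-\Delta_t^0)/[\gamma(u_t-d_t)]$ appearing both inside the conditional expectation in \eqref{fixedpoint} and in $\phi_1$ vanishes identically. Hence $F(t,\cdot,\omega)\equiv 0$ and its unique fixed point is $y_t^*=0$. Substituting $y_t^*=0$ into \eqref{OptimalStrategyMFG} collapses $p_t^{cn}e^{0}p_t^1+(1-p_t^{cn})p_t^0$ to $p_t$ (and analogously for the denominator), producing the claimed MFE formula \eqref{MFENoCN}.

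The convergence claim is then immediate. Fixing a realization $\psi_t=\tilde\psi_t^{(i)}$ identifies $\gamma^{(i)}=\gamma$, $\theta^{(i)}=\theta$, $u_t^{(i)}=u_t$, $d_t^{(i)}=d_t$, $p_t^{(i)}=p_t$, and $q_t^{(i)}=q_t$, so $\pi_t^{(i),*}$ in \eqref{NagentGameNoCN} equals $\frac{1}{\gamma(1-\theta/N)(u_t-d_t)}\ln\frac{p_t(1-q_t)}{(1-p_t)q_t}$, which tends pointwise to $\pi_t^*$ as $N\to\infty$ since $1-\theta/N\to 1$; almost sure convergence implies convergence in distribution. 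I do not anticipate any real obstacle: the only conceptual step is recognizing that independence decouples the $N$-agent game through the $C_{jt}^{(i)}$ terms and simultaneously trivializes the MFG fixed point through the $\Delta_t^1-\Delta_t^0$ factor, after which everything reduces to routine substitution and a limit of deterministic scalars.
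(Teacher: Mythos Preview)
Your proposal is correct and follows exactly the approach the paper indicates: the authors state that the proof ``follows by direct computation and is akin to the proofs of previous theorems'' and omit it, and your argument carries out precisely that computation by specializing Theorem~\ref{Thm:RelativeForwardNAgents} (via the collapse $C_{jt}^{(i),1}=C_{jt}^{(i),0}$ and hence $A_t^{1(i)}/A_t^{2(i)}=p_t^{(i)}/(1-p_t^{(i)})$) and Lemmas~\ref{Lemma:fixedpoint}--\ref{lemmaSupMFG} (via $\Delta_t^1=\Delta_t^0$) to the case $p_t^{(i),1}=p_t^{(i),0}$. One very minor remark: Lemma~\ref{Lemma:fixedpoint} as stated produces a strictly positive fixed point under the standing assumption $p_t^1>p_t^0$, which fails here; but since you correctly observe that $F(t,\cdot,\omega)\equiv 0$ in this degenerate case, the fixed point $y_t^*=0$ is obtained directly without invoking the lemma, and the substitution into \eqref{OptimalStrategyMFG} proceeds as you describe.
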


In the $N$-agent game, if all individual stocks are mutually independent, the equilibrium can be explicitly solved, and we observe that the limiting optimal strategy also converges in distribution to the MFE. When market offers positive excess expected return, the forward Nash equilibrium is larger than the MFE, and decreases monotonically to the MFE as the population size grows.
In this scenario, all agents act independently, as if there were no competition, but with a modified parameter of risk aversion $\gamma_{(i)}(1-\frac{\theta_{(i)}}{N})$. 
Notably, they do not take into account the performance of their competitors, and
it is interesting to observe that even though the optimal strategy $\pi^{*}_{(i),t}$ does not depend on the strategies of the other agents, agent $i$ is still influenced by the competition weights of the whole network through this modified parameter. 
We also find that the MFE is the same as the classical Merton portfolio in the absence of relative performance concerns as obtained in 
cf. \cite[Theorem 3] {musiela2016forward} 
for the single-agent forward exponential preferences.

\section*{Appendix. Proofs}
\renewcommand{\thesubsection}{\Alph{subsection}}
%{\ms Please make sure all equations remain within page limits. }
%{\ms Add Halmos symbols at end of proofs. }

Proofs of Proposition \ref{Prop:NAgentGameSingleStock} and Proposition \ref{Prop:NoCommonNoise} follow by direct computation and are akin to the proofs of previous theorems, therefore, they are omitted herein.

\subsection{Proof of Proposition \ref{Prop:MFGAverageWealth}}
%Consider admissible strategy $\pi_t \in \mathcal{A}^{MF}$, 

By the conditional independence among $\psi_{(k),t}$ and $(B_{(k),j})_{j=1}^{t}$
across different $k$, and considering that $\psi_{(k),t}$ follows the identical distribution as $\psi_t$, $\pi_{(k),t}=\pi_t\left(\psi_{(k),t}, (p^{cn}_{j})_{j=1}^{t}, (\xi_{j}^{cn})_{j=1}^{t-1}, (B_{(k),j})_{j=1}^{t-1}\right)$ then follows the identical distribution as $\pi_t$ and is conditionally independent across different $k$ given $\mathcal{F}_{t}^{CN}$.
Furthermore, it follows that 
$\pi_{(k),t}\left(u_{(k),t} B_{(k),t} + d_{(k),t} (1-B_{(k),t})-1\right)$, as a random function of $\psi_{(k),t}, (p^{cn}_{j})_{j=1}^{t},$ $(\xi_{j}^{cn})_{j=1}^{t-1}$ and $(B_{(k),j})_{j=1}^{t}$, is conditionally independent across different $k$ and identically distributed with $\pi_t\left(u_t B_t + d_t (1-B_t)-1\right)$ given $\mathcal{F}_{t}^{CN}$.
By the law of large numbers, the sample mean converges to the distribution mean in probability.

We aim to show by mathematical induction that $\frac{1}{N}\sum\limits_{i=1}^{N}X_{(i),t}\stackrel{\text{p}}{\rightarrow}\E\left[X_t\vert\mathcal{F}_{t}^{CN}\right]$ for any $t\in\mathbb{N}$.
First, when $t=1$, the statement naturally holds.
In the general inductive step, we show that if
$\frac{1}{N}\sum\limits_{i=1}^{N}X_{(i),t-1}\stackrel{\text{p}}{\rightarrow}\E\left[X_{t-1}\vert\mathcal{F}_{t-1}^{CN}\right]$, then the statement is also true for $\frac{1}{N}\sum\limits_{i=1}^{N}X_{(i),t}$.

Indeed, the average wealth can be computed as
\begin{equation}\label{AverageWealth}
\begin{aligned}
    &\frac{1}{N}\sum\limits_{i=1}^N{X}_{(i),t}\\  =& \frac{1}{N}\left( \sum\limits_{i=1}^NX_{(i),t-1}+\sum\limits_{i=1}^N\pi_{(i),t}(R_{(i),t}-1)\right)
    \\ =  &\frac{1}{N}\left( \sum\limits_{i=1}^NX_{(i),t-1}+\sum\limits_{i=1}^N\pi_{(i),t}\left(u_{(i),t} B_{(i),t} + d_{(i),t} (1-B_{(i),t})-1\right)\right)
    \\
     \stackrel{\text{p}}{\rightarrow} &\E\left[X_{t-1}\vert\mathcal{F}_{t-1}^{CN}\right]+\E\left[\pi_t\left(u_t B_t + d_t (1-B_t)-1\right)\vert \mathcal{F}_{t}^{CN}\right]\\=&\E\left[X_{t-1}+\pi_t(R_t-1)\vert\mathcal{F}_{t}^{CN}\right]\\=&\E\left[X_t\vert\mathcal{F}_{t}^{CN}\right],
\end{aligned}
\end{equation}
where the convergence holds by the inductive hypothesis and the law of large numbers as argued above.
This proves the claim and thus shows the convergence in probability between $\frac{1}{N}\sum\limits_{i=1}^N{X}_{(i),t}$ and $\E\left[X_t\vert\mathcal{F}_{t}^{CN}\right]$ as $N$ tends to infinity. 

Since it can be further computed by the tower property of conditional expectation and the $\mathcal{F}_{t-1}^{MF}$-measurability of $\pi_t$ that 
\begin{equation}
\begin{aligned}
    &\E\left[\pi_t\left(u_t B_t + d_t (1-B_t)-1\right)\vert \mathcal{F}_{t}^{CN}\right]\mathbbm{1}_{\{\xi_t^{cn}=1\}}
     \\=&\E\left[\E\left[\pi_t\left(u_t B_t + d_t (1-B_t)-1\right)\vert \mathcal{F}_{t-1}^{MF}\vee\sigma(\xi_t^{cn})\right]\big\vert \mathcal{F}_{t}^{CN}\right]\mathbbm{1}_{\{\xi_t^{cn}=1\}}\\=&\E\left[\pi_t\E\left[\left(u_t B_t + d_t (1-B_t)-1\right)\vert \mathcal{F}_{t-1}^{MF}\vee\sigma(\xi_t^{cn})\right]\big\vert \mathcal{F}_{t}^{CN}\right]\mathbbm{1}_{\{\xi_t^{cn}=1\}} 
    \\= & \E\left[\pi_t\Delta_t^{1}\vert \mathcal{F}_{t-1}^{CN}\right]\mathbbm{1}_{\{\xi_t^{cn}=1\}},
\end{aligned}
\end{equation}
where the final line holds because $\pi_t\Delta_t^{1}$ is independent of $\sigma(\xi_t^{cn})$.
%the information contained in $\sigma(\xi_t^{cn})$ has no influence on $\pi_t\Delta_t^{1}$. 
%{\ms do we want to say that $\pi_t\Delta_t^{1}$ is independent of $\sigma(\xi_t^{cn})$? }
%{\color{blue}Yuwei: Yes, and so we have $\E\left[\pi_t\Delta_t^{1}\vert \mathcal{F}_{t}^{CN}\right]=\E\left[\pi_t\Delta_t^{1}\vert \mathcal{F}_{t-1}^{CN}\right]$.}
Analogously,
\begin{equation}
\begin{aligned}
    &\E\left[\pi_t\left(u_t B_t + d_t (1-B_t)-1\right)\vert \mathcal{F}_{t}^{CN}\right]\mathbbm{1}_{\{\xi_t^{cn}=0\}}
     =\E\left[\pi_t\Delta_t^{0}\vert \mathcal{F}_{t-1}^{CN}\right]\mathbbm{1}_{\{\xi_t^{cn}=0\}}.
\end{aligned}
\end{equation}
It then follows that 
\begin{align*}
\E\left[X_t\vert\mathcal{F}_{t}^{CN}\right]=\E\left[X_{t-1}\vert\mathcal{F}_{t-1}^{CN}\right]+\E\left[\pi_t\Delta_t^{1}\vert \mathcal{F}_{t-1}^{CN}\right]\mathbbm{1}_{\{\xi_t^{cn}=1\}}+\E\left[\pi_t\Delta_t^{0}\vert \mathcal{F}_{t-1}^{CN}\right]\mathbbm{1}_{\{\xi_t^{cn}=0\}}.
\end{align*}
\qed
%{\ms This is a circular argument. (\ref{AverageWealth1}) and (\ref{AverageWealth2}) make use of what you want to show below.}
%{\color{blue}\\This is not circular argument, because here we refer to (\ref{AverageWealth1}) and (\ref{AverageWealth2}) the equations themselves, not the proof of lemma that is based on the relationship $\overline{X}_{t}=\E\left[X_{t}\vert\mathcal{F}_{t-1}^{RC}\vee \sigma(\xi^{cn}_t)\right]$. The above shows that, by definition of $\overline{X}_{t}$, $$\overline{X}_{t}\mathbbm{1}_{\{\xi_t^{cn}=1\}} = \lim\limits_{n\rightarrow \infty}\frac{1}{n}\sum\limits_{k=1}^n{X}^{(k)}_{t}\mathbbm{1}_{\{\xi_t^{cn}=1\}}=\overline{X}_{t-1}\mathbbm{1}_{\{\xi_t^{cn}=1\}}+\E\left[\pi_t\Delta_t^{1}\vert \mathcal{F}_{t-1}^{RC}\right]\mathbbm{1}_{\{\xi_t^{cn}=1\}},$$ and (\ref{AverageWealth1}) shows that, $$\E\left[X_{t}\vert\mathcal{F}_{t-1}^{RC}\vee \sigma(\xi^{cn}_t)\right]\mathbbm{1}_{\{\xi_t^{cn}=1\}}=\overline{X}_{t-1}\mathbbm{1}_{\{\xi_t^{cn}=1\}}+\E\left[\pi_t\Delta_t^{1}\vert \mathcal{F}_{t-1}^{RC}\right]\mathbbm{1}_{\{\xi_t^{cn}=1\}},$$ therefore,
%$$\lim\limits_{n\rightarrow \infty}\frac{1}{n}\sum\limits_{k=1}^n{X}^{(k)}_{t}\mathbbm{1}_{\{\xi_t^{cn}=1\}}=\E\left[X_{t}\vert\mathcal{F}_{t-1}^{RC}\vee \sigma(\xi^{cn}_t)\right]\mathbbm{1}_{\{\xi_t^{cn}=1\}},$$
%Analogously, we have
%$$\lim\limits_{n\rightarrow \infty}\frac{1}{n}\sum\limits_{k=1}^n{X}^{(k)}_{t}\mathbbm{1}_{\{\xi_t^{cn}=0\}}=\E\left[X_{t}\vert\mathcal{F}_{t-1}^{RC}\vee \sigma(\xi^{cn}_t)\right]\mathbbm{1}_{\{\xi_t^{cn}=1\}},$$
%I have revised the above in green accordingly to make this clearer.}

\subsection{Proof of Theorem \ref{Thm:RelativeForwardNAgents}}
We first present the following auxiliary result for the single-period backward problem.
\begin{lemma}\label{LemmaSup}
Let $i \in \{1, \dots, N\}$ and competitor policies $\pi_{(j),t}, j\neq i$ be given. Then,
\begin{align}\label{SupremumEqn}
    \sup_{\pi_{(i),t} }\E\left[-e^{-\gamma_{(i)}\left(\left(1-\frac{\theta_{(i)}}{N}\right)\pi_{(i),t}(R_{(i),t}-1)-\frac{\theta_{(i)}}{N}\sum\limits_{j\neq i}\pi_{(j),t}(R_{(j),t}-1)\right)}\Bigg\vert \mathcal{F}_{t-1}\right]=-(A_{(i),t}^{1}B_{(i),t}^{1}+A_{(i),t}^{2}B_{(i),t}^{2}).
\end{align}
\end{lemma}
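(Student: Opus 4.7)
The plan is to evaluate the inner conditional expectation in closed form using a two-stage tower property, and then to reduce the supremum to a one-dimensional calculus problem. Two structural facts drive the argument: (i) each competitor's strategy $\pi_t^{(j)}$, $j\neq i$, is $\mathcal{F}_{t-1}$-measurable and therefore behaves as a constant under the conditional expectation, and (ii) the returns $R_t^{(1)},\ldots,R_t^{(N)}$ are conditionally independent once one further conditions on the common noise $\sigma(\xi_t^{cn})$.

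First, I would insert the $\sigma$-algebra $\mathcal{F}_{t-1}\vee\sigma(\xi_t^{cn})$ via the tower property. The exponent splits across agents as $-\gamma^{(i)}(1-\theta^{(i)}/N)\pi_t^{(i)}(R_t^{(i)}-1)+\gamma^{(i)}(\theta^{(i)}/N)\sum_{j\neq i}\pi_t^{(j)}(R_t^{(j)}-1)$, so the exponential factors as a product over $k=1,\ldots,N$. By the conditional independence in (ii), the expectation of the product equals the product of expectations. For $k\neq i$, each single-factor expectation equals $C_{kt}^{(i),1}$ on $\{\xi_t^{cn}=1\}$ and $C_{kt}^{(i),0}$ on $\{\xi_t^{cn}=0\}$ by direct computation from the definitions. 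Writing $a:=e^{-\gamma^{(i)}(1-\theta^{(i)}/N)\pi_t^{(i)}(u_t^{(i)}-1)}$ and $b:=e^{-\gamma^{(i)}(1-\theta^{(i)}/N)\pi_t^{(i)}(d_t^{(i)}-1)}$, the factor for agent $i$ equals $p_t^{(i),1}a+(1-p_t^{(i),1})b$ on $\{\xi_t^{cn}=1\}$ and the $p_t^{(i),0}$-analogue on $\{\xi_t^{cn}=0\}$. Taking the outer expectation over $\xi_t^{cn}$, weighted by $p_t^{cn}$ and $1-p_t^{cn}$, and regrouping the coefficients of $a$ and $b$ reproduces the definitions \eqref{expressionA1}--\eqref{expressionA2}, so the inner expectation simplifies to $-[A_t^{1(i)}a+A_t^{2(i)}b]$.

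It then remains to maximize $-[A_t^{1(i)}a+A_t^{2(i)}b]$, equivalently to minimize a strictly convex sum of two exponentials with positive coefficients. Because $d_t^{(i)}<1<u_t^{(i)}$, the two exponents have opposite signs when multiplied by $\pi_t^{(i)}$, so the objective is coercive and its first-order condition has a unique solution. Solving the FOC by dividing the two terms and taking logarithms yields the candidate $\pi_t^{(i),*}$ displayed in \eqref{OptimalStrategyN1}. Substituting back into $a$ and $b$ and using the identities $(u_t^{(i)}-1)/(u_t^{(i)}-d_t^{(i)})=1-q_t^{(i)}$ and $(1-d_t^{(i)})/(u_t^{(i)}-d_t^{(i)})=q_t^{(i)}$, the optimal values $a^*,b^*$ can be identified with $B_t^{1(i)},B_t^{2(i)}$ as defined in \eqref{expressionB1}--\eqref{expressionB2}, giving the claimed right-hand side $-(A_t^{1(i)}B_t^{1(i)}+A_t^{2(i)}B_t^{2(i)})$.

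The bulk of the argument is direct computation; the only non-routine point is the careful bookkeeping when alternating between the conditioning levels $\mathcal{F}_{t-1}$ and $\mathcal{F}_{t-1}\vee\sigma(\xi_t^{cn})$, and then unwinding the inner expectation back to $\mathcal{F}_{t-1}$. The $\mathcal{F}_{t-1}$-measurability of $A_t^{1(i)},A_t^{2(i)},B_t^{1(i)},B_t^{2(i)}$ and of $\pi_t^{(i),*}$, which will be needed to iterate this one-period step in the proof of Theorem~\ref{Thm:RelativeForwardNAgents}, is immediate from the predictability of the market parameters and of the competitors' strategies.
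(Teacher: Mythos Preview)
Your proposal is correct and follows essentially the same route as the paper: condition on $\mathcal{F}_{t-1}\vee\sigma(\xi_t^{cn})$ via the tower property, factor the exponential using conditional independence of the $R_t^{(k)}$ given the common noise, collapse the outer expectation over $\xi_t^{cn}$ to obtain $-[A_t^{1(i)}a+A_t^{2(i)}b]$, and then solve the one-dimensional first-order condition. The only cosmetic difference is that you package the two exponentials as $a,b$ and invoke strict convexity/coercivity, whereas the paper writes out $f(\pi_t^{(i)})$ explicitly and checks $f''<0$; the substance is identical.
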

\begin{proof}
Define the function $f$ by
\begin{align*}
    f(\pi_{(i),t})&=\E\left[-e^{-\gamma_{(i)}\left((1-\frac{\theta_{(i)}}{N})\pi_{(i),t}(R_{(i),t}-1)-\frac{\theta_{(i)}}{N}\sum_{j\neq i}\pi_{(j),t}(R_{(j),t}-1)\right)}\big\vert \mathcal{F}_{t-1}\right]\\&=\E\left[\E\left[-e^{-\gamma_{(i)}\left((1-\frac{\theta_{(i)}}{N})\pi_{(i),t}(R_{(i),t}-1)-\frac{\theta_{(i)}}{N}\sum_{j\neq i}\pi_{(j),t}(R_{(j),t}-1)\right)}\big\vert \mathcal{F}_{t-1}\vee \sigma(\xi_t^{cn})\right]\bigg\vert \mathcal{F}_{t-1}\right]\\&=\E\bigg[\E\left[-e^{-\gamma_{(i)}(1-\frac{\theta_{(i)}}{N})\pi_{(i),t}(R_{(i),t}-1)}\big\vert \mathcal{F}_{t-1}\vee \sigma(\xi_t^{cn})\right]\\&\quad \qquad \times \prod \limits_{j\neq i}\E\left[e^{\gamma_{(i)}\frac{\theta_{(i)}}{N}\pi_{(j),t}(R_{(j),t}-1)}\big\vert \mathcal{F}_{t-1}\vee \sigma(\xi_t^{cn})\right]\bigg\vert \mathcal{F}_{t-1}\bigg],
\end{align*}
%{\color{blue} (Proof accordingly revised after introducing newly-defined $p_{(i),t}^1, p_{(i),t}^0, p_t^{(i),d,1}$ and $p_t^{(i),d,0}$. The final result does not change.)\\}
where
\begin{align*}
&\E\left[-e^{-\gamma_{(i)}(1-\frac{\theta_{(i)}}{N})\pi_{(i),t}(R_{(i),t}-1)}\big\vert \mathcal{F}_{t-1}\vee \sigma(\xi_t^{cn})\right]\mathbbm{1}_{\{\xi_t^{cn}=k\}}\\=&-\left(p_{(i),t}^{k}e^{-\gamma_{(i)}(1-\frac{\theta_{(i)}}{N})\pi_{(i),t}(u_{(i),t}-1)}+(1-p_{(i),t}^{k})e^{-\gamma_{(i)}(1-\frac{\theta_{(i)}}{N})\pi_{(i),t}(d_{(i),t}-1)}\right)\mathbbm{1}_{\{\xi_t^{cn}=k\}}, \quad k=0,1,
\end{align*}
and
\begin{align*}
    &\E\left[e^{\gamma_{(i)}\frac{\theta_{(i)}}{N}\pi_{(j),t}(R_{(j),t}-1)}\big\vert \mathcal{F}_{t-1}\vee \sigma(\xi_t^{cn})\right]\mathbbm{1}_{\{\xi_t^{cn}=k\}}\\=&\left(p_{(j),t}^{k}e^{\gamma_{(i)}\frac{\theta_{(i)}}{N}\pi_{(j),t}(u_{(j),t}-1)}+(1-p_{(j),t}^{k})e^{\gamma_{(i)}\frac{\theta_{(i)}}{N}\pi_{(j),t}(d_{(j),t}-1)}\right)\mathbbm{1}_{\{\xi_t^{cn}=k\}}, \quad k=0,1.
\end{align*}
After introducing the simpler notation
\begin{align*}
    C_{jt}^{(i),1}&=p_{(j),t}^1e^{\gamma_{(i)}\frac{\theta_{(i)}}{N}\pi_{(j),t}(u_{(j),t}-1)}+(1-p_{(j),t}^1)e^{\gamma_{(i)}\frac{\theta_{(i)}}{N}\pi_{(j),t}(d_{(j),t}-1)},\\
    C_{jt}^{(i),0}&=p_{(j),t}^0e^{\gamma_{(i)}\frac{\theta_{(i)}}{N}\pi_{(j),t}(u_{(j),t}-1)}+(1-p_{(j),t}^0)e^{\gamma_{(i)}\frac{\theta_{(i)}}{N}\pi_{(j),t}(d_{(j),t}-1)},
\end{align*}
we then have
{{
\begin{align*}
    f(\pi_{(i),t})=&-p_t^{cn}\prod \limits_{j\neq i}C_{jt}^{(i),1}\left(p_{(i),t}^1e^{-\gamma_{(i)}(1-\frac{\theta_{(i)}}{N})\pi_{(i),t}(u_{(i),t}-1)}+(1-p_{(i),t}^1)e^{-\gamma_{(i)}(1-\frac{\theta_{(i)}}{N})\pi_{(i),t}(d_{(i),t}-1)}\right)
    \\& -(1-p_t^{cn})\prod \limits_{j\neq i}C_{jt}^{(i),0}\left(p_{(i),t}^0e^{-\gamma_{(i)}(1-\frac{\theta_{(i)}}{N})\pi_{(i),t}(u_{(i),t}-1)}+(1-p_{(i),t}^0)e^{-\gamma_{(i)}(1-\frac{\theta_{(i)}}{N})\pi_{(i),t}(d_{(i),t}-1)}\right).
\end{align*}}}
Differentiating over $\pi_{(i),t}$ yields
{{
\begin{align*}
    f^{'}(\pi_{(i),t})&=\gamma_{(i)}(1-\frac{\theta_{(i)}}{N})(u_{(i),t}-1)e^{-\gamma_{(i)}(1-\frac{\theta_{(i)}}{N})\pi_{(i),t}(u_{(i),t}-1)}\\&\qquad \times\left(p_t^{cn}\prod \limits_{j\neq i}C_{jt}^{(i),1}p_{(i),t}^1+(1-p_t^{cn})\prod \limits_{j\neq i}C_{jt}^{(i),0}p_{(i),t}^0\right)\\&\quad+\gamma_{(i)}(1-\frac{\theta_{(i)}}{N})(d_{(i),t}-1)e^{-\gamma_{(i)}(1-\frac{\theta_{(i)}}{N})\pi_{(i),t}(d_{(i),t}-1)}\\&\qquad \times\left(p_t^{cn}\prod \limits_{j\neq i}C_{jt}^{(i),1}(1-p_{(i),t}^1)+(1-p_t^{cn})\prod \limits_{j\neq i}C_{jt}^{(i),0}(1-p_{(i),t}^0)\right).
\end{align*}}}
By solving $f^{'}(\pi_{(i),t}^{*})=0$, we obtain the optimal strategy
\begin{align*}
    \pi_{(i),t}^{*}=\frac{1}{\gamma_{(i)} (1-\frac{\theta_{(i)}}{N})(u_{(i),t}-d_{(i),t})}\ln \frac{(1-q_{(i),t})\left(p_t^{cn}\prod \limits_{j\neq i}C_{jt}^{(i),1}p_{(i),t}^1+(1-p_t^{cn})\prod \limits_{j\neq i}C_{jt}^{(i),0}p_{(i),t}^0\right)}{q_{(i),t}\left(p_t^{cn}\prod \limits_{j\neq i}C_{jt}^{(i),1}(1-p_{(i),t}^1)+(1-p_t^{cn})\prod \limits_{j\neq i}C_{jt}^{(i),0}(1-p_{(i),t}^0)\right)}.
\end{align*}
Since $f^{''}(\pi_{(i),t})<0$, $\pi_{(i),t}^{*}$ is indeed the strategy that maximizes $f(\pi_{(i),t})$, and we have, by plugging $\pi_{(i),t}^{*}$ back in function $f(\pi_{(i),t})$,
{{
\begin{align*}
   f(\pi_{(i),t}^{*})=&-\left(p_{(i),t}^1p_t^{cn}\prod \limits_{j\neq i}C_{jt}^{(i),1}+p_{(i),t}^0(1-p_t^{cn})\prod \limits_{j\neq i}C_{jt}^{(i),0}\right)\\&\qquad \times \left(\frac{q_{(i),t}\left(p_t^{cn}\prod \limits_{j\neq i}C_{jt}^{(i),1}(1-p_{(i),t}^1)+(1-p_t^{cn})\prod \limits_{j\neq i}C_{jt}^{(i),0}(1-p_{(i),t}^0)\right)}{(1-q_{(i),t})\left(p_t^{cn}\prod \limits_{j\neq i}C_{jt}^{(i),1}p_{(i),t}^1+(1-p_t^{cn})\prod \limits_{j\neq i}C_{jt}^{(i),0}p_{(i),t}^0\right)}\right)^{1-q_{(i),t}}
    \\
    &-\left((1-p_{(i),t}^1)p_t^{cn}\prod \limits_{j\neq i}C_{jt}^{(i),1}+(1-p_{(i),t}^0)(1-p_t^{cn})\prod \limits_{j\neq i}C_{jt}^{(i),0}\right)\\&\qquad \times \left(\frac{q_{(i),t}\left(p_t^{cn}\prod \limits_{j\neq i}C_{jt}^{(i),1}(1-p_{(i),t}^1)+(1-p_t^{cn})\prod \limits_{j\neq i}C_{jt}^{(i),0}(1-p_{(i),t}^0)\right)}{(1-q_{(i),t})\left(p_t^{cn}\prod \limits_{j\neq i}C_{jt}^{(i),1}p_{(i),t}^1+(1-p_t^{cn})\prod \limits_{j\neq i}C_{jt}^{(i),0}p_{(i),t}^0\right)}\right)^{-q_{(i),t}}.
\end{align*}}}
It naturally follows that $f(\pi_{(i),t}^{*})=-\left(A_{(i),t}^{1}B_{(i),t}^{1}+A_{(i),t}^{2}B_{(i),t}^{2}\right)$ by letting $A_{(i),t}^{1}, A_{(i),t}^{2}$, $B_{(i),t}^{1}$ and $B_{(i),t}^{2}$ be given by (\ref{expressionA1}), (\ref{expressionA2}), (\ref{expressionB1}) and (\ref{expressionB2}).
\end{proof}
We continue with the proof of Theorem \ref{Thm:RelativeForwardNAgents}.

Conditions $(i)$ and $(ii)$ of Definition \ref{def:RelativeFrowardPreferences-NQ} follow directly.
Indeed, let $\widetilde{x} \in \mathbb{R}$, we first note that for any $t\in \mathbb{N}$, the sequences 
$(A_{(i),n}^{1})_{n=1}^t,(A_{(i),n}^{1})_{n=1}^t,(B_{(i),n}^{1})_{n=1}^t,(B_{(i),n}^{1})_{n=1}^t,(C_{jn}^{(i),0})_{n=1}^t$ and $(C_{jn}^{(i),1})_{n=1}^t$ are all Borel-measurable functions of the competitor policies $(\pi_{(j),n}^{*})_{n=1}^t$ and market parameters $(p_{(k),n}^{1})_{n=1}^t, (p_{(k),n}^{0})_{n=1}^t,(u_{(k),n})_{n=1}^t,(d_{(k),n})_{n=1}^t, k=1,\dots,N$ which are $\mathcal{F}_{t-1}$-measurable. 
It then follows that $U_{(i),t}(\widetilde{x})$ given by \eqref{U_tNAgent}) is $\mathcal{F}_{t-1}$-measurable. 

Next, we show Condition $(iii)$ of Definition \ref{def:RelativeFrowardPreferences-NQ}. 
Let $j \in \{ 1, \dots, N\}$, $t\in \mathbb{N}$, and 
 $X_{(j)}^{*}\in {\mathcal{X}}_{(j)}(x)$ with  $\pi_{(j)}^{*}$ for all $j\neq i$ be given as in the theorem. 
 Consider an arbitrary $X_{(i)}\in {\mathcal{X}}^{(i)}(\widetilde{x}_{(i)})$ and admissible strategy $\pi_{(i)}$.
 We must show that
\begin{align*}
    &U_{(i),t}(\widetilde{X}_{(i),t})\geq \E\left[-\frac{e^{-\gamma_{(i)}\widetilde{X}_{(i),t+1}}}{\prod \limits_{n=1}^{t+1}(A_{(i),n}^{1}B_{(i),n}^{1}+A_{(i),n}^{2}B_{(i),n}^{2})}\Bigg\vert \mathcal{F}_{t}\right],
\end{align*}
with now the $\pi_{(j),t}$ in computing $A_{(i),n}^{1}, A_{(i),n}^{2}, B_{(i),n}^{1}$ and $B_{(i),n}^{2}$ being replaced by $\pi_{(j),t}^{*}$.

Straightforward computation shows
\begin{align*}
    \E & \left[-\frac{e^{-\gamma_{(i)}\widetilde{X}_{(i),t+1}}}{\prod \limits_{n=1}^{t+1}(A_{(i),n}^{1}B_{(i),n}^{1}+A_{(i),n}^{2}B_{(i),n}^{2})}\Bigg\vert \mathcal{F}_{t}\right]\\
    & = \E\left[-\frac{e^{-\gamma_{(i)}(1-\frac{\theta_{(i)}}{N})(X_{(i),t}+\pi_{(i),t+1}(R_{(i),t+1}-1))-\frac{\theta_{(i)}}{N}\sum_{j\neq i}X_{(j),t}^{*}+\pi_{(j),t+1}^{*}(R_{(j),t+1}-1)}}{\prod \limits_{n=1}^{t+1}(A_{(i),n}^{1}B_{(i),n}^{1}+A_{(i),n}^{2}B_{(i),n}^{2})}\Bigg\vert \mathcal{F}_{t}\right]
    \\&=-e^{-\gamma_{(i)}((1-\frac{\theta_{(i)}}{N})X_{(i),t}-\frac{\theta_{(i)}}{N}\sum_{j\neq i}X_{(j),t})}\E\left[\frac{e^{-\gamma_{(i)}(1-\frac{\theta_{(i)}}{N})\pi_{(i),t+1}(R_{(i),t+1}-1)-\frac{\theta_{(i)}}{N}\sum_{j\neq i}\pi_{(j),t+1}^{*}(R_{(j),t+1}-1)}}{(A_{(i),t+1}^{1}B_{(i),t+1}^{1}+A_{(i),t+1}^{2}B_{(i),t+1}^{2})}\Bigg\vert \mathcal{F}_{t}\right]\\&=U_{(i),t}(\widetilde{X}_{(i),t})\E\left[\frac{e^{-\gamma_{(i)}(1-\frac{\theta_{(i)}}{N})\pi_{(i),t+1}(R_{(i),t+1}-1)-\frac{\theta_{(i)}}{N}\sum_{j\neq i}\pi_{(j),t+1}^{*}(R_{(j),t+1}-1)}}{(A_{(i),t+1}^{1}B_{(i),t+1}^{1}+A_{(i),t+1}^{2}B_{(i),t+1}^{2})}\Bigg\vert \mathcal{F}_{t}\right].
\end{align*}
Hence, it remains to show that
\begin{align*}
    1\leq \frac{1}{(A_{(i),t+1}^{1}B_{(i),t+1}^{1}+A_{(i),t+1}^{2}B_{(i),t+1}^{2})}\E\left[e^{-\gamma_{(i)}(1-\frac{\theta_{(i)}}{N})\pi_{(i),t+1}(R_{(i),t+1}-1)-\frac{\theta_{(i)}}{N}\sum_{j\neq i}\pi_{(j),t+1}^{*}(R_{(j),t+1}-1)}\bigg\vert \mathcal{F}_{t}\right],
\end{align*}
which is equivalent to 
\begin{align*}
    \E\left[-e^{-\gamma_{(i)}(1-\frac{\theta_{(i)}}{N})\pi_{(i),t+1}(R_{(i),t+1}-1)-\frac{\theta_{(i)}}{N}\sum_{j\neq i}\pi_{(j),t+1}^{*}(R_{(j),t+1}-1)}\bigg\vert \mathcal{F}_{t}\right]\leq-(A_{(i),t+1}^{1}B_{(i),t+1}^{1}+A_{(i),t+1}^{2}B_{(i),t+1}^{2}),
\end{align*}
The above easily follows using Lemma \ref{LemmaSup}.

To show Condition $(iv)$, we let $X_{(i),t}^{*}$, $t \in \mathbb{N}_0$, be given by (\ref{OptimalWealthN}). We need to establish
\begin{align*}
    -\frac{e^{-\gamma_{(i)}\widetilde{X}_{(i),t}^{*}}}{\prod \limits_{n=1}^t(A_{(i),n}^{1}B_{(i),n}^{1}+A_{(i),n}^{2}B_{(i),n}^{2})}= \E\left[-\frac{e^{-\gamma_{(i)}\widetilde{X}_{(i),t+1}^{*}}}{\prod \limits_{n=1}^{t+1}(A_{(i),n}^{1}B_{(i),n}^{1}+A_{(i),n}^{2}B_{(i),n}^{2})}\Bigg\vert \mathcal{F}_{t}\right].
\end{align*}
Since ${X}_{(i),t+1}^{*}={X}_{(i),t}^{*}+\pi_{(i),t+1}^{*}(R_{(i),t+1}-1)$ and thus
$\widetilde{X}_{(i),t+1}^{*}=\widetilde{X}_{(i),t}^{*}+(1-\frac{\theta_{(i)}}{N})\pi^{*}_{(i),t+1}(R_{(i),t+1}-1)-\frac{\theta_{(i)}}{N}\sum_{k\neq m}\pi^{*}_{(k),t+1}(R_{(k),t+1}-1)$, the above equality reduces to 
\begin{align*}
    1= \E\left[-\frac{e^{-\gamma_{(i)}\left((1-\frac{\theta_{(i)}}{N})\pi^{*}_{(i),t+1}(R_{(i),t+1}-1)-\frac{\theta_{(i)}}{N}\sum_{k\neq m}\pi^{*}_{(k),t+1}(R_{(k),t+1}-1)\right)}}{(A_{(i),t+1}^{1}B_{(i),t+1}^{1}+A_{(i),t+1}^{2}B_{(i),t+1}^{2})}\Bigg\vert \mathcal{F}_{t}\right].
\end{align*}
This then directly follows with Lemma \ref{LemmaSup}.
\qed

\subsection{Proof of Theorem \ref{Thm:HomoNAgentsGame}}
When $y_t=0$, we have
\begin{align*}
f(0)&=\frac{1-q_t}{q_t}\frac{p_t^{1}p_t^{cn}\left(\frac{1-p_t^{1}}{1-p_t^{0}}\right)^{N-1}+p_t^{0}(1-p_t^{cn})}{(1-p_t^{1})p_t^{cn}\left(\frac{1-p_t^{1}}{1-p_t^{0}}\right)^{N-1}+(1-p_t^{0})(1-p_t^{cn})}\\&=\frac{1-q_t}{q_t}\left(-1+\frac{p_t^{cn}\left(\frac{1-p_t^{1}}{1-p_t^{0}}\right)^{N-1}+(1-p_t^{cn})}{(1-p_t^{1})p_t^{cn}\left(\frac{1-p_t^{1}}{1-p_t^{0}}\right)^{N-1}+(1-p_t^{0})(1-p_t^{cn})}\right)>0,
\end{align*}
when $y_t\rightarrow +\infty$, we have
\begin{align*}
f(+\infty)=\frac{1-q_t}{q_t}\frac{p_t^{1}p_t^{cn}\left(\frac{p_t^{1}}{p_t^{0}}\right)^{N-1}+p_t^{0}(1-p_t^{cn})}{(1-p_t^{1})p_t^{cn}\left(\frac{p_t^{1}}{p_t^{0}}\right)^{N-1}+(1-p_t^{0})(1-p_t^{cn})}<+\infty.
\end{align*}

Next we compute the first and the second order derivatives with respect to $f(y_t)$, since $f(y_t)$ can be expressed by $f(y_t)=\frac{1-q_t}{q_t}g_1(g_2(y_t))$, where $g_1(x)=\frac{p_t^{1}p_t^{cn}x+p_t^{0}(1-p_t^{cn})}{(1-p_t^{1})p_t^{cn}x+(1-p_t^{0})(1-p_t^{cn})}$, and $g_2(x)=\left(\frac{p_t^{1}x^{\frac{\theta}{N-\theta}}+(1-p_t^{1})}{p_t^{0}x^{\frac{\theta}{N-\theta}}+(1-p_t^{0})}\right)^{N-1}$, then we have
\begin{align*}
    f^{'}(y_t)=\frac{1-q_t}{q_t}g_1^{'}(g_2(y_t))g_2^{'}(y_t),
\end{align*}
and
\begin{align*}
    f^{''}(y_t)=\frac{1-q_t}{q_t}\left(g_1^{''}(g_2(y_t))(g_2^{'}(y_t))^2+g_1^{'}(g_2(y_t)g_2^{''}(y_t))\right).
\end{align*}

It is easy to see that $g_1^{'}(x)=(1-p_t^{cn})p_t^{cn}\frac{p_t^{1}(1-p_t^{0})-p_t^{0}(1-p_t^{1})}{\left((1-p_t^{1})x+(1-p_t^{0})(1-p_t^{cn})\right)^2}>0$, $g_1^{''}(x)=-2(1-p_t^{1})(1-p_t^{cn})p_t^{cn}\frac{p_t^{1}(1-p_t^{0})-p_t^{0}(1-p_t^{1})}{\left((1-p_t^{1})x+(1-p_t^{0})(1-p_t^{cn})\right)^3}<0$, 
\begin{align*}
g_2^{'}(x)=(N-1)\frac{\theta}{N-\theta}(p_t^{1}-p_t^{0})x^{\frac{\theta}{N-\theta}-1}\frac{\left(p_t^{1}x^{\frac{\theta}{N-\theta}}+(1-p_t^{1})\right)^{N-2}}{\left(p_t^{0}x^{\frac{\theta}{N-\theta}}+(1-p_t^{0})\right)^N}>0,
\end{align*}
and 
\begin{align*}
g_2^{''}(x)=\frac{\theta(N-1)(p_t^{1}-p_t^{0})\left(p_t^{1}x^{\frac{\theta}{N-\theta}}+(1-p_t^{1})\right)^{N-3}x^{\frac{\theta}{N-\theta}-2}\left(A(x)+B(x)+C(x)\right)}{(N-\theta)\left(p_t^{0}x^{\frac{\theta}{N-\theta}}+(1-p_t^{0})\right)^{N+1}},
\end{align*}
where 
\begin{align*}
    A(x)=(N-2)\frac{\theta}{N-\theta}x^{\frac{\theta}{N-\theta}}p_t^{1}\left(p_t^{0}x^{\frac{\theta}{N-\theta}}+(1-p_t^{0})\right),
\end{align*}
\begin{align*}
    B(x)=\left(\frac{\theta}{N-\theta}-1\right)\left(p_t^{0}x^{\frac{\theta}{N-\theta}}+(1-p_t^{0})\right)\left(p_t^{1}x^{\frac{\theta}{N-\theta}}+(1-p_t^{1})\right),
\end{align*}
and
\begin{align*}
    C(x)=-\frac{N\theta}{N-\theta}x^{\frac{\theta}{N-\theta}}p_t^{0}\left(p_t^{1}x^{\frac{\theta}{N-\theta}}+(1-p_t^{1})\right).
\end{align*}

After some tedious computation, we finally have
\begin{align*}
    &A(x)+B(x)+C(x)\\=&p_t^{1}p_t^{0}x^{\frac{2\theta}{N-\theta}}\left(-\frac{\theta}{N-\theta}-1\right)+p_t^{1}x^{\frac{\theta}{N-\theta}}\left(\frac{\theta(N-1)}{N-\theta}-1\right)+p_t^{0}x^{\frac{\theta}{N-\theta}}(-\frac{\theta(N-1)}{N-\theta}-1)\\&+2p_t^{1}p_t^{0}x^{\frac{\theta}{N-\theta}}+(1-p_t^{1})(1-p_t^{0})(\frac{\theta}{N-\theta}-1).
\end{align*}
Since $(\frac{\theta(N-1)}{N-\theta}-1)\leq0$, it implies 
\begin{align*}
    p_t^{1}x^{\frac{\theta}{N-\theta}}(\frac{\theta(N-1)}{N-\theta}-1)\leq p_t^{0}x^{\frac{\theta}{N-\theta}}(\frac{\theta(N-1)}{N-\theta}-1),
\end{align*}
and furthermore
\begin{align*}
    &p_t^{1}x^{\frac{\theta}{N-\theta}}(\frac{\theta(N-1)}{N-\theta}-1)+p_t^{0}x^{\frac{\theta}{N-\theta}}(-\frac{\theta(N-1)}{N-\theta}-1)+2p_t^{u,}p_t^{0}x^{\frac{\theta}{N-\theta}}\\\leq& 2(p_t^{1}-1)p_t^{0}x^{\frac{\theta}{N-\theta}}<0,
\end{align*}
$A(x)+B(x)+C(x)$ must be negative as the first and the fifth term are negative. Overall, $g_2^{''}(x)<0$, which means $f^{'}(y_t)>0$, and $f^{''}(y_t)<0$. We also know from above computation that $f(0)>0$ and $f(+\infty)+\infty$, hence there must exist a positive point $y_t^*$ such that $f(y_t^*)=y_t^*$, and the solution to $f(y_t)=y_t$ is unique because $f^{''}(y_t)<0$, $f(y_t)$ cannot exceed $y_t$ anymore after the first intersection. 

Therefore, we have proved the existence and uniqueness of the solution to equation (\ref{HomoN}). Reverting to the original variable, we obtain the equilibrium strategy given by (\ref{HomoNStrategy}).
Furthermore, the expression of the PRFPP for each agent follows by applying Theorem \ref{Thm:RelativeForwardNAgents}.
\qed

\subsection{Proof of Lemma \ref{ExistenceoFor2agents}}
We see that (\ref{EqnOfUnknowny}) is equivalent to equation
\begin{align*}
    &y_t^{\frac{(2-\theta_{(1)})\gamma_{(1)}}{\theta_{(2)}\gamma_{(2)}}}\left(p_t^{0,1}\left(\frac{(1-q_{(2),t})(p_t^{1,1}y_t+p_t^{0,1})}{q_{(2),t}(p_t^{1,0}y_t+p_t^{0,0})}\right)^{\frac{\gamma_{(1)}\theta_{(1)}}{\gamma_{(2)}(2-\theta_{(2)})}}+p_t^{0,0}\right)\\=&\frac{(1-q_{(1),t})}{q_{(1),t}}\times\left(p_t^{1,1}\left(\frac{(1-q_{(2),t})y_t+p_t^{0,1})}{q_{(2),t}(p_t^{1,0}y_t+p_t^{0,0})}\right)^{\frac{\gamma_{(1)}\theta_{(1)}}{\gamma_{(2)}(2-\theta_{(2)})}}+p_t^{1,0}\right),
\end{align*}
which can be transformed to
\begin{align*}
    & \left(\frac{(1-q_{(2),t})(p_t^{1,1}y_t+p_t^{0,1})}{q_{(2),t}(p_t^{1,0}y_t+p_t^{0,0})}\right)^{\frac{\gamma_{(1)}\theta_{(1)}}{\gamma_{(2)}(2-\theta_{(2)})}}\left(p_t^{0,1}y_t^{\frac{(2-\theta_{(1)})\gamma_{(1)}}{\theta_{(2)}\gamma_{(2)}}}-\frac{p_t^{1,1}(1-q_{(1),t})}{q_{(1),t}}\right)\\
    & \qquad =-p_t^{0,0}y_t^{\frac{(2-\theta_{(1)})\gamma_{(1)}}{\theta_{(2)}\gamma_{(2)}}}+\frac{p_t^{1,0}(1-q_{(1),t})}{q_{(1),t}}.
\end{align*}
Thus, \eqref{EqnOfUnknowny} is equivalent to
\begin{align*}
    \left(\frac{(1-q_{(2),t})(p_t^{1,1}y_t+p_t^{0,1})}{q_{(2),t}(p_t^{1,0}y_t+p_t^{0,0})}\right)^{\frac{\gamma_{(1)}\theta_{(1)}}{\gamma_{(2)}(2-\theta_{(2)})}}=\frac{\frac{p_t^{1,0}(1-q_{(1),t})}{q_{(1),t}}-p_t^{0,0}y_t^{\frac{(2-\theta_{(1)})\gamma_{(1)}}{\theta_{(2)}\gamma_{(2)}}}}{-\frac{p_t^{1,1}(1-q_{(1),t})}{q_{(1),t}}+p_t^{0,1}y_t^{\frac{(2-\theta_{(1)})\gamma_{(1)}}{\theta_{(2)}\gamma_{(2)}}}}.
\end{align*}

Define the left-hand side and right-hand side by 
\begin{align*}
    L(y_t)=\left(\frac{(1-q_{(2),t})(p_t^{1,1}y_t+p_t^{0,1})}{q_{(2),t}(p_t^{1,0}y_t+p_t^{0,0})}\right)^{\frac{\gamma_{(1)}\theta_{(1)}}{\gamma_{(2)}(2-\theta_{(2)})}} \quad \mathrm{and} \quad R(y_t)=\frac{\frac{p_t^{1,0}(1-q_{(1),t})}{q_{(1),t}}-p_t^{0,0}y_t^{\frac{(2-\theta_{(1)})\gamma_{(1)}}{\theta_{(2)}\gamma_{(2)}}}}{-\frac{p_t^{1,1}(1-q_{(1),t})}{q_{(1),t}}+p_t^{0,1}y_t^{\frac{(2-\theta_{(1)})\gamma_{(1)}}{\theta_{(2)}\gamma_{(2)}}}} .
\end{align*}
Since $0<\theta_{(1)},\theta_{(2)}<1$, the exponents $\frac{\gamma_{(1)}\theta_{(1)}}{\gamma_{(2)}(2-\theta_{(2)})}$ and $\frac{(2-\theta_{(1)})\gamma_{(1)}}{\theta_{(2)}\gamma_{(2)}}$ are strictly positive. 
Thus, $L(y_t) > 0$ for any $y_t>0$ almost surely, $R(y_t)$ converges to $-\frac{p_t^{1,0}}{p_t^{1,1}}$ when $y_t\rightarrow0$, and $R(y_t)$ converges to $-\frac{p_t^{0,0}}{p_t^{0,1}}$ when $y_t\rightarrow +\infty$.
Their derivatives are
\begin{align*}
    L^{'}(y_t)=\frac{\gamma_{(1)}\theta_{(1)}}{\gamma_{(2)}(2-\theta_{(2)})}\left(\frac{(1-q_{(2),t})(p_t^{1,1}y_t+p_t^{0,1})}{q_{(2),t}(p_t^{1,0}y_t+p_t^{0,0})}\right)^{\frac{\gamma_{(1)}\theta_{(1)}}{\gamma_{(2)}(2-\theta_{(2)})}-1}\frac{(1-q_{(2),t})(p_t^{1,1}p_t^{0,0}-p_t^{0,1}p_t^{1,0})}{q_{(2),t}(p_t^{1,0}y_t+p_t^{0,0})^2},
\end{align*}
and 
\begin{align*}
    R^{'}(y_t)=\frac{\left(p_t^{1,1}p_t^{0,0}-p_t^{0,1}p_t^{1,0}\right)\frac{(2-\theta_{(1)})\gamma_{(1)}}{\theta_{(2)}\gamma_{(2)}}y_t^{\frac{(2-\theta_{(1)})\gamma_{(1)}}{\theta_{(2)}\gamma_{(2)}}-1}\frac{1-q_{(1),t}}{q_{(1),t}}}{\left(-\frac{p_t^{1,1}(1-q_{(1),t})}{q_{(1),t}}+p_t^{0,1}y_t^{\frac{(2-\theta_{(1)})\gamma_{(1)}}{\theta_{(2)}\gamma_{(2)}}}\right)^2}.
\end{align*}
Furthermore, 
\begin{align*}
p_t^{1,1}p_t^{0,0}-p_t^{0,1}p_t^{1,0}=p_t^{cn}(1-p_t^{cn}&)\left(p_{(2),t}^1(1-p_{(2),t}^0)-p_{(2),t}^0(1-p_{(2),t}^1)\right)\\&\times \left(p_{(1),t}^1(1-p_{(1),t}^0)-p_{(1),t}^0(1-p_{(1),t}^1)\right).
\end{align*}
Notice that $p_{(1),t}^1(1-p_{(1),t}^0)-p_{(1),t}^0(1-p_{(1),t}^1)>0, p_{(2),t}^1(1-p_{(2),t}^0)-p_{(2),t}^0(1-p_{(2),t}^1)>0$. Hence, $p_t^{1,1}p_t^{0,0}-p_t^{0,1}p_t^{1,0}>0$, both $L(y_t)$ and $R(y_t)$ are increasing.

However, there exists a singular point for $R(y_t)$ when the denominator equals zero almost surely, i.e., $-\frac{p_t^{1,1}(1-q_{(1),t})}{q_{(1),t}}+p_t^{0,1}y_t^{\frac{(2-\theta_{(1)})\gamma_{(1)}}{\theta_{(2)}\gamma_{(2)}}}=0$, and $y_t=\left(\frac{p_t^{1,1}(1-q_{(1),t})}{p_t^{0,1}q_{(1),t}}\right)^{\frac{\theta_{(2)}\gamma_{(2)}}{(2-\theta_{(1)})\gamma_{(1)}}}$. Thus the value of $R(y_t)$ tends to $+\infty$ from the left of this singular point and tends to $-\infty$ from the right of this singular point.
Hence, there must exist a point $y_t^{*}>0$ where $L(y_t)=R(y_t)$, and $y_t^*<\left(\frac{p_t^{1,1}(1-q_{(1),t})}{p_t^{0,1}q_{(1),t}}\right)^{\frac{\theta_{(2)}\gamma_{(2)}}{(2-\theta_{(1)})\gamma_{(1)}}}$ almost surely.
\qed
% , before (after) this singular point when $p_t^0p_t^3-p_t^1p_t^2>0$ ($p_t^0p_t^3-p_t^1p_t^2<0$), and there must exist a point $y^{*}$ where $L(y)=R(y)$, which also implies the existence of a solution. 

%For example, when I choose some parameter values and with $(p_1p_4-p_2p_3)<0$, the graph should look like (note that the blue line is still decreasing but at a much smaller speed) 

%\includegraphics[scale=0.62]{1211.png}

As a complement to the proof, we present the graph below as an example which helps to visualize the relationship between $L(y)$ and $R(y), y>0$.
\begin{figure}[H]
\centering
   \includegraphics[width=9.6cm]{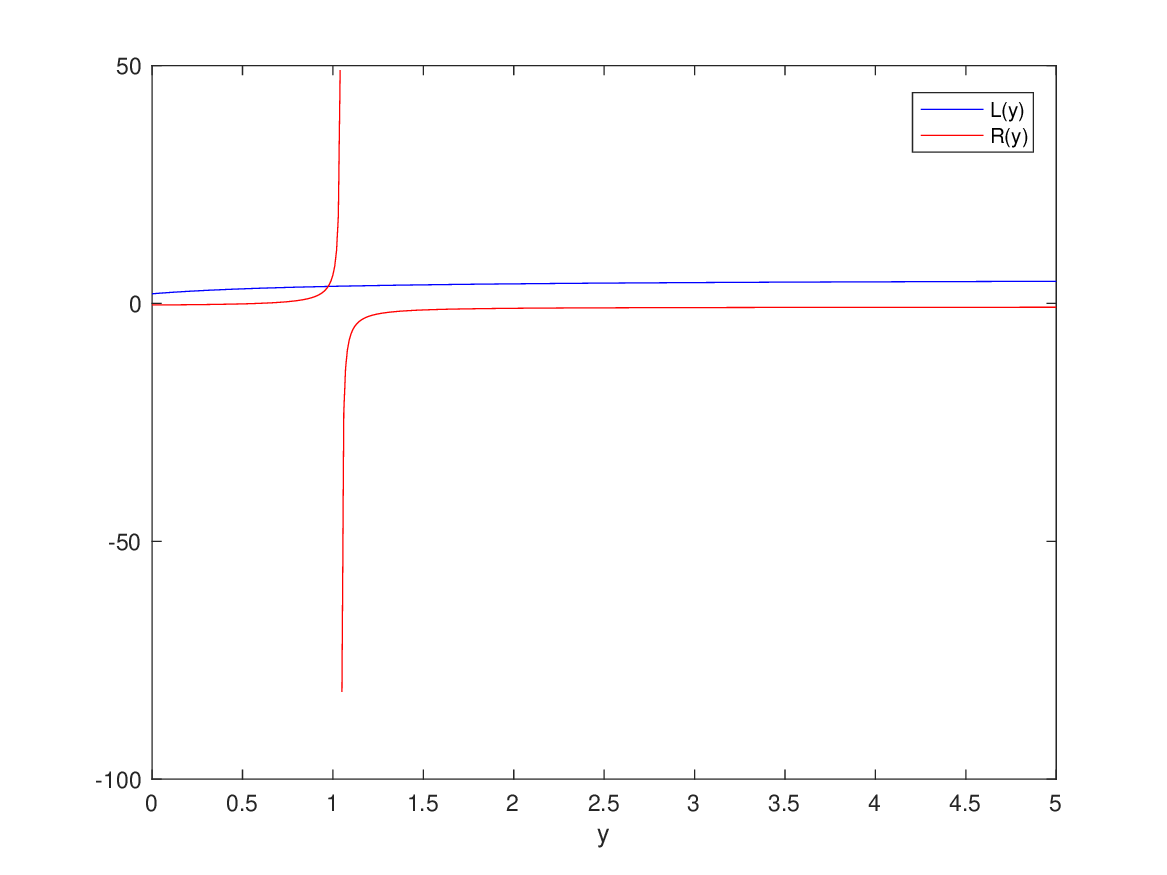}
  \caption{The comparison between $L(y)$ and $R(y)$}
\end{figure}

\subsection{Proof of Theorem \ref{Thm:2AgentsGame}}

From Theorem \ref{Thm:RelativeForwardNAgents}, the system of two equations (\ref{OptimalStrategyN1}) which needs to be solved for the $2$-agent game over trading period $[t-1, t)$ is 
\begin{align*}
    \pi_{(1),t}^{*}&=\frac{1}{\gamma_{(1)} (1-\frac{\theta_{(1)}}{2})(u_{(1),t}-d_{(1),t})}\ln \frac{(1-q_{(1),t})\left(p_t^{1,1}e^{\gamma_{(1)} \frac{\theta_{(1)}}{2}\pi_{(2),t}^{*}(u_{(2),t}-1)}+p_t^{1,0}e^{\gamma_{(1)} \frac{\theta_{(1)}}{2}\pi_{(2),t}^{*}(d_{(2),t}-1)}\right)}{q_{(1),t}\left(p_t^{0,1}e^{\gamma_{(1)} \frac{\theta_{(1)}}{2}\pi_{(2),t}^{*}(u_{(2),t}-1)}+p_t^{0,0}e^{\gamma_{(1)} \frac{\theta_{(1)}}{2}\pi_{(2),t}^{*}(d_{(2),t}-1)}\right)},
    \\\pi_{(2),t}^{*}&=\frac{1}{\gamma_{(2)} (1-\frac{\theta_{(2)}}{2})(u_{(2),t}-d_{(2),t})}\ln \frac{(1-q_{(2),t})\left(p_t^{1,1}e^{\gamma_{(2)} \frac{\theta_{(2)}}{2}\pi_{(1),t}^{*}(u_{(1),t}-1)}+p_t^{0,1}e^{\gamma_{(2)} \frac{\theta_{(2)}}{2}\pi_{(1),t}^{*}(d_{(1),t}-1)}\right)}{q_{(2),t}\left(p_t^{1,0}e^{\gamma_{(2)} \frac{\theta_{(2)}}{2}\pi_{(1),t}^{*}(u_{(1),t}-1)}+p_t^{0,0}e^{\gamma_{(2)} \frac{\theta_{(2)}}{2}\pi_{(1),t}^{*}(d_{(1),t}-1)}\right)}.
\end{align*}

By solving the system of the above two equations of unknowns $\pi_{(1),t}^{*}$ and $\pi_{(2),t}^{*}$, and writing $A_1=\gamma_{(2)} \frac{\theta_{(2)}}{2}(u_{(1),t}-1)$, $A_2=\gamma_{(2)} \frac{\theta_{(2)}}{2}(d_{(1),t}-1)$, $B_1=\gamma_{(1)} \frac{\theta_{(1)}}{2}(u_{(2),t}-1), B_2=\gamma_{(1)} \frac{\theta_{(1)}}{2}(d_{(2),t}-1)$, $C_1=\gamma_{(2)} (1-\frac{\theta_{(2)}}{2})(u_{(2),t}-d_{(2),t})$, and $C_2=\gamma_{(1)} (1-\frac{\theta_{(1)}}{2})(u_{(1),t}-d_{(1),t})$ to simplify notation, we obtain the following equation with unknown $x=\pi_{(1),t}^{*}$,
\begin{align*}
    x=\frac{1}{C_2}\ln \frac{(1-q_{(1),t})\left(p_t^{1,1}e^{\frac{B_1}{C_1}\ln\frac{(1-q_{(2),t})(p_t^{1,1}e^{A_1x}+p_t^{0,1}e^{A_2x})}{q_{(2),t}(p_t^{1,0}e^{A_1x}+p_t^{0,0}e^{A_2x})}}+p_t^{1,0}e^{\frac{B_2}{C_1}\ln\frac{(1-q_{(2),t})(p_t^{1,1}e^{A_1x}+p_t^{0,1}e^{A_2x})}{q_{(2),t}(p_t^{1,0}e^{A_1x}+p_t^{0,0}e^{A_2x})}}\right)}{q_{(1),t}\left(p_t^{0,1}e^{\frac{B_1}{C_1}\ln\frac{(1-q_{(2),t})(p_t^{1,1}e^{A_1x}+p_t^{0,1}e^{A_2x})}{q_{(2),t}(p_t^{1,0}e^{A_1x}+p_t^{0,0}e^{A_2x})}}+p_t^{0,0}e^{\frac{B_2}{C_1}\ln\frac{(1-q_{(2),t})(p_t^{1,1}e^{A_1x}+p_t^{0,1}e^{A_2x})}{q_{(2),t}(p_t^{1,0}e^{A_1x}+p_t^{0,0}e^{A_2x})}}\right)}.
\end{align*}
This is equivalent to 
\begin{align*}
    e^x=\left(\frac{(1-q_{(1),t})\left(p_t^{1,1}e^{\frac{B_1-B_2}{C_1}\ln\frac{(1-q_{(2),t})(p_t^{1,1}e^{(A_1-A_2)x}+p_t^{0,1})}{q_{(2),t}(p_t^{1,0}e^{(A_1-A_2)x}+p_t^{0,0})}}+p_t^{1,0}\right)}{q_{(1),t}\left(p_t^{0,1}e^{\frac{B_1-B_2}{C_1}\ln\frac{(1-q_{(2),t})(p_t^{1,1}e^{(A_1-A_2)x}+p_t^{0,1})}{q_{(2),t}(p_t^{1,0}e^{(A_1-A_2)x}+p_t^{0,0})}}+p_t^{0,0}\right)}\right)^{\frac{1}{C_2}}.
\end{align*}
After a further change of variable $y=e^{(A_1-A_2)x}$, we have
\begin{align*}
    y^{\frac{C_2}{A_1-A_2}}=\frac{(1-q_{(1),t})\left(p_t^{1,1}e^{\frac{B_1-B_2}{C_1}\ln\frac{(1-q_{(2),t})(p_t^{1,1}y+p_t^{0,1})}{q_{(2),t}(p_t^{1,0}y+p_t^{0,0})}}+p_t^{1,0}\right)}{q_{(1),t}\left(p_t^{0,1}e^{\frac{B_1-B_2}{C_1}\ln\frac{(1-q_{(2),t})(p_t^{1,1}y+p_t^{0,1})}{q_{(2),t}(p_t^{1,0}y+p_t^{0,0})}}+p_t^{0,0}\right)}.
\end{align*}
We note that the equation above is exactly the same as equation (\ref{EqnOfUnknowny}).
Furthermore, by Lemma \ref{ExistenceoFor2agents}, equation (\ref{EqnOfUnknowny}) has a positive solution $y_t^{*}$.
We can then express $\pi_{(1),t}^{*}$ in terms of $y_t^{*}$ with the relationship given in (\ref{OptimalStrategyj}).
By this representation, the equilibrium strategy of agent 2 can be accordingly simplified to (\ref{OptimalStrategyi}).
\qed

\subsection{Proof of Lemma \ref{Lemma:fixedpoint}}
% Since 
% \begin{align*}
%     F(t,y)=\phi_1+\E\left[\frac{({\Delta}_t^1-{\Delta}_t^0)}{\gamma(u_t-d_t)}\ln\left(\frac{\frac{p_t^{cn}}{1-p_t^{cn}}e^{\gamma \theta y}+\frac{p_t^{0}}{p_t^{1}}}{\frac{p_t^{cn}}{1-p_t^{cn}}e^{\gamma \theta y}+\frac{(1-p_t^{0})}{(1-p_t^{1})}}\right)\Bigg\vert\mathcal{F}_{t}^{RC}\right],
% \end{align*}
For any fixed $\omega \in \Omega$, we have after taking derivative against $y\in \mathbb{R}$ that
\begin{align*}
    F^{'}_{y}(t,y,\omega)&=\E\left[\frac{({\Delta}_t^1-{\Delta}_t^0)}{\gamma(u_t-d_t)}\left(\frac{\frac{p_t^{cn}}{1-p_t^{cn}}e^{\gamma \theta y}+\frac{1-p_t^{0}}{1-p_t^{1}}}{\frac{p_t^{cn}}{1-p_t^{cn}}e^{\gamma \theta y}+\frac{p_t^{0}}{p_t^{1}}}\right)
    \frac{-\frac{p_t^{cn}}{1-p_t^{cn}}\gamma \theta e^{\gamma \theta y}\left(\frac{p_t^{0}}{p_t^{1}}-\frac{1-p_t^{0}}{1-p_t^{1}}\right)}{\left(\frac{p_t^{cn}}{1-p_t^{cn}}e^{\gamma \theta y}+\frac{1-p_t^{0}}{1-p_t^{1}}\right)^2}\Bigg\vert\mathcal{F}_{t-1}^{CN}\right]
    \\&=\E\left[\frac{({\Delta}_t^1-{\Delta}_t^0)}{\gamma(u_t-d_t)}\frac{\frac{p_t^{cn}}{1-p_t^{cn}}\gamma \theta e^{\gamma \theta y}\left(\frac{1-p_t^{0}}{1-p_t^{1}}-\frac{p_t^{0}}{p_t^{1}}\right)}{\left(\frac{p_t^{cn}}{1-p_t^{cn}}e^{\gamma \theta y}+\frac{1-p_t^{0}}{1-p_t^{1}}\right)\left(\frac{p_t^{cn}}{1-p_t^{cn}}e^{\gamma \theta y}+\frac{p_t^{0}}{p_t^{1}}\right)}\Bigg\vert\mathcal{F}_{t-1}^{CN}\right].
\end{align*}
Under the assumption that the market performs better on $\{\xi_t^{cn}=1\}$ than on $\{\xi_t^{cn}=0\}$, i.e.,
$p_t^{0}<p_t^{1}$,
%$\P[R_{(i),t} = u_{(i),t} \vert \{\xi_t^{cn}=0\}\vee\mathcal{F}_{t-1}^{MF}] < \P_{t-1} [R_{(i),t} = u_{(i),t} \vert\{\xi_t^{cn}=1\}\vee\mathcal{F}_{t-1}^{MF}]$, 
it follows that ${\Delta}_t^1>{\Delta}_t^0$ and $\frac{1-p_t^{0}}{1-p_t^{1}}-\frac{p_t^{0}}{p_t^{1}}>0$. We conclude that $F^{'}_{y}(t,y,\omega)>0$ because the term inside the expectation is strictly positive. 
Furthermore, $F^{'}_{y}(t,y,\omega)$ can be written as
\begin{align*}
    F^{'}_{y}(t,y,\omega)&=\E\left[\frac{\theta({\Delta}_t^1-{\Delta}_t^0)}{(u_t-d_t)}\frac{\left(\frac{1-p_t^{0}}{1-p_t^{1}}-\frac{p_t^{0}}{p_t^{1}}\right)}{\left(1+\frac{1-p_t^{cn}}{p_t^{cn}}e^{-\gamma \theta y}\frac{1-p_t^{0}}{1-p_t^{1}}\right)\left(\frac{p_t^{cn}}{1-p_t^{cn}}e^{\gamma \theta y}+\frac{p_t^{0}}{p_t^{1}}\right)}\Bigg\vert\mathcal{F}_{t-1}^{CN}\right]\\&=\E\left[\frac{\theta({\Delta}_t^1-{\Delta}_t^0)}{(u_t-d_t)}\frac{\left(\frac{1-p_t^{0}}{1-p_t^{1}}-\frac{p_t^{0}}{p_t^{1}}\right)}{\frac{p_t^{cn}}{1-p_t^{cn}}e^{\gamma \theta y}+\frac{p_t^{0}}{p_t^{1}}+\frac{1-p_t^{0}}{1-p_t^{1}}+\frac{1-p_t^{cn}}{p_t^{cn}}e^{-\gamma \theta y}\frac{1-p_t^{0}}{1-p_t^{1}}\frac{p_t^{0}}{p_t^{1}}}\Bigg\vert\mathcal{F}_{t-1}^{CN}\right].
\end{align*}
Since $\frac{1-p_t^{0}}{1-p_t^{1}}-\frac{p_t^{0}}{p_t^{1}}<\frac{1-p_t^{0}}{1-p_t^{1}}+\frac{p_t^{0}}{p_t^{1}}$, and all other terms in the denominator are positive, 
\begin{align*}
    \frac{\frac{1-p_t^{0}}{1-p_t^{1}}-\frac{p_t^{0}}{p_t^{1}}}{\frac{p_t^{cn}}{1-p_t^{cn}}e^{\gamma \theta y}+\frac{p_t^{0}}{p_t^{1}}+\frac{1-p_t^{0}}{1-p_t^{1}}+\frac{1-p_t^{cn}}{p_t^{cn}}e^{-\gamma \theta y}\frac{1-p_t^{0}}{1-p_t^{1}}\frac{p_t^{0}}{p_t^{1}}}<1 .
\end{align*}
In addition, $\theta \in[0,1],$ and  $\frac{{\Delta}_t^1-{\Delta}_t^0}{u_t-d_t}\in(0,1)$ because $d_t-1<{\Delta}_t^0<{\Delta}_t^1<u_t-1$, then $ F^{'}_{y}(t,y,\omega) < 1$ follows.

Therefore, with $t\in \mathbb{N}$ fixed, we are able to prove that $0<F^{'}_{y}(t,y,\omega)<1$ for any $y\in\mathbb{R}$, and since $F(t,+\infty)=\phi_1$, which is finite, the fixed point to equation (\ref{fixedpoint}) exists and is unique.

To ensure that the fixed point is positive we need to check the value of $F(t,0)$ given by
\begin{align*}
    F(t,0)&=\phi_1+\E\left[\frac{{\Delta}_t^1-{\Delta}_t^0}{\gamma(u_t-d_t)}\ln\left(\frac{\frac{p_t^{cn}}{1-p_t^{cn}}e^{\gamma \theta y}+\frac{p_t^{0}}{p_t^{1}}}{\frac{p_t^{cn}}{1-p_t^{cn}}+\frac{1-p_t^{0}}{1-p_t^{1}}}\right)\Bigg\vert\mathcal{F}_{t-1}^{CN}\right]\\&=\E\left[\frac{({\Delta}_t^1-{\Delta}_t^0)}{\gamma(u_t-d_t)}\ln\frac{1-q_t}{q_t}\frac{p_t^{cn}p_t^{1}+(1-p_t^{cn})p_t^{0}}{p_t^{cn}(1-p_t^{1})+(1-p_t^{cn})(1-p_t^{0})}\Bigg\vert\mathcal{F}_{t-1}^{CN}\right]\\&=\E\left[\frac{({\Delta}_t^1-{\Delta}_t^0)}{\gamma(u_t-d_t)}\ln\frac{1-q_t}{q_t}\frac{\P [B_t = 1, \xi_t^{cn}=1 \vert \mathcal{F}_{t-1}^{MF}]+\P [B_t = 1, \xi_t^{cn}=0 \vert \mathcal{F}_{t-1}^{MF}]}{\P [B_t = 0, \xi_t^{cn}=1 \vert \mathcal{F}_{t-1}^{MF}]+\P [B_t = 0, \xi_t^{cn}=0 \vert \mathcal{F}_{t-1}^{MF}]}\Bigg\vert\mathcal{F}_{t-1}^{CN}\right]\\&=\E\left[\frac{({\Delta}_t^1-{\Delta}_t^0)}{\gamma(u_t-d_t)}\ln\frac{1-q_t}{q_t}\frac{\P [B_t = 1\vert \mathcal{F}_{t-1}^{MF}]}{\P [B_t = 0\vert \mathcal{F}_{t-1}^{MF}]}\Bigg\vert\mathcal{F}_{t-1}^{CN}\right]>0.
\end{align*}
Indeed, because the expected excess return is positive and the bond offers zero interest rate,
we have $\E[R_t-1\vert \mathcal{F}_{t-1}^{MF}]>0$ and $$\P [B_t = 1\vert \mathcal{F}_{t-1}^{MF}](u_t-1)+\P [B_t = 0\vert \mathcal{F}_{t-1}^{MF}](d_t-1)>0.$$ 
Recalling that $\frac{1-q_t}{q_t}=\frac{u_t-1}{1-d_t}$, it then follows that
$$\ln\frac{1-q_t}{q_t}\frac{\P [B_t = 1\vert \mathcal{F}_{t-1}^{MF}]}{\P [B_t = 0\vert \mathcal{F}_{t-1}^{MF}]}>0.$$
The positiveness of the fixed point, $y^{*}_t(\omega)>0$ for almost all $\omega \in \Omega$ is then a direct consequence of $F(t,0)>0$.

By the Banach fixed-point theorem, the fixed point can be found by iteration: Start with an arbitrary element $y_t^0$, let $y_t^j(\omega)=F(t,y_t^{j-1}(\omega),\omega)$, $j=1,2,\dots$, and obtain $\lim\limits_{j\rightarrow \infty}y_t^j(\omega)=y_t^*(\omega)$.
Hence, $y_t^*$ is $\mathcal{F}_{t-1}^{CN}$-measurable as the limit of a sequence of $\mathcal{F}_{t-1}^{CN}$-measurable random variables.
\qed
% Since $F^{'}_{y}(t,y,\omega)<1$, the inverse of function $H(t,y,\omega):=F(t,y,\omega)-y$ with respect to $y\in \mathbb{R}$ exists, one can further express the fixed point by $y^{*}_t(\omega)=H^{(-1)}(t,0,\omega)$.

\subsection{Proof of Lemma \ref{lemmaSupMFG}}

For the representative agent with type vector $\psi_t$ the optimization problem over period $[t-1,t)$, $t\in \mathbb{N}$, is given by
\begin{equation}\label{OptimiRepresen}
\begin{aligned}
\sup_{\pi_t}\E[U_{t}(\widetilde X_{t})\vert \mathcal{F}_{t-1}^{MF}] & =\sup_{\pi_t}\E[-e^{-\gamma (X_t-\theta \overline{X}_t)}\big\vert \mathcal{F}_{t-1}^{MF}]\\
& = \sup_{\pi_t}\E\left[\E[-e^{-\gamma (X_t-\theta \overline{X}_t)}\big\vert \mathcal{F}_{t-1}^{MF}\vee \sigma(\xi_t^{cn})]\big\vert \mathcal{F}_{t-1}^{MF}\right]
\\
& =\sup_{\pi_t}\E\left[\E[-e^{-\gamma X_t}\big\vert \mathcal{F}_{t-1}^{MF}\vee \sigma(\xi_t^{cn})]\E[e^{\gamma\theta \overline{X}_t}\big\vert \mathcal{F}_{t-1}^{MF}\vee \sigma(\xi_t^{cn})]\big\vert \mathcal{F}_{t-1}^{MF}\right]
\\ & = \sup_{\pi_t}\E\left[\E[-e^{-\gamma X_t}\big\vert \mathcal{F}_{t-1}^{MF}\vee \sigma(\xi_t^{cn})]\E[e^{\gamma\theta \overline{X}_t}\big\vert \mathcal{F}_{t-1}^{MF}\vee \sigma(\xi_t^{cn})]\mathbbm{1}_{\{\xi_t^{cn}=1\}}\big\vert \mathcal{F}_{t-1}^{MF}\right]\\&+\E\left[\E[-e^{-\gamma X_t}\big\vert \mathcal{F}_{t-1}^{MF}\vee \sigma(\xi_t^{cn})]\E[e^{\gamma\theta \overline{X}_t}\big\vert \mathcal{F}_{t-1}^{MF}\vee \sigma(\xi_t^{cn})]\mathbbm{1}_{\{\xi_t^{cn}=0\}}\big\vert \mathcal{F}_{t-1}^{MF}\right],
\end{aligned}
\end{equation}
where 
%$\sigma(\xi_t^{cn})$ denotes the minimum $\sigma$-algebra generated by the random variable $\xi_t^{cn}$ and 
the 3rd line follows from the independence between the representative agent and the entire network conditionally on the non-traded stochastic factor.

Furthermore,
\begin{align*}
\E[-e^{-\gamma X_t}\big\vert \mathcal{F}_{t-1}^{MF}\vee \sigma(\xi_t^{cn})]\mathbbm{1}_{\{\xi_t^{cn}=1\}}=\left(-p_t^{1}e^{-\gamma[X_{t-1}+\pi_t(u_t-1)]}-(1-p_t^{1})e^{-\gamma[X_{t-1}+\pi_t(d_t-1)]}\right)\mathbbm{1}_{\{\xi_t^{cn}=1\}},
\end{align*}
and
\begin{align*}
\E[-e^{-\gamma X_t}\big\vert \mathcal{F}_{t-1}^{MF}\vee \sigma(\xi_t^{cn})]\mathbbm{1}_{\{\xi_t^{cn}=0\}}=\left(-p_t^{0}e^{-\gamma[X_{t-1}+\pi_t(u_t-1)]}-(1-p_t^{0})e^{-\gamma[X_{t-1}+\pi_t(d_t-1)]}\right)\mathbbm{1}_{\{\xi_t^{cn}=0\}}.
\end{align*}
%\begin{align*}
    %\E\left[-e^{-\gamma X_t}\big\vert \mathcal{F}_{t-1}^{MF}\vee\{\xi_t^{cn}=1\}\right]=-p_t^{1}e^{-\gamma[X_{t-1}+\pi_t(u_t-1)]}-(1-p_t^{1})e^{-\gamma[X_{t-1}+\pi_t(d_t-1)]},
%\end{align*}
%and we write $$\E\left[e^{\gamma \theta \overline{X}_t}\big\vert \mathcal{F}_{t-1}^{MF}\vee\{\xi_t^{cn}=1\}\right]=e^{\gamma \theta\overline{X}_{t\vert\{\xi_t^{cn}=1\}}}$$ {\color{OliveGreen} where $\overline{X}_{t|\{\xi_t^{cn}=1\}}$ denotes the average wealth of the continuum of agents given $\{\xi_t^{cn}=1\}$ which is $\mathcal{F}_{t-1}^{RC}$-measurable and thus $\mathcal{F}_{t-1}^{MF}$-measurable. {\ms I do not understand this notation. What is a random variable conditioned on a sigma-algebra?} 
By Proposition \ref{Prop:MFGAverageWealth}, we know that $\overline{X}_t\mathbbm{1}_{\{\xi_t^{cn}=1\}}$ converges to $\overline{X}_{t-1}\mathbbm{1}_{\{\xi_t^{cn}=1\}}+\E\left[\pi_t\Delta_t^{1}\vert \mathcal{F}_{t-1}^{CN}\right]\mathbbm{1}_{\{\xi_t^{cn}=1\}}$ and $\overline{X}_t\mathbbm{1}_{\{\xi_t^{cn}=0\}}$ converges to $\overline{X}_{t-1}\mathbbm{1}_{\{\xi_t^{cn}=0\}}+\E\left[\pi_t\Delta_t^{0}\vert \mathcal{F}_{t-1}^{CN}\right]\mathbbm{1}_{\{\xi_t^{cn}=0\}}$ respectively in probability conditional on $\mathcal{F}_{t-1}^{CN}$, and thus
\begin{align*}
\E[e^{\gamma \theta \overline{X}_t}\big\vert \mathcal{F}_{t-1}^{MF}\vee \sigma(\xi_t^{cn})]\mathbbm{1}_{\{\xi_t^{cn}=1\}}=e^{\gamma \theta \left(\overline{X}_{t-1}+\E\left[\pi_t\Delta_t^{1}\vert \mathcal{F}_{t-1}^{CN}\right]\right)}\mathbbm{1}_{\{\xi_t^{cn}=1\}},
\end{align*}
\begin{align*}
\E[e^{\gamma \theta \overline{X}_t}\big\vert \mathcal{F}_{t-1}^{MF}\vee \sigma(\xi_t^{cn})]\mathbbm{1}_{\{\xi_t^{cn}=0\}}=e^{\gamma \theta \left(\overline{X}_{t-1}+\E\left[\pi_t\Delta_t^{0}\vert \mathcal{F}_{t-1}^{CN}\right]\right)}\mathbbm{1}_{\{\xi_t^{cn}=0\}}.
\end{align*}
Therefore, (\ref{OptimiRepresen}) becomes
\begin{align*}
    e^{-\gamma \widetilde X_{t-1}}\sup_{\pi_t}&-p_t^{cn}e^{\gamma \theta\E\left[\pi_t\Delta_t^{1}\vert \mathcal{F}_{t-1}^{CN}\right]}\left(p_t^{1}e^{-\gamma\pi_t(u_t-1)}+(1-p_t^{1})e^{-\gamma\pi_t(d_t-1)}\right)\\&-(1-p_t^{cn})e^{\gamma \theta\E\left[\pi_t\Delta_t^{0}\vert \mathcal{F}_{t-1}^{CN}\right]}\left(p_t^{0}e^{-\gamma\pi_t(u_t-1)}+(1-p_t^{0})e^{-\gamma\pi_t(d_t-1)}\right).
\end{align*}
%which is equivalent to
%\begin{align*}
    %e^{-\gamma(X_0-\theta\overline{X_0})}\sup_{\pi}&-p_{cn}e^{\gamma \theta\overline{\pi{\Delta}_{{|\xi_t^{cn}=1}}}}\P(\xi=\xi^{u}|\xi_t^{cn}=1)e^{-\gamma \pi(\xi^{u}-1)}\\&-p_{cn}e^{\gamma \theta\overline{\pi{\Delta}_{{|\xi_t^{cn}=1}}}}\P(\xi=\xi^{d}|\xi_t^{cn}=1)e^{-\gamma \pi(\xi^{d}-1)}\\&-(1-p_{cn})e^{\gamma \theta\overline{\pi{\Delta}_{{|\xi_t^{cn}=0}}}}\P(\xi=\xi^{u}|\xi_t^{cn}=0)e^{-\gamma \pi(\xi^{u}-1)}\\&-(1-p_{cn})e^{\gamma \theta\overline{\pi{\Delta}_{{|\xi_t^{cn}=0}}}}\P(\xi=\xi^{d}|\xi_t^{cn}=0)e^{-\gamma \pi(\xi^{d}-1)}.
%\end{align*}

Let
\begin{align*}\label{fpit}
    f(\pi_t)=&-p_t^{cn}e^{\gamma \theta\E\left[\pi_t\Delta_t^{1}\vert \mathcal{F}_{t-1}^{CN}\right]}\left(p_t^{1}e^{-\gamma\pi_t(u_t-1)}+(1-p_t^{1})e^{-\gamma\pi_t(d_t-1)}\right)\\&-(1-p_t^{cn})e^{\gamma \theta\E\left[\pi_t\Delta_t^{0}\vert \mathcal{F}_{t-1}^{CN}\right]}\left(p_t^{0}e^{-\gamma\pi_t(u_t-1)}+(1-p_t^{0})e^{-\gamma\pi_t(d_t-1)}\right).
\end{align*} 
%the derivative of $f(\pi_t)$ is
%\begin{align*}
%f^{'}(\pi_t)&=\gamma(u_t-1)p_t^{cn}e^{\gamma \theta\E\left[\pi_t\Delta_t^{1}\vert \mathcal{F}_{t-1}^{RC}\right]}p_t^{1}e^{-\gamma\pi_t(u_t-1)}+\gamma(d_t-1)p_t^{cn}e^{\gamma \theta\E\left[\pi_t\Delta_t^{1}\vert \mathcal{F}_{t-1}^{RC}\right]}(1-p_t^{1})e^{-\gamma\pi_t(d_t-1)}\\&+\gamma(u_t-1)(1-p_t^{cn})e^{\gamma \theta\E\left[\pi_t\Delta_t^{0}\vert \mathcal{F}_{t-1}^{RC}\right]}p_t^{0}e^{-\gamma\pi_t(u_t-1)}+\gamma(d_t-1)(1-p_t^{cn})e^{\gamma \theta\E\left[\pi_t\Delta_t^{0}\vert \mathcal{F}_{t-1}^{RC}\right]}(1-p_t^{0})e^{-\gamma\pi_t(d_t-1)},
%\end{align*}
Differentiate $f(\pi_t)$ over $\pi_t$ and let $f^{'}(\pi_t)=0$, we have
\begin{align*}
    e^{-\gamma \pi_t(u_t-d_t)}=\frac{(1-d_t)}{(u_t-1)}\frac{p_t^{cn}e^{\gamma \theta\E\left[\pi_t\Delta_t^{1}\vert \mathcal{F}_{t-1}^{CN}\right]}(1-p_t^{1})+(1-p_t^{cn})e^{\gamma \theta\E\left[\pi_t\Delta_t^{0}\vert \mathcal{F}_{t-1}^{CN}\right]}(1-p_t^{0})}{p_t^{cn}e^{\gamma \theta\E\left[\pi_t\Delta_t^{1}\vert \mathcal{F}_{t-1}^{CN}\right]}p_t^{1}+(1-p_t^{cn})e^{\gamma \theta\E\left[\pi_t\Delta_t^{0}\vert \mathcal{F}_{t-1}^{CN}\right]}p_t^{0}},
\end{align*}
and the maximizer of $f(\pi_t)$ given by $$\pi_t^{*}=\frac{1}{\gamma(u_t-d_t)}\ln\left(\frac{1-q_t}{q_t}\frac{p_t^{cn}e^{\gamma \theta\E\left[\pi_t\Delta_t^{1}\vert \mathcal{F}_{t-1}^{CN}\right]}p_t^{1}+(1-p_t^{cn})e^{\gamma \theta\E\left[\pi_t\Delta_t^{0}\vert \mathcal{F}_{t-1}^{CN}\right]}p_t^{0}}{p_t^{cn}e^{\gamma \theta\E\left[\pi_t\Delta_t^{1}\vert \mathcal{F}_{t-1}^{CN}\right]}(1-p_t^{1})+(1-p_t^{cn})e^{\gamma \theta\E\left[\pi_t\Delta_t^{0}\vert \mathcal{F}_{t-1}^{CN}\right]}(1-p_t^{0})}\right).$$ Since $f^{''}(\pi_t)<0$, $\pi_t^{*}$ is indeed the strategy that maximises $f(\pi_t)$, 
and furthermore it can be simplified to
\begin{align*}
    \pi_t^{*}=\frac{1}{\gamma(u_t-d_t)}\left(\ln\frac{1-q_t}{q_t}+\ln\frac{p_t^{cn}e^{\gamma \theta\E\left[\pi_t({\Delta}_t^1-{\Delta}_t^0)\vert \mathcal{F}_{t-1}^{CN}\right]}p_t^{1}+(1-p_t^{cn})p_t^{0}}{p_t^{cn}e^{\gamma \theta\E\left[\pi_t({\Delta}_t^1-{\Delta}_t^0)\vert \mathcal{F}_{t-1}^{CN}\right]}(1-p_t^{1})+(1-p_t^{cn})(1-p_t^{0})}\right).
\end{align*}

Since for the candidate control $\pi$ to be a MFE, we need $\pi=\pi^{*}$,
then multiply both sides by $({\Delta}_t^1-{\Delta}_t^0)$ and take conditional expectation to find that $\E\left[\pi_t^{*}({\Delta}_t^1-{\Delta}_t^0)\vert \mathcal{F}_{t-1}^{CN}\right]$ must satisfy
\begin{align*}
    \E\left[\pi_t^{*}({\Delta}_t^1-{\Delta}_t^0)\vert \mathcal{F}_{t-1}^{CN}\right]=\phi_0+\E\Bigg[&\frac{({\Delta}_t^1-{\Delta}_t^0)}{\gamma(u_t-d_t)}\\&\times \ln\frac{p_t^{cn}e^{\gamma \theta\E\left[\pi_t^{*}({\Delta}_t^1-{\Delta}_t^0)\vert \mathcal{F}_{t-1}^{CN}\right]}p_t^{1}+(1-p_t^{cn})p_t^{0}}{p_t^{cn}e^{\gamma \theta\E\left[\pi_t^{*}({\Delta}_t^1-{\Delta}_t^0)\vert \mathcal{F}_{t-1}^{CN}\right]}(1-p_t^{1})+(1-p_t^{cn})(1-p_t^{0})}\Bigg\vert \mathcal{F}_{t-1}^{CN}\Bigg],
\end{align*}
where $\phi_0=\E\left[\frac{({\Delta}_t^1-{\Delta}_t^0)}{\gamma(u_t-d_t)}\ln\frac{(1-q_t)}{q_t}\big\vert\mathcal{F}_{t-1}^{CN}\right]$.

Since
\begin{align*}
    &\ln\frac{p_t^{cn}e^{\gamma \theta\E\left[\pi_t^{*}({\Delta}_t^1-{\Delta}_t^0)\vert \mathcal{F}_{t-1}^{CN}\right]}p_t^{1}+(1-p_t^{cn})p_t^{0}}{p_t^{cn}e^{\gamma \theta\E\left[\pi_t^{*}({\Delta}_t^1-{\Delta}_t^0)\vert \mathcal{F}_{t-1}^{CN}\right]}(1-p_t^{1})+(1-p_t^{cn})(1-p_t^{0})}\\=&\ln\left(\frac{p_t^{1}}{(1-p_t^{1})}+\frac{\frac{(1-p_t^{cn})p_t^{0}p_t^{cn}(1-p_t^{1})-p_t^{cn}p_t^{1}(1-p_t^{cn})(1-p_t^{0})}
    {(p_t^{cn})^2(1-p_t^{1})^2}}{e^{\gamma \theta\E\left[\pi_t^{*}({\Delta}_t^1-{\Delta}_t^0)\vert \mathcal{F}_{t-1}^{CN}\right]}+\frac{(1-p_t^{cn})(1-p_t^{0})}{p_t^{cn}(1-p_t^{1})}}\right)\\=&\ln\left(\frac{p_t^{1}}{1-p_t^{1}}\left(1+\frac{\frac{1-p_t^{cn}}{p_t^{cn}}\frac{p_t^{0}(1-p_t^{1})(p_t^{1})^{-1}-(1-p_t^{0})}{(1-p_t^{1})}}{e^{\gamma \theta\E\left[\pi_t^{*}({\Delta}_t^1-{\Delta}_t^0)\vert \mathcal{F}_{t-1}^{CN}\right]}+\frac{(1-p_t^{cn})(1-p_t^{0})}{p_t^{cn}(1-p_t^{1})}}\right)\right)\\=&\ln\left(\frac{p_t^{1}}{1-p_t^{1}}\left(1+\frac{(\frac{p_t^{0}}{p_t^{1}}-\frac{1-p_t^{0}}{1-p_t^{1}})}{\frac{p_t^{cn}}{1-p_t^{cn}}e^{\gamma \theta\E\left[\pi_t^{*}({\Delta}_t^1-{\Delta}_t^0)\vert \mathcal{F}_{t-1}^{CN}\right]}+\frac{1-p_t^{0}}{1-p_t^{1}}}\right)\right)\\=&\ln\frac{p_t^{1}}{1-p_t^{1}}+\ln\left(1+\frac{(\frac{p_t^{0}}{p_t^{1}}-\frac{1-p_t^{0}}{1-p_t^{1}})}{\frac{p_t^{cn}}{1-p_t^{cn}}e^{\gamma   \theta\E\left[\pi_t^{*}({\Delta}_t^1-{\Delta}_t^0)\vert \mathcal{F}_{t-1}^{CN}\right]}+\frac{1-p_t^{0}}{1-p_t^{1}}}\right),
\end{align*}
Let $y_t=\E\left[\pi_t^{*}({\Delta}_t^1-{\Delta}_t^0)\vert \mathcal{F}_{t-1}^{CN}\right]$, our problem then is to find the fixed point of the equation in terms of $y_t$ of the same form as equation (\ref{fixedpoint}).

By Lemma \ref{Lemma:fixedpoint}, there exists a unique positive solution $y_t^{*}(\omega)\in \mathbb{R}^{+}$ to the equation given any fixed $\omega \in \Omega$, the Lemma naturally follows with the maximum occurring at $\pi_t^{*}$ given by (\ref{OptimalStrategyMFG}).
\qed

\subsection{Proof of Theorem \ref{Thm:RelativeForwardMFG}}
Condition $(i)$ and $(ii)$ in Definition \ref{def:RelativeFrowardPreferences-NQ} follows directly.

Next, we show Condition $(iii)$ in Definition \ref{def:RelativeFrowardPreferences-NQ}, i.e. for $t\in \mathbb{N}$ and $\widetilde{x}\in \mathbb{R}$, given any
%optimal wealth processes ${X}^{(*)}\in {\mathcal{X}}(x)$ and 
admissible strategy $\pi_{i}, i=1,\dots,t+1$ and the corresponding wealth process $X_t^{\pi}$,
\begin{align*}
    &U_t (\widetilde{X}_t^{\pi})=-\frac{e^{-\gamma\widetilde{X}_t^{\pi}}}{\prod \limits_{n=1}^tG_n(\pi_n)}\geq \E\left[-\frac{e^{-\gamma \widetilde{X}_{t+1} }}{\prod \limits_{n=1}^{t+1}G_n(\pi_n)}\Bigg\vert \mathcal{F}_{t}^{MF}\right].
\end{align*}
i.e., 
\begin{align*}
-e^{-\gamma\widetilde{X}_t^{\pi}}G_{t+1}(\pi_{t+1}) \geq \E\left[-e^{-\gamma \widetilde{X}_{t+1} }\Bigg\vert \mathcal{F}_{t}^{MF}\right],
\end{align*}
and we easily conclude using Lemma \ref{lemmaSupMFG}.

To show Condition $(iv)$, we let $X_t^{*},t=0,1,\dots$ be the corresponding optimal wealth process evolving according to $X_t^{\pi^{*}} = x + \sum\nolimits_{i=1}^t\pi^{*}_i(R_i-1)
$ by following optimal strategy given in (\ref{OptimalStrategyMFG}). We need to establish
\begin{align*}
    &U_t (\widetilde{X}_t^{*})=-\frac{e^{-\gamma\widetilde{X}_t^{*}}}{\prod \limits_{n=1}^tG_n(\pi_n^{*})} =
    \E\left[-\frac{e^{-\gamma \widetilde{X}_{t+1} }}{\prod \limits_{n=1}^{t+1}G_n(\pi_n^{*})}\Bigg\vert \mathcal{F}_{t}^{MF}\right].
\end{align*}
i.e., 
\begin{align*}
-e^{-\gamma\widetilde{X}_t^{\pi}}G_{t+1}(\pi_{t+1}^{*}) = \E\left[-e^{-\gamma \widetilde{X}_{t+1}^{*}}\Bigg\vert \mathcal{F}_{t}^{MF}\right],
\end{align*}
which directly follows with Lemma \ref{lemmaSupMFG}.
\qed

\subsection{Proof of Proposition \ref{Prop:Convergence}}
Since in the mean field setup of homogeneous population, we have
\begin{align*}
    \Delta_t^1-\Delta_t^0=p_t^{1}u_t+(1-p_t^{1})d_t-p_t^{0}u_t-(1-p_t^{0})d_t=(p_t^{1}-p_t^{0})(u_t-d_t),
\end{align*}
then the corresponding fixed point equation $y_t^{mf}=F(t,y_t^{mf})$, where $F(t,y_t^{mf})$ is given by $(\ref{fixedpoint})$, can be simplified to
\begin{equation}\label{SimplifiedMFGEqn}
    y_t^{mf}=\frac{p_t^{1}-p_t^{0}}{\gamma}\ln\left(\frac{(1-q_t)\left(p_t^{cn}e^{\gamma \theta y_t^{mf}}p_t^{1}+(1-p_t^{cn})p_t^{0}\right)}{q_t\left(p_t^{cn}e^{\gamma \theta y_t^{mf}}p_t^{1}+(1-p_t^{cn})p_t^{0}\right)}\right),
\end{equation}
and the optimal strategy $(\ref{OptimalStrategyMFG})$ can also be written by $\pi_t^*=\frac{y_t^{mf,*}}{(u_t-d_t)(p_t^{1}-p_t^{0})}$ in terms of the solution to equation $(\ref{SimplifiedMFGEqn})$.

Next we argue that $\left(\frac{p_t^{1}y_t^{\frac{\theta}{N-\theta}}+1-p_t^{1}}{p_t^{0}y_t^{\frac{\theta}{N-\theta}}+1-p_t^{0}}\right)^{N-1}\rightarrow y_t^{\theta(p_t^{1}-p_t^{0})}$ as $N\rightarrow \infty$ for any $y_t>0$. Indeed, taking logarithm and then dividing both sides by $(N-1)$ yields the limits
\begin{align*}
\lim \limits_{N\rightarrow +\infty} \log \left(\frac{p_t^{1}y_t^{\frac{\theta}{N-\theta}}+1-p_t^{1}}{p_t^{0}y_t^{\frac{\theta}{N-\theta}}+1-p_t^{0}}\right)= 0,
\end{align*}
and
\begin{align*}
\lim \limits_{N\rightarrow +\infty}\frac{\theta(p_t^{1}-p_t^{0})}{N-1}\log y_t=0.
\end{align*}

By knowing this relationship and making the substitution $y_t=e^{\frac{\gamma y_t^{mf}}{(p_t^{1}-p_t^{0})}}$, it naturally follows that equation (\ref{HomoN}) can be transformed to equation (\ref{SimplifiedMFGEqn}) as $N\rightarrow +\infty$, which implies the convergence in distribution between the solution to (\ref{HomoN}) and the solution to (\ref{SimplifiedMFGEqn}) as $N\rightarrow +\infty$.
The corresponding optimal strategies of the $N$-agent game then converges in distribution to the MFE. 
\qed

\subsection{Proof of Proposition \ref{Prop:Covariance}}
%The joint probability is first computed as follows
%\begin{align*}
   % &\P[R_{(i),t} = u_t, R_{(j),t} = u_t\vert \mathcal{F}_{t-1}^{MF}]\\&=\P[R_{(i),t} = u_t, R_{(j),t} = u_t,\xi_t^{cn}=0\vert \mathcal{F}_{t-1}^{MF}]+\P[R_{(i),t} = u_t, R_{(j),t} = u_t,\xi_t^{cn}=1\vert \mathcal{F}_{t-1}^{MF}]\\&=(1-p_t^{cn})\P\left[R_{(i),t} = u_t, R_{(j),t} = u_t\big\vert \{\xi_t^{cn}=0\}, \mathcal{F}_{t-1}^{MF}\right]\\&\quad+p_t^{cn}\P[R_{(i),t} = u_t, R_{(j),t} = u_t\big\vert \{\xi_t^{cn}=1\}, \mathcal{F}_{t-1}^{MF}]\\&=(1-p_t^{cn})p_t^{(i),u0}p_t^{(j),u0}+p_t^{cn}p_t^{(i),u1}p_t^{(j),u1},
%\end{align*}
%Analogously, we obtain
%\begin{align*}
    %&\P[R_{(i),t} = u_t, R_{(j),t} = d_t\vert \mathcal{F}_{t-1}^{MF}]=(1-p_t^{cn})p_t^{(i),u0}(1-p_t^{(j),u0})+p_t^{cn}p_t^{(i),u1}(1-p_t^{(j),u1}),
    %\\&\P[R_{(i),t} = d_t, R_{(j),t} = u_t\vert \mathcal{F}_{t-1}^{MF}]=(1-p_t^{cn})(1-p_t^{(i),u0})p_t^{(j),u0}+p_t^{cn}(1-p_t^{(i),u1})p_t^{(j),u1},
    %\\&\P[R_{(i),t} = d_t, R_{(j),t} = d_t\vert \mathcal{F}_{t-1}^{MF}]=(1-p_t^{cn})(1-p_t^{(i),u0})(1-p_t^{(j),u0})+p_t^{cn}(1-p_t^{(i),u1})(1-p_t^{(j),u1}).
%\end{align*}
%Therefore, plugging in the above probabilities and some manipulations yield
Consider stock returns $R_{(i),t}$ and $R_{(j),t}$ for fixed $i \neq j$.
We first compute 
\begin{align*}
    \E[R_{(i),t}R_{(j),t}\vert\mathcal{G}_{t-1}^{MF}]&=\E\left[\E[R_{(i),t}R_{(j),t}\vert\mathcal{G}_{t-1}^{MF}\vee \sigma(\xi_t^{cn})]\vert\mathcal{G}_{t-1}^{MF}\right]\\&=\E\left[\E[R_{(i),t}\vert\mathcal{G}_{t-1}^{MF}\vee \sigma(\xi_t^{cn})]\E[R_{(j),t}\vert\mathcal{G}_{t-1}^{MF}\vee \sigma(\xi_t^{cn})]\vert\mathcal{G}_{t-1}^{MF}\right]\\&=\E\left[\E[R_{(i),t}\vert\mathcal{G}_{t-1}^{MF}\vee \sigma(\xi_t^{cn})]\E[R_{(j),t}\vert\mathcal{G}_{t-1}^{MF}\vee \sigma(\xi_t^{cn})]\mathbbm{1}_{\{\xi_t^{cn}=1\}}\vert\mathcal{G}_{t-1}^{MF}\right]\\&\qquad+\E\left[\E[R_{(i),t}\vert\mathcal{G}_{t-1}^{MF}\vee \sigma(\xi_t^{cn})]\E[R_{(j),t}\vert\mathcal{G}_{t-1}^{MF}\vee \sigma(\xi_t^{cn})]\mathbbm{1}_{\{\xi_t^{cn}=0\}}\vert\mathcal{G}_{t-1}^{MF}\right],
\end{align*}
where the second equality holds because the $B_{(i),t}$ and $B_{(j),t}$ of different stocks are independent of each other conditional on $\mathcal{G}_{t-1}^{MF}\vee \sigma(\xi_t^{cn})$. 
Furthermore, for $k=i,j$, we have
\begin{align*}
\E[R_{(k),t}\vert\mathcal{G}_{t-1}^{MF}\vee \sigma(\xi_t^{cn})]\mathbbm{1}_{\{\xi_t^{cn}=1\}}=\left(p_{(k),t}^1u_t+(1-p_{(k),t}^1)d_t\right)\mathbbm{1}_{\{\xi_t^{cn}=1\}},\\
\E[R_{(k),t}\vert\mathcal{G}_{t-1}^{MF}\vee \sigma(\xi_t^{cn})]\mathbbm{1}_{\{\xi_t^{cn}=0\}}=\left(p_{(k),t}^0u_t+(1-p_{(k),t}^0)d_t\right)\mathbbm{1}_{\{\xi_t^{cn}=0\}}.
\end{align*}
It then follows after some manipulations that %detailed computation in phone album 2023/06/26
\begin{align*}
    \E[R_{(i),t}R_{(j),t}\vert\mathcal{G}_{t-1}^{MF}]&=p_t^{cn}\left(p_{(i),t}^1u_t+(1-p_{(i),t}^1)d_t\right)\left(p_{(j),t}^1u_t+(1-p_{(j),t}^1)d_t\right)\\&\qquad+(1-p_t^{cn})\left(p_{(i),t}^0u_t+(1-p_{(i),t}^0)d_t\right)\left(p_{(j),t}^0u_t+(1-p_{(j),t}^0)d_t\right)\\=&\left((1-p_t^{cn})p_{(i),t}^0p_{(j),t}^0+p_t^{cn}p_{(i),t}^1p_{(j),t}^1\right)(u_t-d_t)^2+2d_t(u_t-d_t)p_t+d_t^2.
\end{align*}
In addition, $\E[R_{(j),t}\vert\mathcal{G}_{t-1}^{MF}]=\E[R_{(i),t}\vert\mathcal{G}_{t-1}^{MF}]=u_tp_t+d_t(1-p_t)$, and thus
\begin{align*}
    \E[R_{(j),t}\vert\mathcal{G}_{t-1}^{MF}]\E[R_{(i),t}\vert\mathcal{G}_{t-1}^{MF}]=(u_t-d_t)^2(p_t)^2+2d_t(u_t-d_t)p_t+d_t^2.
\end{align*} 
Finally, we can derive the covariance between two stock returns $R_{(i),t}$ and $R_{(j),t}$, 
\begin{align*}
{\rm{cov}}_t^{\psi}&=\E[R_{(i),t}R_{(j),t}\vert\mathcal{G}_{t-1}^{MF}]-\E[R_{(i),t}\vert\mathcal{G}_{t-1}^{MF}]\E[R_{(j),t}\vert\mathcal{G}_{t-1}^{MF}]\\&=(u_t-d_t)^2\left((1-p_t^{cn})p_{(i),t}^0p_{(j),t}^0+p_t^{cn}p_{(i),t}^1p_{(j),t}^1-(p_t)^2\right).
\end{align*}
\qed

\bibliography{BibFile}

\end{document}